\newcommand*{\addFileDependency}[1]{
  \typeout{(#1)}
  \@addtofilelist{#1}
  \IfFileExists{#1}{}{\typeout{No file #1.}}
}
\theoremstyle{plain}
\newtheorem{thm}{Theorem}[section]
\newtheorem{prop}{Proposition}[section]
\newtheorem{remark}{Remark}[section]
\newtheorem{example}{Example}[section]
\newtheorem{lemma}{Lemma}[section]
\newtheorem{defi}{Definition}[section]
\renewcommand{\citet}{\cite}
\DeclareMathAlphabet\EuRoman{U}{eur}{m}{n}
\SetMathAlphabet\EuRoman{bold}{U}{eur}{b}{n}
\newcommand{\cC}{\mathcal{C}}
\newcommand{\cF}{\mathcal{F}}
\newcommand{\cG}{\mathcal{G}}
\newcommand{\cH}{\mathcal{H}}
\newcommand{\cN}{\mathcal{N}}
\newcommand{\cR}{\mathcal{R}}
\newcommand{\cS}{\mathcal{S}}
\newcommand{\cA}{\mathcal{A}}
\newcommand{\tcR}{\widetilde{\mathcal{R}}}
\newcommand{\RR}{\mathbb{R}}
\newcommand{\EE}{\mathbb{E}}
\newcommand{\PP}{\mathbb{P}}
\newcommand{\one}{\mathbf{1}}
\newcommand{\ep}{\varepsilon}
\newcommand{\bS}{\mathbb{S}}
\newcommand{\hc}{\hat{c}}
\newcommand{\hw}{\widehat{w}}
\newcommand{\hE}{\widehat{E}}
\newcommand{\tX}{\tilde{\mat X}}
\newcommand{\tE}{\widetilde{E}}
\newcommand{\tOmega}{\widetilde{\Omega}}
\newcommand{\RSS}{\textnormal{RSS}}
\newcommand{\FDR}{\textnormal{FDR}}
\newcommand{\hFDP}{\widehat{\FDP}}
\newcommand{\FDP}{\textnormal{FDP}}
\newcommand{\TPP}{\textnormal{TPP}}
\newcommand{\TPR}{\textnormal{TPR}}
\newcommand{\DP}{\textnormal{DP}}
\newcommand{\kn}{{\textnormal{Kn}}}
\newcommand{\ckn}{{\textnormal{cKn}}}
\newcommand{\BH}{{\textnormal{BH}}}
\newcommand{\App}{Appendix}
\newcommand{\simiid}{\overset{\textnormal{i.i.d.}}{\sim}}
\newcommand{\eqd}{\overset{d}{=}}
\newcommand{\sgn}{{\textnormal{sgn}}}
\newcommand{\pth}[1]{\left( #1 \right)}
\newcommand{\br}[1]{\left[ #1 \right]}
\newcommand{\vct}[1]{\boldsymbol{#1}}
\newcommand{\mat}[1]{\boldsymbol{#1}}
\newcommand{\norm}[1]{\left\| #1 \right\|}
\newcommand{\set}[1]{\left \{  #1 \right \}}
\newcommand{\ceil}[1]{\left \lceil #1 \right \rceil}
\newcommand{\Supp}{\textnormal{Supp}}
\newcommand{\mim}{\wedge}
\newcommand{\mam}{\vee}
\newcommand{\setcomp}{\mathsf{c}}
\newcommand*{\tran}{{\mkern-1.5mu\mathsf{T}}}
\newcommand\independent{\protect\mathpalette{\protect\independenT}{\perp}}
\def\independenT#1#2{\mathrel{\rlap{$#1#2$}\mkern2mu{#1#2}}}
\begin{document}

\begin{frontmatter}
\title{Improving knockoffs with conditional calibration}
\runtitle{Improving knockoffs with conditional calibration}

\begin{aug}
\author[A]{\fnms{Yixiang}~\snm{Luo}\ead[label=e1]{yixiangluo@berkeley.edu}},
\author[B]{\fnms{William}~\snm{Fithian}\ead[label=e2]{wfithian@berkeley.edu}}
\and
\author[C]{\fnms{Lihua}~\snm{Lei}\ead[label=e3]{lihualei@stanford.edu}}
\address[A]{Department of Mathematics, University of California, Berkeley, USA\printead[presep={,\ }]{e1}}
\address[B]{Department of Statistics, University of California, Berkeley, USA\printead[presep={,\ }]{e2}}
\address[C]{Graduate School of Business, Stanford University, USA\printead[presep={,\ }]{e3}}
\end{aug}

\begin{abstract}
The knockoff filter of \citet{barber15} is a flexible framework for multiple testing in supervised learning models, based on introducing synthetic predictor variables to control the false discovery rate (FDR). Using the conditional calibration framework of \citet{fithian2022conditional}, we introduce the {\em calibrated knockoff procedure}, a method that uniformly improves the power of any fixed-X or model-X knockoff procedure. We show theoretically and empirically that the improvement is especially notable in two contexts where knockoff methods can be nearly powerless: when the rejection set is small, and when the structure of the design matrix in fixed-X knockoffs prevents us from constructing good knockoff variables. In these contexts, calibrated knockoffs even outperform competing FDR-controlling methods like the (dependence-adjusted) Benjamini--Hochberg procedure in many scenarios.
\end{abstract}

\begin{keyword}[class=MSC]
\kwd[Primary ]{62H15}
\kwd[; secondary ]{62J15}
\end{keyword}

\begin{keyword}
\kwd{multiple hypotheses testing}
\kwd{linear model}
\kwd{knockoffs}
\end{keyword}

\end{frontmatter}

\section{Introduction}

The Gaussian linear regression model is one of the most versatile and best-studied models in statistics, with myriad applications in experimental analysis, causal inference, and machine learning. In modern applications, there are commonly many explanatory variables, and we suspect that most of them have little to do with the response, i.e. that the true coefficient vector is (approximately) {\em sparse}. In such problems, multiple hypothesis testing methods are a natural tool for discovering a small number of variables with nonzero coefficients among numerous noise variables, while controlling some error measure such as the false discovery rate (FDR), introduced by \citet{benjamini1995controlling}.

At present, however, the multiple testing literature offers practitioners little clarity regarding {\em how} they ought to perform the inference. There are at least two well-known methods for multiple testing with FDR control: the {\em knockoff filter} of \citet{barber15, candes2018panning} and the Benjamini--Hochberg (BH) procedure of \citet{benjamini1995controlling} (recently modified by \citet{fithian2022conditional} to ensure provable FDR control in linear regression among other problems with dependent $p$-values). Knockoffs and BH use radically different approaches and can return very different rejection sets on the same data, and it is not uncommon for one method to dramatically outperform the other, depending on the problem context. For example, Example~\ref{ex:mcc-block} illustrates a simple problem setting where BH has much higher power at FDR level $\alpha = 0.05$, but the knockoff filter recovers and outperforms BH at level $\alpha = 0.2$. 
In particular, the knockoff filter suffers from a so-called {\em threshold phenomenon}, explained in Section~\ref{subsec:knockoff}, that makes it nearly powerless when the number of discernibly non-null variables is smaller than $1/\alpha$, making it a risky choice for an analyst who aims for more stringent FDR control. In problems with enough rejections to avoid this issue, however, the knockoff filter often excels, since it can use efficient estimation methods like the lasso \citep{tibshirani1996regression} to guide its prioritization of variables.

In this work, we propose a new method, the {\em calibrated Knockoff procedure} (cKnockoff), which uniformly improves the knockoff filter's power while achieving finite-sample FDR control. Our method augments the rejection set of any knockoff procedure using a ``fallback test'' for each variable that is not selected by knockoffs. To set the power of the fallback tests without violating FDR control, we use the conditional calibration framework proposed in \citet{fithian2022conditional}. With exactly the same assumptions and data, cKnockoff is more powerful than knockoffs in every problem instance, but the power gain is especially large in problems with a small number of non-null variables, resolving the threshold phenomenon while retaining the knockoff filter's advantages.


\subsection{Multiple testing in the Gaussian linear model} \label{sec:multi_test}

We consider the linear model relating an observed response vector $\vct y = (y_1, \ldots, y_n)^\tran$ to explanatory variables $\vct X_j = (X_{1j}, \ldots, X_{nj})^{T}$, for $j=1,\ldots,m$ via
\begin{equation}\label{eq:linear-model}
\vct y = \sum_{j=1}^{m} \vct X_j \beta_j + \vct \ep = \mat X \vct \beta + \vct \ep, \quad \vct \ep \sim \cN(\vct 0, \sigma^2 \vct I_n),
\end{equation}
where the design matrix $\mat X \in \RR^{n\times m}$ has $\vct X_j$ as its $j$th column. Both the coefficient vector $\vct \beta = (\beta_1, \ldots, \beta_m)^\tran$ and the error variance $\sigma^2$ are assumed to be unknown. The parameters $\vct \beta$ and $\sigma^2$ are identifiable provided that $n > m$ and $X$ has full column rank.

A central inference question in this model is whether a given variable $\vct X_j$ helps to explain the response, after adjusting for the other variables. Formally, we will study the problem of testing the hypothesis $H_j: \beta_j = 0$ for each variable $\vct X_j$ simultaneously, while controlling the FDR.

Let $\cH_0 = \{j: H_{j}\mbox{ is true}\}$ be the index set of true null hypotheses and $m_0=|\cH_0|$ be the number of nulls; we say $\vct X_j$ is a {\em null variable} if $j\in\cH_0$. For a multiple testing procedure with rejection set $\cR\subset \{1, \ldots, m\}$, the {\em false discovery proportion} (FDP) and FDR are defined respectively as 
\[ \FDP(\cR) = \frac{|\cR \cap \cH_0|}{|\cR| \vee 1}, \quad \FDR = \EE[\FDP]. \]
We write $R = |\cR|$ and $V=|\cR\cap\cH_0|$ for the number of rejections and false rejections respectively. Our goal is to control FDR at a pre-specified threshold $\alpha$ while achieving a power as high as possible. Throughout this paper we define power in terms of the {\em true positive rate} (TPR), defined as the expectation of the {\em true positive proportion} (TPP), the fraction of the $m_1 = m-m_0$ non-null hypotheses rejected:
\[
\TPP(\cR) = \frac{|\cR \cap \cH_0^\setcomp|}{m_1}, \quad \TPR = \EE[\TPP].
\]

\subsection{Fixed-X Knockoff methods}\label{subsec:why-knockoffs}

A traditional approach to multiple testing would start with the usual two-sided $t$-test statistics $|\hat\beta_j| / \hat\sigma$, which are calculated from the ordinary least squares (OLS) estimator and the unbiased estimator of the error variance
\[
\hat{\vct \beta} = (\mat X^\tran \mat X)^{-1} \mat X^\tran \vct y, \quad \text{ and } \hat\sigma^2 = \RSS/(n-m),
\]
where $\RSS = \|\vct y - \mat X\hat{\vct \beta}\|_2^2$ is the residual sum of squares.  Then an appropriate multiplicity correction is applied to their corresponding $p$-values. The celebrated {\em Benjamini--Hochberg procedure} (BH), the best-known FDR-controlling method, orders the $p$-values from smallest to largest $p_{(1)} \leq \cdots \leq p_{(m)}$, and rejects
\[
\cR^{\BH} = \set{j: p_j \leq \frac{\alpha R^{\BH}}{m}}, \quad \text{ where } \;\; R^{\BH} = \max \left\{r:\;p_{(r)} \leq \frac{\alpha r}{m}\right\}.
\]
While BH does not provably control FDR in this context due to the dependence between $p$-values, a corrected version called the {\em dependence-adjusted BH procedure} (dBH) does, while achieving nearly identical power \citep{fithian2022conditional}.

The knockoff filter, described below in Section~\ref{subsec:knockoff}, is a flexible class of methods that bypass the usual $p$-values and instead introduce a ``knockoff'' variable $\tX_j$ to serve as a negative control for each real predictor variable $\vct X_j$, and then apply a learning algorithm to rank the $2m$ variables according to some importance measure in the model. The knockoffs are constructed to ensure that, under $H_j$, $\vct X_j$ and $\tX_j$ are indistinguishable in an appropriate sense. To allow for the construction of the additional $m$ knockoff variables, knockoffs require $n \geq 2m$.

Knockoff methods enjoy substantially higher power than BH and dBH in some scenarios while struggling in others, with the relative performance depending on the problem dimensions, the structure of the design matrix, and the true $\beta$ vector, among other considerations, as we see next. 




\begin{example}[MCC-Block]\label{ex:mcc-block}
Consider a pharmaceutical company jointly analyzing $K$ independent experiments. Each experiment compares a treatment group for each of $G$ different treatments to a common control group to estimate a treatment effect for the $g$-th treatment in the $k$-th experimental condition. Assume that all observations are independent with $\cN(0,\sigma^2)$ errors, and all groups have a common sample size $r$. When $K=1$, the problem reduces to the classical multiple comparisons to control (MCC) problem \citep{dunnett1955multiple}.
\end{example}

By appropriately defining $\mat X$ and $\vct y$, this problem can be equivalently expressed as a linear model of the form \eqref{eq:linear-model} with coefficient vector $\vct\beta$ representing the $G \cdot K$ treatment effects arranged into a single vector. In this formulation, the OLS estimator $\hat{\vct \beta}$ represents the difference in sample means between the treatment groups and their corresponding control groups, and its covariance is block-diagonal, with off-diagonal correlation $0.5$ within each block. The corresponding $t$-tests for entries of $\vct \beta$ likewise coincide with the standard two-sample $t$-tests; see \App~\ref{app:mcc-details} for details.

If $\vct \beta$ is sparse, it may be possible to exploit this sparsity to improve on the OLS estimator by using the {\em lasso estimator} of \citet{tibshirani1996regression}, defined by
\begin{equation}\label{eq:lasso-def}
\hat{\vct \beta}^\lambda = \underset{\vct \beta\in\RR^m}{\text{argmin}} \;\;\frac 12 \norm{\vct y - \mat X \vct \beta}_2^2 + \lambda \cdot \norm{\vct \beta}_1.
\end{equation}

Just as the lasso may improve on OLS as an estimator, we may also find that the knockoffs method based on the lasso improves on the BH method based on the $t$-tests. Indeed, we do observe this improvement in Figure~\ref{fig:mcc-block}, but only for $\alpha = 0.2$. For smaller $\alpha$ values, a specific drawback of knockoffs --- ironically, that knockoff methods break down when the coefficient vector is {\em too} sparse --- prevents the method from realizing its potential. This drawback is resolved by our calibrated knockoff method, the main subject of this work.

\begin{figure}[!tb]
    \centering
    \includegraphics[width=\linewidth]{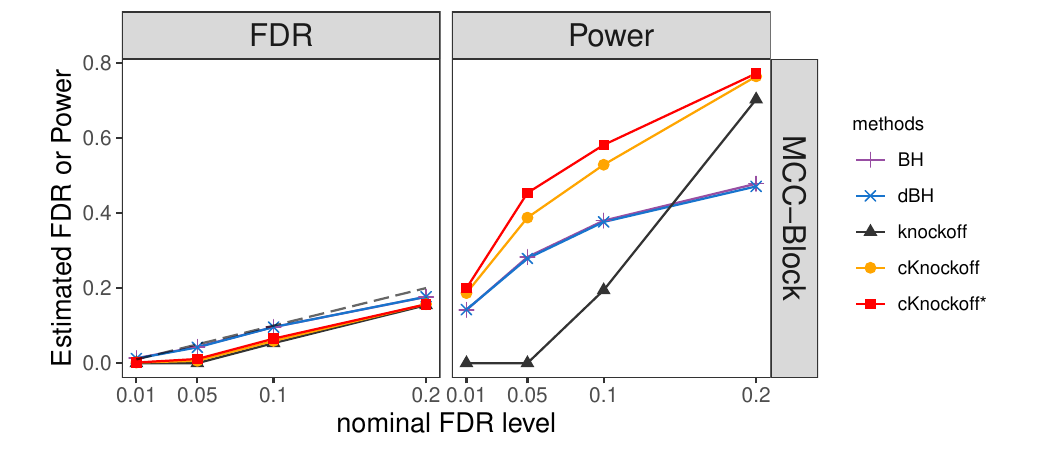}
    \caption{Performance of several multiple testing methods for the MCC-Block problem with $K=200$, $G=5$, $r=3$, $m_1 = 10$, and signal-to-noise ratio $\textnormal{SNR} = 0.05$
    for varying FDR significance levels, $\alpha = 0.01, 0.05, 0.1,$ and $0.2$. The knockoff method using lasso-based LCD feature statistics outperforms BH by a wide margin when $\alpha = 0.2$ by successfully exploiting sparsity, but it fails for smaller values of $\alpha$ due to the threshold phenomenon. The cKnockoff method described in this paper outperforms both BH and knockoffs.}
    \label{fig:mcc-block}
\end{figure}

\subsection{Multiple testing in high-dimensional regression}

Whereas the fixed-X knockoffs framework requires $n \geq 2m$, many modern regression problems of interest are high-dimensional. An important body of literature in the last decade has developed methods for marginal inference on individual parameters in such high-dimensional models, typically under additional assumptions such as sparsity of $\vct \beta$ or knowledge of the distribution of $\mat X$ \citep{zhang2014confidence, van2014asymptotically, javanmard2014confidence, ning2017general, shah2020hardness, shah2023double}. 

\citet{candes2018panning} extended the knockoffs paradigm to allow for multiple testing in high-dimensional settings. This approach requires knowledge of the distribution of the covariate $X$, but makes no model assumptions about the conditional distribution of the outcome $y | X$. These so-called {\em model-X knockoffs} allow for testing the conditional independence,
$H_j: y \independent X_j \mid X_{-j}$,
with guaranteed finite-sample FDR control, assuming the pairs of response and covariates are independent and identically distributed (i.i.d.).
As we will see, our cKnockoff method can also be applied to uniformly improve the power of model-X knockoffs as well.



\subsection{Outline and contributions} \label{sec:outline}

In this work, we propose the {\em calibrated Knockoff procedure} (cKnockoff), a method that controls finite-sample FDR under the same assumption as knockoffs. Our method augments the rejection set of any knockoff procedure, uniformly improving its power by means of a {\em fallback test} that allows for selection of variables not selected by knockoffs. 

Our fallback test takes a very simple form. For any reasonable test statistic $T_j$ for the significance of $X_j$, our framework will compute a data-adaptive threshold $\hc_j$, and reject $H_j$ for any $j$ which has $T_j \geq \hc_j$, in addition to the knockoff rejection set. That is,
\[
\cR^{\ckn} \,=\, \cR^{\kn} \,\cup\, \{j:\; T_j \geq \hc_j\},
\]
where $\cR^{\kn}$ and $\cR^{\ckn}$ are respectively the rejection sets for the baseline and calibrated knockoff methods.

The calculation of $\hc_j$ involves a lot of mathematical details though. As a high-level description, we figure out the gap between the FDR ``budget'' $\alpha$ and the actual FDR of knockoffs. Then the budget gap is distributed to each variable via $\hc_j$ through a conditional FDR calibration method of \citet{fithian2022conditional}, which we review in more detail in Section~\ref{subsec:conditional-calibration}. Define the statistic
\begin{equation}\label{eq:Sj-cases}
S_j = \begin{cases}
(\mat X_{-j}^\tran \vct y,\; \|\vct y\|_2^2) & \text{ for fixed-X cKnockoffs}\\
(\mat X_{-j}, \;\vct y) & \text{ for model-X cKnockoffs}
\end{cases},
\end{equation}
which is sufficient for the submodel described by the corresponding null hypothesis $H_j$ under each modeling framework. Then under $H_j$, the distribution of $\vct y$ given $S_j$
is known so that, for any fallback test threshold $c_j$, we can calculate the resulting {\em conditional FDR contribution} of $H_j$, defined as
\[
\FDR_j(\cR^{\ckn} \mid S_j) 
\;\coloneqq\; \EE_{H_j}\br{ \frac{\one \set{j \in \cR^{\ckn}}}{\left| \cR^{\ckn} \right| \mam 1} \;\Big|\; S_j }
\,\leq\, \EE_{H_j}\br{ \frac{\one \set{j \in \cR^\kn} \mam \one \set{T_j \geq c_j}}{\left| \cR^{\kn} \cup \{j\}\right|} \;\Big|\; S_j }.
\]
We will choose the threshold $\hc_j(S_j)$ to set the last expression equal to the FDR budget that we distribute to variable $j$.

Because the cKnockoff rejection set always includes the knockoff rejection set and sometimes exceeds it, the method is uniformly more powerful than knockoffs. We find in simulations that the power gain is especially large when the true $\vct \beta$ vector is so sparse that the number of strong signal variables does not exceed $1/\alpha$.

The only downside of cKnockoff is the additional computation it requires. In the fixed-X settings, to reduce this burden, we only carry out the fallback test on hypotheses that appear promising, and we use a conservative algorithm to speed the fallback test calculation. We prove that these speedup techniques do not inflate the FDR, and we find numerically that the computation time of our implementation is a small multiple of the knockoff computation time, which is further improved when parallel computing is available.

Section~\ref{sec:review} reviews the basics of knockoffs and conditional calibration, and Section~\ref{sec:our-method} defines our method in full detail. Section~\ref{sec:implement} gives more detail about how we implement the cKnockoff efficiently.  Sections~\ref{sec:simu}--\ref{sec:hiv} illustrate our method's performance on selected simulation scenarios, in addition to the HIV data from the original knockoff paper \citep{barber15}, and Section~\ref{sec:discussion} concludes.

\section{Review: knockoffs and conditional calibration}\label{sec:review}

\subsection{Knockoffs: a flexible framework}\label{subsec:knockoff}

Within both the fixed-X and model-X knockoff frameworks, there are a variety of ways to generate the knockoff matrix $\tX = (\tX_1, \ldots, \tX_m) \in \RR^{n \times m}$ \citep{barber15, candes2018panning, spector2022powerful}, but every knockoff method yields intermediate outputs called {\em feature statistics} $W_1, \ldots, W_m \in \RR$. The absolute value $|W_j|$ quantifies the overall importance of the pair $\{\vct X_j, \tX_j\}$, while the sign $\sgn(W_j)$ indicates whether the original variable $\vct X_j$ is considered more important than its knockoff $\tX_j$.

The two most popular feature statistics, proposed by \citet{barber15} and \citet{candes2018panning} respectively, are both based on the lasso estimator for the augmented model:
\[
\hat{\vct \beta}^\lambda = \underset{\vct \beta \in \RR^{2m}}{\text{argmin}} \;\;\frac 12 \norm{\vct y - \mat X_+ \vct \beta}_2^2 + \lambda \cdot \norm{\vct \beta}_1,
\]
where $\mat X_+ = (\mat X,\tX) \in \RR^{n \times 2m}$ is the augmented design matrix.

The {\em lasso signed-max} (LSM) statistics are defined by entry points on the regularization path:
\begin{equation}\label{eq:LSM}
W_j^{\text{LSM}} = (\lambda_j^* \mam \lambda_{j+m}^*) \cdot \sgn(\lambda_j^* - \lambda_{j+m}^*), \quad \text{ for } \;\;\lambda_j^* = \sup \set{\lambda:\; \hat\beta_j^\lambda \neq 0},
\end{equation}
and the {\em lasso coefficient-difference} (LCD) statistics are defined by the estimator for fixed $\lambda$:
\begin{equation}\label{eq:LCD}
W_j^{\text{LCD}} = |\hat\beta_j^\lambda| - |\hat\beta_{j+m}^\lambda|.
\end{equation}
If $\beta_j^\lambda = \beta_{j+m}^\lambda = 0$, then $W_j^{\text{LCD}} = 0$. For the simulations in this paper, we use a minor modification of $\vct W^{\text{LCD}}$, {\em LCD with tiebreaker} (LCD-T), that breaks ties using the variables' correlations with the lasso residuals $\vct r^{\lambda} = \vct y - \mat X_+ \hat{\vct \beta}^{\lambda}$.
See \App~\ref{app:LCD-T} for details.

Knockoff methods' FDR control guarantee arises from a crucial stochastic property of the feature statistics: conditional on their absolute values $|\vct W|=(|W_1|,\ldots,|W_m|)$, the signs for null variables are independent Rademacher random variables \citep{barber15, candes2018panning}:
\begin{equation}\label{eq:iid-rademacher}
\sgn(W_j) \,\mid\, |W_j|,\, \vct W_{-j} \;\stackrel{H_j}{\sim}\; \textnormal{Unif}\{-1,+1\},
\end{equation}
where $\vct W_{-j}$ encodes all entries other than $W_j$. To avoid trivialities, we assume all $W_j \neq 0$ and $W_i \neq W_j$ for any $i \neq j$.

Once the feature statistics are calculated, knockoff rejects $H_j$ if $W_j \geq \hw$, for an adaptive rejection threshold $\hw \geq 0$ that is based on a running estimator of $\FDP$:
\begin{equation}\label{eq:fdphat-kn}
\hw = \min\left\{w \geq 0:\; \hFDP(w) \leq \alpha\right\}, \quad \text{for } \;\hFDP(w) = \frac{1 + |\{j:\; W_j \leq -w\}|}{|\{j:\; W_j \geq w\}|},
\end{equation}
where $\hw=\infty$ (no rejections) if $\hFDP(w) > \alpha$ for all $w$. Let $\cR^{\kn} = \{W_j \geq \hw\}$ denote the knockoff rejection set. This rejection rule controls FDR at level $\alpha$ whenever the feature statistics satisfy \eqref{eq:iid-rademacher}, as Section~\ref{subsec:finding-budgets} discusses in detail. \App~\ref{app:knockoff} gives further details about the construction of knockoffs, including a proof of \eqref{eq:iid-rademacher} for fixed-X knockoffs.



\subsection{Two limitations of knockoffs}\label{subsec:limitations}

Despite its deft exploitation of sparsity, the knockoff method has two major limitations that can inhibit its performance in certain settings. One limitation, reflected in Figure~\ref{fig:mcc-block}, is the so-called {\em threshold phenomenon}: because the denominator of $\hFDP_{w}$ is the size of the candidate rejection set, we cannot make any rejections at all unless we have $R \geq 1/\alpha$. For example, if $\alpha = 0.1$, we must make at least $10$ rejections or none at all, even if several $p$-values lie well below the Bonferroni threshold $\alpha/m$. Even when the number of potential rejections is above the $1/\alpha$ threshold, the FDP estimator can be highly variable and upwardly biased, adversely affecting the method's power and stability. Some recent proposals ameliorate this limitation \citep{gimenez2019improving, emery2019controlling, nguyen2020aggregation, sarkar2021adjusting}. However, they often come at the cost of substantial power loss compared to standard knockoffs. By contrast, our calibrated knockoff method is always more powerful than the baseline knockoffs.


A second issue is the {\em whiteout phenomenon} in fixed-X knockoffs discussed by \citet{li2021whiteout}, 
who prove that in large MCC problems (Example~\ref{ex:mcc-block} with $K=1$), or more general problems in which the design matrix has an unfavorable eigenstructure, even the Bonferroni method is dramatically more powerful than the best possible knockoff method. 
As we will see, calibrated knockoffs partially address this issue, delivering high power under some circumstances, but giving limited performance gains in other settings.


\subsection{Conditional calibration and dBH}\label{subsec:conditional-calibration}

\citet{fithian2022conditional} introduced a novel technique called {\em conditional calibration} for proving and achieving FDR control under dependence. They begin by decomposing the FDR into the contributions from each null hypothesis:
\begin{equation}\label{eq:fdr-decomp}
\FDR(\cR) \,=\, \sum_{j \in \cH_0} \FDR_j(\cR), \quad \text{ where } \;\;
\FDR_j(\cR) \,=\, \EE_{H_j}\left[\frac{\one\{j \in \cR\}}{R \mam 1}\right],
\end{equation}
and propose controlling each FDR contribution at level $\alpha/m$, so that $\FDR \leq \alpha m_0 / m \leq \alpha$.

Just as we decomposed the FDR into the contributions from each null hypothesis, we can likewise decompose the FDP as
\begin{equation}
    \FDP \,=\, \sum_{j \in \cH_0} \DP_j, \quad \text{ where }\;\; \DP_j(\cR) \,=\, \frac{\one\{j \in \cR\}}{R \mam 1} \,=\, \frac{\one\{j \in \cR\}}{\left| \cR\cup \{j\}\right|}.
\end{equation}
We will call $\DP_j$ the {\em realized discovery proportion} for $H_j$; then $\FDR_j = \EE_{H_j} [\DP_j]$. Note that $\DP_j$ only contributes to $\FDP$ if $H_j$ is true, but it is a well-defined statistic whether $H_j$ is true or false.

To control $\FDR_j$ at level $\alpha/m$, \citet{fithian2022conditional} first condition on a sufficient statistic $S_j$ for the submodel described by $H_j$. 
By the sufficiency of $S_j$, the conditional expectation $\EE_{H_j}[\DP_j(\cR) \mid S_j]$ can be calculated for any rejection rule $\cR$; as a result, a rejection rule with a tuning parameter can also be {\em calibrated} to control the conditional expectation at $\alpha/m$.



\section{Our method: calibrated knockoffs}\label{sec:our-method}

\subsection{Conditional calibration for knockoffs} \label{sec:cknockoff}


Our cKnockoff method supplements a given knockoff procedure's rejection set with additional rejections from a fallback test. Formally,
\begin{equation} \label{eq:ckn}
    \cR^{\ckn} \,=\, \cR^{\kn} \,\cup\, \{j:\; T_j \geq \hc_j\},
\end{equation}
where $T_j$ is any test statistic and $\hc_j$ is a data-adaptive threshold that is calibrated to achieve FDR control. We describe $T_j$ and $\hc_j$ in detail next.

\emph{Ingredient 1: The fallback test statistics.} The test statistic $T_j$ can be chosen arbitrarily by the analyst. To simplify the presentation, we assume by convention that $T_j$ is non-negative and continuously distributed, and the test rejects for large values of $T_j$. Our own implementation uses the correlation of $\mat X_j$ with the residuals from an $L_1$-penalized regression model (see Section \ref{sec:ckn_details}).

\emph{Ingredient 2: The fallback test threshold.} The rejection threshold $\hc_j$ is calibrated to control the FDR contribution of variable $j$. If we set the threshold at $c_j$, then
\[
\DP_j(\cR^\ckn) \,=\, \frac{\one \set{j \in \cR^\kn} \mam \one \set{T_j \geq c_j}}{\left| \cR^\ckn \cup \{j\}\right|}
\,\leq\, \frac{\one \set{j \in \cR^\kn} \mam \one \set{T_j \geq c_j}}{\left| \cR^\kn \cup \{j\}\right|}.
\]
We set $c_j$ to control the conditional expectation $\EE_{H_j}[\DP_j \mid S_j]$ at a data-dependent ``budget'':
\begin{equation}\label{eq:cond_bound}
    \EE_{H_j} \br{ \frac{\one \set{j \in \cR^\kn} \mam \one \set{T_j \geq c_j}}{\left| \cR^\kn \cup \{j\}\right|} \;\Big|\; S_j } \,\leq\, \textrm{Budget}_j(S_j),
\end{equation}
where $S_j$ is defined in \eqref{eq:Sj-cases}. Thus we define $\hc_j(S_j)$ as the minimal value $c_j \in [0,\infty]$ that satisfies \eqref{eq:cond_bound}. The budget for variable $j$ is defined in turn as the conditional expectation $\EE_{H_j}\left[b_j \mid S_j\right]$, where $b_j$ is defined in \eqref{eq:bj_def_ea}. As explained in Section \ref{subsec:finding-budgets}, $b_j$ is constructed to satisfy two conditions:
\begin{equation}\label{eq:budget-conditions}
(\text{i})\;\;\DP_j(\cR^{\kn}) \leq b_j \text{ almost surely, and } (\text{ii})\;\;\sum_{j \in \cH_0}  \EE\left[\,b_j\,\right] \leq \alpha.
\end{equation}
The first condition ensures that $c_j=\infty$ is a feasible solution for the inequality \eqref{eq:cond_bound}, and the second ensures our method's FDR control (Theorem \ref{thm:fdr}). 




Because $S_j$ is a sufficient statistic for the submodel described by $H_j$, the conditional data distribution given $S_j$ is fully known under $H_j$ (see \App~\ref{app:cond_dist}) and the conditional expectations in \eqref{eq:cond_bound} are computable for all $c_j$.

With these ingredients in place, we are ready to define our procedure:

\begin{defi}[cKnockoff] \label{def:ckn_fixed}
    Suppose the knockoff procedure employs the feature statistics $W_1, \ldots, W_m$ and rejects $\cR^\kn$ at FDR level $\alpha$.
    
    Let the budget
    \begin{equation} \label{eq:bj_def_ea}
        b_j = \alpha\, \frac{\one\{j \in \cC(w^*)\}}{1+|\cA(w^*)|},
    \end{equation}
    where $\cC(w) = \{j:\; W_j \geq w\}$, $\cA(w) = \{j:\; W_j \leq -w\}$, and
    \[
    w^* = \min\left\{w \geq 0: \; \frac{1+|\cA(w)|}{|\cC(w)|} \leq \alpha\right\} \mim \min\left\{w:\; |\cC(w)| < 1/\alpha\right\}.
    \]
    The cKnockoff procedure rejection set is defined as
    \[ \cR^{\ckn} \,=\, \cR^{\kn} \,\cup\, \{j:\; T_j \geq \hc_j\}, \]
    where $T_j$ is any non-negative test statistic and $\hc_j$ is the minimal value $c_j \in [0,\infty]$ satisfying \eqref{eq:cond_bound}.
\end{defi}

The variable $b_j$ is crafted to allow our method to exploit the difference between the FDR budget $\alpha$ and the true FDR of a given knockoff method. We emphasize that, while our method is more powerful than knockoffs by construction, it requires no additional modeling assumptions, and controls FDR at the same advertised level, as we see next.

\begin{thm} \label{thm:fdr}
    cKnockoff controls FDR at level $\alpha$. That is, $\FDR(\cR^{\ckn}) \leq \alpha$.
\end{thm}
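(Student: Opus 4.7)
The plan is to reduce the statement to the two bulleted conditions in \eqref{eq:budget-conditions} on the random variable $b_j$, via a conditional calibration argument applied per null hypothesis. Starting from the FDR decomposition \eqref{eq:fdr-decomp},
\[
\FDR(\cR^{\ckn}) \;=\; \sum_{j \in \cH_0} \EE_{H_j}\!\left[\DP_j(\cR^{\ckn})\right],
\]
the target is to show $\EE_{H_j}[\DP_j(\cR^{\ckn})] \leq \EE[b_j]$ for every $j \in \cH_0$, so that summing and invoking condition (ii) immediately yields $\FDR(\cR^{\ckn}) \leq \alpha$.

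For each null $j$, the inclusion $\cR^{\kn} \cup \{j\} \subseteq \cR^{\ckn}$ on the event $\{j \in \cR^{\ckn}\}$ gives $|\cR^{\ckn}| \mam 1 \geq |\cR^{\kn} \cup \{j\}|$, hence the almost-sure bound
\[
\DP_j(\cR^{\ckn}) \;\leq\; \frac{\one\{j \in \cR^{\kn}\} \mam \one\{T_j \geq \hc_j\}}{|\cR^{\kn} \cup \{j\}|},
\]
which is exactly the quantity appearing inside the expectation in \eqref{eq:cond_bound}. Conditioning on $S_j$ and using that $\hc_j$ is defined as the minimal $c_j$ satisfying \eqref{eq:cond_bound} at $\mathrm{Budget}_j(S_j) = \EE_{H_j}[b_j \mid S_j]$, I take conditional expectations under $H_j$ and then unconditioning to get $\EE_{H_j}[\DP_j(\cR^{\ckn})] \leq \EE[b_j]$, as desired. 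Before doing this, I must confirm that $\hc_j$ is well-defined, i.e., that the set of admissible $c_j$ is nonempty. This is precisely where condition (i) is used: setting $c_j = \infty$ kills the fallback rejection and reduces the left-hand side of \eqref{eq:cond_bound} to $\EE_{H_j}[\DP_j(\cR^{\kn}) \mid S_j]$, which by (i) is bounded by $\EE_{H_j}[b_j \mid S_j]$. So $c_j = \infty$ always satisfies the constraint, making $\hc_j \in [0,\infty]$ a well-defined, $S_j$-measurable threshold.

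The main obstacle is not the calibration argument above but the verification of conditions (i) and (ii) for the specific $b_j$ in \eqref{eq:bj_def_ea}; I would either cite or carry out this check. Condition (i) is a deterministic fact: on $\{j \in \cR^{\kn}\}$ one has $\hw < \infty$ and $|\cC(\hw)| \geq 1/\alpha$, which forces $w^* = \hw$, so $\cR^{\kn} = \cC(w^*)$ and the defining inequality $\hFDP(\hw) \leq \alpha$ gives $\DP_j(\cR^{\kn}) = 1/|\cC(w^*)| \leq \alpha/(1 + |\cA(w^*)|) = b_j$. Condition (ii) reduces to showing $\EE[|\cC(w^*) \cap \cH_0|/(1 + |\cA(w^*)|)] \leq 1$; after the crude bound $|\cA(w^*)| \geq |\cA(w^*) \cap \cH_0|$, this is the classical knockoff super-martingale argument of \citet{barber15}, which relies only on the i.i.d.\ Rademacher property \eqref{eq:iid-rademacher} of the signs of null $W_j$ and the fact that $w^*$ is a stopping time in the reverse filtration generated by $|\vct W|$ and the non-null signs. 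I expect this verification, deferred to Section~\ref{subsec:finding-budgets}, to be the substantive content; granted (i) and (ii), Theorem~\ref{thm:fdr} is immediate from the two paragraphs above.
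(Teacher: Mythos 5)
Your proposal is correct and follows essentially the same route as the paper: bound $\DP_j(\cR^{\ckn})$ by the ratio with denominator $|\cR^{\kn}\cup\{j\}|$ using $\cR^{\ckn}\supseteq\cR^{\kn}$, condition on $S_j$, invoke the defining inequality \eqref{eq:cond_bound} for $\hc_j$, and sum over $j\in\cH_0$ using condition (ii) of \eqref{eq:budget-conditions}. Your sketched verification of conditions (i) and (ii) for the specific budget \eqref{eq:bj_def_ea} likewise matches the paper's Lemma~\ref{lem:bj} and the optional-stopping argument of Section~\ref{subsec:finding-budgets}.
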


\begin{proof}
  By construction, $\cR^\ckn \supseteq \cR^\kn$, so 
  \begin{align*}
  \EE_{H_j}[\DP_j(\cR^\ckn) \mid S_j]
  &\,=\, \EE_{H_j} \br{ \frac{\one \set{j \in \cR^\kn} \mam \one \set{T_j \geq \hc_j}}{\left| \cR^\ckn \cup \{j\}\right|} \;\Big|\; S_j } \\[7pt]
  &\,\leq\, \EE_{H_j} \br{ \frac{\one \set{j \in \cR^\kn} \mam \one \set{T_j \geq \hc_j}}{\left| \cR^\kn \cup \{j\}\right|} \;\Big|\; S_j } 
  \,\leq\, \EE_{H_j}[b_j \mid S_j].
  \end{align*}
  For the last inequality, note that $\hat c_j$ is degenerate given $S_j$. Marginalizing over $S_j$ and applying condition (ii) in \eqref{eq:budget-conditions}, we have
  \[
  \FDR(\cR^\ckn) \,=\, \sum_{j\in\cH_0} \EE [\,\DP_j(\cR^\ckn)\,] \,\leq\, \sum_{j\in\cH_0} \EE\left[\,b_j\,\right] \,\leq\, \alpha.
  \]
\end{proof}



\begin{remark}
The same calibration scheme can be applied to any baseline FDR-controlling method $\cR$ as long as we can find budgets $b_1,\ldots,b_m$ satisfying \eqref{eq:budget-conditions}. While we have assumed $\DP_j(\cR) \leq b_j$, it is enough to have $\EE_{H_j}[\DP_j(\cR) - b_j\mid S_j] \leq 0$ almost surely.
\end{remark}

\subsection{Important properties}\label{subsec:ckn-properties}

We can avoid directly calculating $\hc_j$ by subtracting the right-hand side of \eqref{eq:cond_bound} to obtain an inequality for the {\em excess FDR contribution} of variable $j$:
\begin{equation}\label{eq:Ej}
E_j(c\,; S_j) \;\coloneqq\; \EE_{H_j} \br{\frac{\one \set{j \in \cR^\kn} \mam \one \set{T_j \geq c}}{\left| \cR^\kn \cup \{j\}\right|} \,-\, b_j \;\Big|\; S_j} \,\leq\, 0.
\end{equation}
The function $E_j(\cdot\,; S_j)$ is continuous and non-increasing in $c$, with $E_j(\hat c_j; S_j) = 0$ by definition. As a result, we have $T_j \geq \hat{c}_j$ if and only if $E_j(T_j; S_j)\leq 0$. \footnote{To avoid a possible source of confusion, we emphasize here that $E_j(T_j; S_j)$ means evaluating the function $E_j(\cdot\,; S_j):\;\RR \to \RR$ at the realized value $c=T_j$--- {\em not} substituting the expression $T_j$ for $c$ inside the integrand in \eqref{eq:Ej} and calculating the conditional expectation.} We thus obtain an equivalent, but more computationally useful, definition of calibrated knockoffs as
\[
\cR^{\ckn} \,=\, \cR^{\kn} \cup \{j:\; E_j(T_j; S_j) \leq 0\}.
\]

Another important property of our method for both implementation and theory is that we can filter the fallback test rejections arbitrarily without threatining FDR control, as stated formally in Theorem~\ref{thm:sandwich}. This is nontrivial because, in general, the FDR can increase after filtering the rejection set; see e.g. \citet{katsevich2021filtering}.

\begin{thm}[Sandwich] \label{thm:sandwich}
    For any rejection rule $\cR$ with $\cR^{\kn} \subseteq \cR \subseteq \cR^{\ckn}$ almost surely, we have $\FDR(\cR) \leq \alpha$.
\end{thm}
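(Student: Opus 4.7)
The plan is to mimic the FDR proof for cKnockoff (Theorem~\ref{thm:fdr}), using the sandwich hypothesis to reduce the per-hypothesis excess FDR back to the budget $b_j$. The core observation is that moving from $\cR^{\ckn}$ to a smaller set $\cR$ can only shrink the denominator $|\cR \cup \{j\}|$, which seems bad, but the hypothesis $\cR^{\kn} \subseteq \cR$ keeps the denominator at least as large as $|\cR^{\kn}\cup\{j\}|$, which is exactly what the calibration bounds.

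First, I would fix $j \in \cH_0$ and use both inclusions of the sandwich. From $\cR \subseteq \cR^{\ckn}$ we get
\[
\one\{j \in \cR\} \;\leq\; \one\{j \in \cR^{\ckn}\} \;=\; \one\{j \in \cR^{\kn}\} \,\vee\, \one\{T_j \geq \hc_j\},
\]
and from $\cR^{\kn} \subseteq \cR$ we get $|\cR \cup \{j\}| \geq |\cR^{\kn} \cup \{j\}|$. Combining these,
\[
\DP_j(\cR) \;=\; \frac{\one\{j \in \cR\}}{|\cR \cup \{j\}|} \;\leq\; \frac{\one\{j \in \cR^{\kn}\} \,\vee\, \one\{T_j \geq \hc_j\}}{|\cR^{\kn} \cup \{j\}|}.
\]

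Next, I would take the conditional expectation under $H_j$ given $S_j$. Since $\hc_j$ is a function of $S_j$ alone (it is the minimizer of a deterministic equation in $c$ involving only $S_j$), it is degenerate given $S_j$, so the right-hand side is exactly the quantity that appears in \eqref{eq:cond_bound}. By the defining inequality for $\hc_j$,
\[
\EE_{H_j}\!\left[\DP_j(\cR) \;\big|\; S_j\right] \;\leq\; \EE_{H_j}\!\left[\frac{\one\{j \in \cR^{\kn}\} \,\vee\, \one\{T_j \geq \hc_j\}}{|\cR^{\kn} \cup \{j\}|} \;\Big|\; S_j\right] \;\leq\; \EE_{H_j}[b_j \mid S_j].
\]
Marginalizing over $S_j$, summing over $j \in \cH_0$, and applying condition (ii) of \eqref{eq:budget-conditions}:
\[
\FDR(\cR) \;=\; \sum_{j \in \cH_0} \EE[\DP_j(\cR)] \;\leq\; \sum_{j \in \cH_0} \EE[b_j] \;\leq\; \alpha.
\]

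The only subtle point — and what I would flag as the main obstacle worth articulating carefully — is the interplay between the two sandwich inclusions in the numerator and the denominator bounds. It is tempting (but wrong) to compare $\DP_j(\cR)$ directly to $\DP_j(\cR^{\ckn})$, since the numerator and denominator of the cKnockoff DP can both shrink, so the ratio is not monotone in $\cR$. The trick is to pair the numerator bound coming from $\cR \subseteq \cR^{\ckn}$ with the denominator bound coming from $\cR^{\kn} \subseteq \cR$, producing the hybrid quantity that the conditional calibration was originally designed to bound. Once this bound is in place, the rest of the argument is an exact replay of the proof of Theorem~\ref{thm:fdr}.
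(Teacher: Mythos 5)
Your proof is correct and follows essentially the same route as the paper's: both pair the numerator bound from $\cR \subseteq \cR^{\ckn}$ with the denominator bound from $\cR^{\kn} \subseteq \cR$ to reduce $\DP_j(\cR)$ to the hybrid quantity controlled by the calibration inequality \eqref{eq:cond_bound}, then marginalize and sum over $j \in \cH_0$. Your explicit remark that $\DP_j$ is not monotone in $\cR$ and that the two inclusions must be used on the numerator and denominator separately is a useful elaboration of a step the paper leaves implicit, but it is not a different argument.
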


\begin{proof}
Recall
$
\cR^{\ckn} \,=\, \cR^{\kn} \cup \{j:\; T_j \geq \hc_j\}.
$
Hence $\cR^{\kn} \subseteq \cR \subseteq \cR^{\ckn}$ implies 
\[
\EE_{H_j}[\DP_j(\cR) \mid S_j]
\,\leq\, \EE_{H_j} \br{ \frac{\one \set{j \in \cR^\kn} \mam \one \set{T_j \geq \hc_j}}{\left| \cR^\kn \cup \{j\}\right|} \;\Big|\; S_j }\\[5pt]
\,\leq\, \EE_{H_j}[b_j \mid S_j],
\]
so that $\FDR(\cR) \leq \sum_{j\in \cH_0} \EE[b_j] \leq \alpha$.
\end{proof}

The Sandwich property allows us to reduce computational effort by carrying out the fallback tests only on variables that seem promising (e.g. because they have small marginal $p$-values) but are not selected by knockoffs. 
Further details on the fast and reliable implementation of cKnockoffs can be found in Section~\ref{sec:implement}, \App~\ref{app:implement}, \ref{app:filtering}, and \ref{app:R_star}.

\subsection{Test  statistic in cKnockoff}\label{sec:ckn_details}

Our implementation of cKnockoff uses the test statistic
\begin{equation}\label{eq:Tj}
T_j = \left|\vct X_j^\tran \left(\vct y - \hat{\vct y}^{(j)}\right)\right|,
\end{equation}
where $\hat{\vct y}^{(j)}$ is a vector of fitted values from $L_1$-penalized regression of $\vct y$ on $\mat X_{-j}$ with regularization parameter $\lambda^{(j)}$.
For fixed-X cKnockoff with the Gaussian linear model, setting $\lambda^{(j)}=0$ results in $\vct y-\hat{\vct y}^{(j)}$ being the OLS residuals under $H_j$. Holding $S_j$ fixed, $T_j$ is an increasing function of the OLS $t$-statistic's absolute value. In this sense, \eqref{eq:Tj} generalizes the usual two-tailed $t$-statistic. We opt for $\lambda^{(j)}>0$ because, in sparse settings, lasso-fitted values will likely yield a more accurate adjustment for other predictor variables' effects. For model-X cKnockoff, the same idea extend to other regression models.

It is computationally convenient for $\hat{\vct y}^{(j)}$ to be a function of $S_j$ only. In the model-X setting, we select $\lambda^{(j)}$ via cross-validation of $L_1$-penalized regression of $\vct y$ on $\mat X_{-j}$, and in the fixed-X setting we set $\lambda^{(j)} = 2\hat{\sigma}^{(j)}$, \footnote{Setting $\lambda = 2\sigma$ blocks most (95\%) null variables under orthogonal designs, where the explanatory variables
are standardized to have a unit norm.} where $\hat{\sigma}^{(j)}$ is the residual variance from regressing $\vct y$ on $\mat X_{-j}$.


\subsection{Finding budgets}\label{subsec:finding-budgets}


In this section, we explain why the budget $b_j$ satisfies the two conditions in \eqref{eq:budget-conditions}. 
Define the knockoffs {\em candidate rejection set} for threshold $w$ as $\cC(w) = \{j:\; W_j \geq w\}$, and let $\cA(w) = \{j:\; W_j \leq -w\}$. Then the FDP estimate may be expressed as
\[
\hFDP(w) \,=\, \frac{1+|\cA(w)|}{|\cC(w)|}.
\]
Let $w_1 < \cdots < w_m$ denote the order statistics of $|W_1|,\cdots,|W_m|$. It suffices to restrict our attention to these order statistics because they are the only values of $w$ where $\cC(w)$ or $\hFDP(w)$ change. Then we can equivalently write the knockoff rejecting threshold
\[
\hw = w_\tau, \quad \text{ where } \;\; \tau = \min\left\{t \in \{1,\ldots,m+1\}: \; \hFDP(w_t) \leq \alpha\right\},
\]
where we set $w_{m+1}=\infty$ and $\hFDP(\infty) = 0$ to cover the case where no rejections are made. Thus we can view knockoffs as a stepwise algorithm with discrete ``time'' index $t$. It calculates $\hFDP(w_t)$ for each $t = 1, 2, \ldots$, and stops and rejects $\cC(w_t)$ the first time $\hFDP(w_t) \leq \alpha$.

The FDR control proof for knockoffs is based on an optional stopping argument. Define
\[
    M_t \;\coloneqq\; \frac{|\cC(w_t) \cap \cH_0|}{1+|\cA(w_t) \cap \cH_0|}
    \;\geq\; \frac{|\cC(w_t) \cap \cH_0|}{1+|\cA(w_t)|}
    \,=\, \frac{\FDP(\cC(w_t))}{\hFDP(w_t)},
\]
where we set the last expression to be zero by convention if $\cC(w_t) = \emptyset$. \citet{barber15} show $M_t$ is a super-martingale with respect to the discrete-time filtration given by
\[
\cF_t = \sigma\Big(\,|\vct W|, \;\left(W_j: j \in \cH_0^\setcomp \text{ or } |W_j| < w_t\right), \; |\cC(w_t)|\,\Big), \quad \text{ for } t = 1,\ldots,m+1,
\]
and they also show that $\EE [M_1] \leq 1$. We include proofs of both facts in \App~\ref{app:kn-proofs} for completeness. Because $\tau$ is a stopping time with respect to the same filtration, we have a chain of inequalities
\begin{equation} \label{eq:kn_chain}
    \FDR(\cR^\kn)
    \,=\, \EE[\FDP(\cC(w_\tau))]
    \,\leq\, \alpha\, \EE\left[ \frac{\FDP(\cC(w_\tau))}{\hFDP(w_\tau)} \right]
    \,\leq\, \alpha\, \EE M_\tau
    \,\leq\, \alpha\,\EE M_1
    \,\leq\, \alpha.
\end{equation}
To find large budgets whose sum is controlled at $\alpha$ in expectation, the intermediate expressions in \eqref{eq:kn_chain} are natural places to look. While we cannot calculate $\alpha M_\tau$ or $\alpha M_1$ without knowing $\cH_0$, we can decompose the next largest expression to obtain the budgets
\[
b_j^0 \,=\, \alpha\, \frac{\DP_j(\cC(w_\tau))}{\hFDP(w_\tau)} \,=\, \alpha\, \frac{\one\{j \in \cC(w_\tau)\}}{1+|\cA(w_\tau)|},
\quad \text{ since } \;\; \sum_{j\in\cH_0} b_j^0 \,=\, \alpha\,\frac{\FDP(\cC(w_\tau))}{\hFDP(w_\tau)}.
\]
These budgets satisfy (i) in \eqref{eq:budget-conditions} because $\hFDP(w_\tau) \leq \alpha$ almost surely, and (ii) in \eqref{eq:budget-conditions} by the inequalities in \eqref{eq:kn_chain}.

The budgets $b_j^0$ do yield a small improvement over baseline knockoffs by taking up the slack in the first inequality of \eqref{eq:kn_chain}, but they do not resolve the main failure modes we discussed in Section~\ref{subsec:limitations}, where knockoffs usually makes no rejections. In that case, most of the slack is in the second-to-last inequality, since $\EE[M_{\tau}] \approx 0$ while $\EE[M_1]$ may be close to $1$.

We can find better budgets by considering the algorithm's behavior in realizations where no rejections are made. As soon as $|\cC(w_t)|$ falls below $1/\alpha$, it becomes a foregone conclusion that $\tau = m+1$ and $M_\tau = 0$, even while the current value of $M_t$ may still be fairly large. In such ``hopeless'' cases, we should stop the algorithm early and harvest as much of $M_t$ as we can. We thus replace $\tau$ with a stopping time $\tau_1$ that halts early in these cases:
\begin{equation}\label{eq:bj_def}
b_j \,=\, \alpha\, \frac{\DP_j(\cC(w_{\tau_1}))}{\hFDP(w_{\tau_1})} \,=\, \alpha\, \frac{\one\{j \in \cC(w_{\tau_1})\}}{1+|\cA(w_{\tau_1})|},
\quad \text{ for } \;\;  \tau_1 = \tau \mim \min\left\{t:\; |\cC(w_t)| < 1/\alpha\right\}.
\end{equation}
We have $b_j \geq b_j^0 \geq \DP_j(\cR^{\kn})$ almost surely since $\tau_1 = \tau$ unless $\cR^{\kn} = \emptyset$. We also have
\[ \sum_{j\in\cH_0} b_j = \alpha \frac{\FDP(\cC(w_{\tau_1}))}{\hFDP(w_{\tau_1})} \leq \alpha M_{\tau_1}, \]
whose expectation is below $\alpha$ by optional stopping. As a result, we have the following lemma:
\begin{lemma} \label{lem:bj}
    The budgets defined in~\eqref{eq:bj_def} satisfy
    \[ (\text{i})\;\;\DP_j(\cR^{\kn}) \leq b_j \text{ almost surely,} \quad \text{and } (\text{ii})\;\;\sum_{j \in \cH_0}  \EE\left[\,b_j\,\right] \leq \alpha. \]
\end{lemma}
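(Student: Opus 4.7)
The strategy is to handle each condition separately, leveraging the martingale machinery already introduced in the proof of knockoffs' FDR control. Condition (i) reduces to a pointwise case analysis comparing $b_j^0$ and $b_j$ against $\DP_j(\cR^{\kn})$, while condition (ii) follows from an optional stopping argument applied to Barber--Candès supermartingale $M_t$ at the stopping time $\tau_1$.

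For (i), the plan is to split into two cases based on whether knockoffs rejects anything. If $\cR^{\kn} = \cC(w_\tau) = \emptyset$ (equivalently $\tau = m+1$), then $\DP_j(\cR^{\kn}) = 0$ and there is nothing to prove. Otherwise $\tau \leq m$, and since $\hFDP(w_\tau)\leq \alpha$ forces $|\cC(w_\tau)| \geq (1+|\cA(w_\tau)|)/\alpha \geq 1/\alpha$, we get $\tau_1 = \tau$, so $b_j = b_j^0 = \alpha\,\one\{j \in \cC(w_\tau)\}/(1+|\cA(w_\tau)|)$. The inequality $\DP_j(\cR^{\kn}) \leq b_j^0$ is trivial when $j \notin \cC(w_\tau)$; when $j \in \cC(w_\tau)$, it is equivalent to $1/|\cC(w_\tau)| \leq \alpha/(1+|\cA(w_\tau)|)$, which is exactly $\hFDP(w_\tau)\leq \alpha$.

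For (ii), I first compute $\sum_{j\in\cH_0} b_j$ by pulling out the common denominator, which gives
\[
\sum_{j\in\cH_0} b_j \;=\; \alpha\,\frac{|\cC(w_{\tau_1}) \cap \cH_0|}{1+|\cA(w_{\tau_1})|} \;=\; \alpha\,\frac{\FDP(\cC(w_{\tau_1}))}{\hFDP(w_{\tau_1})} \;\leq\; \alpha\,M_{\tau_1},
\]
where the final bound comes directly from the inequality $M_t \geq \FDP(\cC(w_t))/\hFDP(w_t)$ that was recorded when $M_t$ was defined (replacing $|\cA(w_t)\cap\cH_0|$ by the possibly larger $|\cA(w_t)|$ in the denominator). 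Taking expectations and invoking the optional stopping theorem on the supermartingale $(M_t, \cF_t)$ reduces the bound to $\alpha\,\EE[M_1] \leq \alpha$, which is the last inequality in \eqref{eq:kn_chain}.

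The one technical step that needs care is verifying that $\tau_1$ is a bounded stopping time with respect to $(\cF_t)$. Since $|\cC(w_t)|$ is explicitly $\cF_t$-measurable by construction of the filtration, $\sigma_1 := \min\{t : |\cC(w_t)| < 1/\alpha\}$ is a stopping time, and $\tau$ is already known to be one from the knockoff proof; hence $\tau_1 = \tau \wedge \sigma_1 \leq m+1$ is too. The other small subtlety is that the convention $\cC(w_{m+1})=\emptyset$, $\hFDP(\infty)=0$ must be handled so that the ratio $\FDP/\hFDP$ is interpreted as $0$ on empty candidate sets; this is consistent with how $M_t$ was defined, so the chain of inequalities goes through unchanged. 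I expect no further obstacles: the heavy lifting (supermartingale property of $M_t$ and $\EE[M_1]\leq 1$) is inherited from Appendix~\ref{app:kn-proofs} and already spelled out in~\eqref{eq:kn_chain}.
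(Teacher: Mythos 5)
Your proof is correct and follows essentially the same route as the paper: condition (i) via the observation that $\tau_1=\tau$ whenever $\cR^{\kn}\neq\emptyset$ together with $\hFDP(w_\tau)\leq\alpha$, and condition (ii) via $\sum_{j\in\cH_0}b_j\leq\alpha M_{\tau_1}$ and optional stopping of the Barber--Cand\`es supermartingale. Your explicit verification that $\tau_1$ is a bounded stopping time is a detail the paper leaves implicit, but it is correct and adds nothing beyond what the filtration's construction already guarantees.
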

Note the definition here is equivalent to \eqref{eq:bj_def_ea} since $w_{\tau_1} = w^*$.

To illustrate the improvement of $b_j$ over $b_j^0$, consider the problem setting in Figure \ref{fig:mcc-block}. Averaging over $100$ simulations with $\alpha = 0.05$, we estimate $\sum_{j \in \cH_0} \EE[b_j] \approx 0.99\alpha$, while $\sum_{j \in \cH_0} \EE[b^0_j] \approx 0$. While the increase is not always so dramatic, $b_j$ always yields a strict power improvement over fixed-X knockoffs, as we see next.

\begin{thm}[strict improvement over fixed-X knockoffs] \label{thm:strict_better}
    Assume $\cH_0^\setcomp \neq \emptyset$. Let $T_j$ be \eqref{eq:Tj} and the budget be defined as in \eqref{eq:bj_def} and the nominal FDR level $\alpha \in (0, 0.5]$. Then, in the fixed-X setting, we have
    \[ \TPR(\cR^\ckn) > \TPR(\cR^\kn). \]
\end{thm}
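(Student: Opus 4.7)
The plan is to leverage the trivial containment $\cR^\ckn \supseteq \cR^\kn$, which already gives $\TPR(\cR^\ckn) \geq \TPR(\cR^\kn)$. For strict inequality, I will exhibit an event of positive probability on which cKnockoff rejects a non-null that knockoffs misses. Fixing any $j^* \in \cH_0^\setcomp$, it suffices to show
\[
\PP\!\left[T_{j^*} \geq \hc_{j^*},\; j^* \notin \cR^\kn\right] > 0,
\]
because this event is contained in $\{j^* \in \cR^\ckn \setminus \cR^\kn\}$; by the decomposition $\TPR = m_1^{-1}\sum_{j\in\cH_0^\setcomp} \PP[j \in \cR]$, a strict increase in the rejection probability of one non-null is enough.

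The natural target is the event $A = \{\cR^\kn = \emptyset\}$, on which $j^* \notin \cR^\kn$ holds automatically. First I would show $\PP(A) > 0$ by a continuity argument on $\vct y$: there is an open set of $\vct y$'s for which positive and negative feature statistics balance so that $\hFDP(w) > \alpha$ at every threshold $w$. On $A$, the early stopping forces $\tau_1 < \tau = m+1$, and the assumption $\alpha \leq 0.5$ (equivalently $\lceil 1/\alpha \rceil \geq 2$) ensures $|\cC(w_{\tau_1})| \geq 1$, so at least one variable gets a positive budget. Using the Rademacher sign property \eqref{eq:iid-rademacher}, I would then show that $\PP(A \cap \{j^* \in \cC(w_{\tau_1})\}) > 0$: conditional on the null sign-flips, the distribution of $\vct W$ explores all configurations, including those where $j^*$ sits in the top-$|\cC(w_{\tau_1})|$ positive-sign slot while the overall $\hFDP$ stays above $\alpha$ everywhere.

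Next I would show that $\hc_{j^*}(S_{j^*}) < \infty$ on a positive-probability set of $S_{j^*}$. Since $E_{j^*}(\cdot\,;S_{j^*})$ is continuous and non-increasing, with $E_{j^*}(\infty;S_{j^*}) = \EE_{H_{j^*}}[\DP_{j^*}(\cR^\kn) - b_{j^*} \mid S_{j^*}]$, finiteness follows from the strict inequality $E_{j^*}(\infty;S_{j^*}) < 0$. That strict inequality is a consequence of the previous paragraph: on the positive-probability set of $S_{j^*}$ values where $A \cap \{j^* \in \cC(w_{\tau_1})\}$ has positive conditional probability under $H_{j^*}$, the random variable $b_{j^*}$ strictly exceeds $\DP_{j^*}(\cR^\kn) = 0$ on a sub-event of positive conditional probability, creating the required gap in expectation.

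Finally, I would couple $\{T_{j^*} \geq \hc_{j^*}\}$ with $A$. By \eqref{eq:Tj} and the $S_{j^*}$-measurability of $\hat{\vct y}^{(j^*)}$, one has $T_{j^*} = |\vct X_{j^*}^\tran \vct y - g(S_{j^*})|$ for a measurable $g$. The main obstacle is that both $\vct W$ and $T_{j^*}$ are driven by the single vector $\vct y$, so naively enlarging $T_{j^*}$ could perturb the sign pattern of $\vct W$ and break $A$. I would resolve this by perturbing $\vct y$ along the $(n-2m)$-dimensional subspace orthogonal to the column span of $\mat X_+$: such perturbations leave $\mat X_+^\tran \vct y$ (and hence $\vct W$ and $\cR^\kn$) invariant, but continuously shift $\|\vct y\|^2$, and hence $S_{j^*}$, $g(S_{j^*})$, and $T_{j^*}$. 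Within the open set $A \cap \{j^* \in \cC(w_{\tau_1})\}$, I would verify that such perturbations push $T_{j^*}$ past $\hc_{j^*}(S_{j^*})$ on a subset of positive Lebesgue measure, completing the argument.
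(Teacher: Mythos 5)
Your overall strategy---exhibit a positive-probability event on which $\cR^\kn=\emptyset$ yet some non-null $j^*$ receives a positive budget $b_{j^*}>0=\DP_{j^*}(\cR^\kn)$, and conclude that the fallback test fires---is the same as the paper's, but two steps do not go through as written. The first gap is your use of the Rademacher property \eqref{eq:iid-rademacher} to place $j^*$ in $\cC(w_{\tau_1})$: that property holds only for \emph{null} indices, so conditional sign-flips of the nulls cannot ``explore'' configurations in which the non-null $W_{j^*}$ is positive and sits in the candidate set. The paper sidesteps this by observing that the rejection regions $\cG_j=\set{\vct y:\, j\in\cR(\vct y)}$ are deterministic subsets of $\RR^n$ and the Gaussian law of $\vct y$ has full support for every $\vct\beta$, so it suffices to prove $\PP_0(j^*\in\cR^\kn)<\PP_0(j^*\in\cR^\ckn)$ under the \emph{global null} measure $\PP_0$, where every sign is Rademacher and the event $\set{W_{j^*}>0,\ |\set{i:W_i>0}|=(\lceil 1/\alpha\rceil-1)\wedge m}$ has positive conditional probability given $|\vct W|$ (this is also exactly where $\alpha\le 0.5$ is used).

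The second and more serious gap is your final ``coupling'' paragraph. Knowing $\hc_{j^*}<\infty$ is not enough, but the perturbation you propose does not close the argument: moving $\vct y$ orthogonally to the column span of $\mat X_+$ need not leave $\vct W$ invariant (the LCD-T statistics take $\tilde\sigma^2=\|\vct r^0\|_2^2/(n-2m)$ as input, which changes under such perturbations), and even granting invariance of $\cR^\kn$, both $T_{j^*}$ and $\hc_{j^*}(S_{j^*})$ move with $S_{j^*}$, so nothing forces the former past the latter. The missing idea is that no geometric argument is needed---the calibration equation itself delivers the conclusion. Since $E_{j^*}(\cdot\,;S_{j^*})$ is continuous and non-increasing with $E_{j^*}(\infty;S_{j^*})<0$ (your budget-gap observation) and $E_{j^*}(0;S_{j^*})>E_{j^*}(\infty;S_{j^*})$ (because $T_{j^*}\ge 0$ always and $\PP_{H_{j^*}}(j^*\notin\cR^\kn\mid S_{j^*})>0$), one gets $E_{j^*}(\hc_{j^*};S_{j^*})>E_{j^*}(\infty;S_{j^*})$, and this difference equals
\[
\EE_{H_{j^*}}\br{\frac{\one\set{T_{j^*}\ge\hc_{j^*}}\,\one\set{j^*\notin\cR^\kn}}{\left|\cR^\kn\cup\{j^*\}\right|}\;\Big|\;S_{j^*}}\;>\;0,
\]
which is precisely the positive-probability statement you need. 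This is the content of Lemma~\ref{lemma:strict_better} in Appendix~\ref{app:strict_better}, and it should replace your last paragraph entirely.
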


\begin{remark}
    The same result holds in the model-X setting, provided that the statistics $T_1,\ldots,T_m$ are continuous and supported on $[0,\infty)$.
\end{remark}

In short, the theorem follows from the fact that our fallback test always makes each hypothesis strictly more likely to be rejected than the knockoffs, due to the construction of $b_j$ in \eqref{eq:bj_def}. We defer the detailed proof to \App~\ref{app:strict_better}. It is worth noticing that although theoretically, the null hypotheses are also more likely to be rejected, the realized FDR is almost the same as knockoffs in our simulation studies in Section~\ref{sec:simu}, even when the power gain is significant. This is because most hypotheses rejected by the fallback test are non-null.

\subsection{Refined cKnockoff procedure}\label{subsec:refine}

Here we briefly discuss how to extend the analysis above to further improve cKnockoffs with additional computational effort, if desired. The proof of Theorem \ref{thm:fdr} indicates that $\cR^\kn$ in the denominator in \eqref{eq:cond_bound} can be replaced by any $\cR^*$ satisfying $\cR^\kn \subseteq \cR^* \subseteq \cR^\ckn$ to obtain an even more powerful procedure. In particular, we could use $\cR^* = \cR^\ckn$ and apply the calibration scheme recursively; this would be an example of {\em recursive refinement} as proposed in \citet{fithian2022conditional}. However, the computational cost of recursive refinement may be prohibitive since $\cR^{\ckn}$, which is already a computationally intensive method, becomes part of the integrand. 

A computationally feasible alternative is for $\cR^*$ to augment $\cR^{\kn}$ only with a set of very promising variables whose inclusion in $\cR^{\ckn}$ can be quickly verified. Informally, we use
\[ \cR^* \,=\, \cR^\kn \cup \left\{j:\; E_j(T_j; S_j) \leq 0, \text{ and } p_j \text{ is tiny}\right\} \;\subseteq\; \cR^{\ckn}, \]
where $p_j$ is the $p$-value from the standard two-sided $t$-test, a computationally cheap substitute for $T_j$. We defer our exact formulation of $\cR^*$ and the implementation in fixed-X cKnockoff to \App~\ref{app:R_star}; we do not implement $\cR^*$ in the model-X setting due to computational complexity.

Using $\cR^*$ leads to the {\em refined calibrated knockoff} (cKnockoff$^*$) procedure rejecting
\begin{equation} \label{eq:ckn_star}
\cR^{\ckn^*} \,=\, \cR^\kn \cup \left\{j:\; T_j \geq \hc_j^*\right\} \,=\, 
\cR^\kn \cup \left\{j:\; E_j^*(T_j; S_j) \leq 0\right\},
\end{equation}
where
\begin{equation}\label{eq:Ej_star}
E_j^*(c\,; S_j) \,=\,\EE_{H_j} \br{\frac{\one \set{j \in \cR^\kn} \mam \one \set{T_j \geq c}}{\left| \cR^* \cup \{j\}\right|} - b_j \;\Big|\; S_j} \,\leq\, E_j(c\,; S_j),
\end{equation}
and $\hc_j^* = \min \set{c:\; E_j^*(c\,; S_j) \leq 0} \leq \hc_j$.

cKnockoff$^*$ controls FDR and is uniformly more powerful than cKnockoff, as we show in \App~\ref{app:R_star}. However, as a price of handling its additional computational complexity, we will lose the theoretical upper bound of the numerical error in our implementation of cKnockoff$^*$, although simulation studies show the calculation is precise and reliable.

\section{Implementation}
\label{sec:implement}


One straightforward, if computationally costly, implementation of our method would use simple Monte-Carlo sampling to calculate $E_j$ to high precision for every variable not rejected by knockoffs. In practice, we can implement the method much more efficiently using two strategies: first, we filter the non-rejected hypotheses, only carrying out the fallback test on the more promising ones; and second, in place of $E_j$ we instead evaluate a computationally cheaper upper bound $\tE_j \geq E_j$ and reject $H_j$ when $\tE_j \leq 0$.

In fixed-X settings, we leverage the Gaussian linear model and its simple sufficient statistics to implement these strategies. Figure \ref{fig:algorithm} illustrates the resulting procedure, which we explain in steps next. Since Theorem~\ref{thm:sandwich} guarantees FDR control for any method $\cR$ with $\cR^{\kn} \subseteq \cR \subseteq \cR^{\ckn}$, these speedups preserve theoretical FDR control, and we find in practice that they have little impact on the power. 
In model-X settings, implementation of the strategy depends on specific model assumptions, and the random nature of the knockoff-generating process further complicates the implementation. We use straightforward Monte-Carlo sampling for the model-X cKnockoff as an illustration.

\begin{figure}[!tb]
    \centering
    \includegraphics[width=\linewidth]{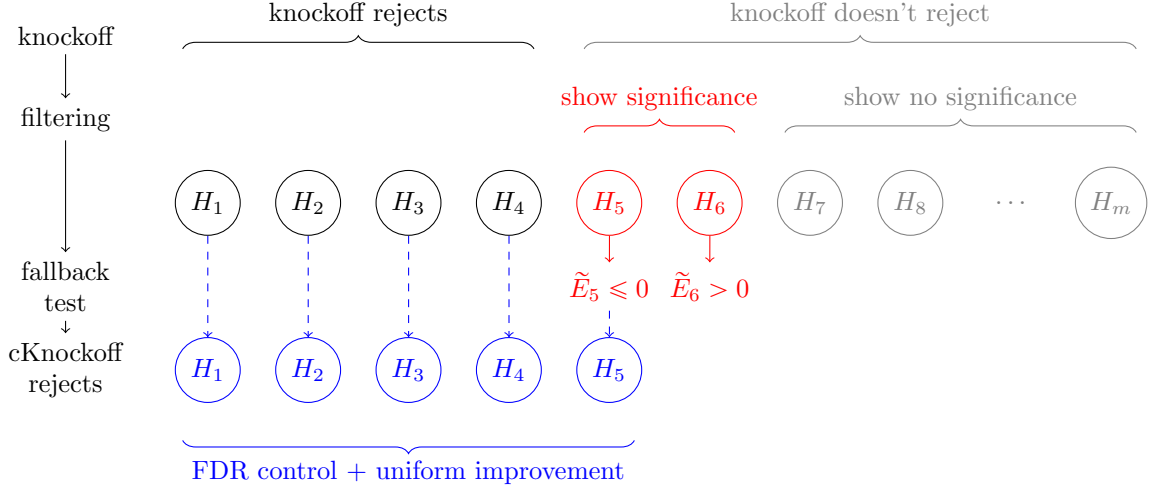}
    \caption{Steps in implementing cKnockoffs.}
    \label{fig:algorithm}
\end{figure}

\emph{Step 1. filtering.} 
Our filtering process aims to save computational resources by carrying out fallback tests only on those hypotheses with a relatively good chance of being rejected by our fallback test, which compares a (modified) $t$-statistic to a threshold whose value is closely related to the FDR budget $b_j$. In practice, the largest budgets are typically allocated to non-rejected variables with large $|W_j|$ values, so a variable is most promising when it {\em either} has a very large $t$-statistic {\em or} has a large $|W_j|$ value. Our implementation, detailed in \App~\ref{app:filtering}, selects variables with either of these properties.

\emph{Step 2. fallback test: finding upper-bound for $E_j$.}
A key obstacle in efficiently determine the sign of conditional expectation $E_j$ is that the integrand may be zero over most of the support of the conditional distribution. To speed up Monte Carlo sampling, we express the integrated, denoted as $f_j$, as a difference of two non-negative functions $f_j^{(T)} - f_j^{(b)}$, where the support of $f_j^{(T)}$ is small and easy to calculate, and $f_j^{(b)}$ can be truncated to a small and easy-to-calculate approximation of its own support. Focusing our Monte Carlo samples on the union of these two supports gives a more computationally efficient method. Since the truncated function $\tilde{f}_j^{(b)}$ is uniformly smaller than $f_j^{(b)}$, we have
\[
\tilde{f}_j = f_j^{(T)} - \tilde{f}_j^{(b)} \geq f_j,
\]
and its integral $\tE_j$ is correspondingly larger than the original $E_j$. We implement the fallback test conservatively by rejecting when $\tE_j \leq 0$; see \App~\ref{app:implement} for further details.

\emph{Step 3. fallback test: safe early stopping Monte-Carlo.}
We use Monte-Carlo methods to compute $\tE_j$ and treat checking $\tE_j \leq 0$ as an online testing problem. 
Using \citet{waudby2020estimating}'s method, we construct shrinking confidence intervals for $\tE_j$, stopping when the interval excludes 0. 



We reiterate here that neither the filtering nor the conservative fallback test on $\tE_j$ threatens our theoretical FDR control guarantee. Though Monte-Carlo error could possibly inflate FDR, this inflation is bounded and we found it negligible in our simulation studies. We discuss the impact of Monte-Carlo error further in \App~\ref{app:error_control}.

\section{Numerical studies in fixed-X settings}
\label{sec:simu}

In this section, we provide selected experiments that compare fixed-X cKnockoff with competing procedures. Extensions of these simulations under other settings can be found in \App~\ref{app:simu}.

\subsection{FDR and TPR}
\label{sec:simu_main}

We show simulations on the following design matrices $\mat X \in \RR^{n \times m}$ with $m = 1000$ and $n = 3000$.
\begin{enumerate}
    \item \textbf{IID normal}: $X_{ij} \simiid \cN(0, 1)$.
    \item \textbf{MCC}: the setting in Example~\ref{ex:mcc-block} with $G = 1000$ and $K = 1$.
    \item \textbf{MCC-Block}: the setting in Example~\ref{ex:mcc-block} with $G = 5$ and $K = 200$.
\end{enumerate}

The response vector is generated as:
\[ \vct y = \mat X \vct \beta + \vct \ep, \quad \vct \ep \sim \cN(\vct 0, \mat I_{n}),\]
where $\beta_j = \beta^*$ for all $j \in \cH_0^\setcomp$. The signal strength $\beta^*$ is calibrated to ensure the BH procedure $\cR^{\BH}$, with nominal FDR level $\alpha = 0.2$, achieves a power of $\TPR = 0.5$ under the given design matrix, yielding moderate signal strength. The alternative hypotheses set $\cH_0^\setcomp$ is a random subset of $[m]$ with cardinality $m_1 = 10$, uniformly distributed among all such subsets.

Our experiments compare the following procedures:
\begin{enumerate}
    \item \textbf{BH}: The Benjamini-Hochberg procedure \citep{benjamini1995controlling}. Lacks provable FDR control for our design matrix settings.
    \item \textbf{dBH}: The dependence-adjusted Benjamini-Hochberg procedure introduced by \cite{fithian2022conditional}. We use $\gamma = 0.9$ without recursive refinement. It is similar to BH but with provable FDR control in this context.
    \item \textbf{knockoff}:Fixed-$X$ knockoff method \citep{barber15}.
    \item \textbf{BonBH}: Adaptive Bonferroni-BH method \citep{sarkar2021adjusting}.
    \item \textbf{cKnockoff}: Our method as defined in Section~\ref{sec:cknockoff}.
    \item \textbf{cKnockoff*}: Our refined method using $\cR^*$ as introduced in Section~\ref{subsec:refine}.
\end{enumerate}
For knockoff, cKnockoff, and cKnockoff*, we construct the knockoff matrix using the default semidefinite programming procedure and employ the LCD-T feature statistics.

For each trial, we generate $\mat X$ (or its realization if random), $\vct \beta$, and $\vct y$, then apply all aforementioned procedures. We estimate the FDR and TPR for each procedure by averaging results over 400 independent trials. Figure \ref{fig:main_expr_10} presents these results.

\begin{figure}[!tb]
    \centering
    \includegraphics[width = 0.95\linewidth]{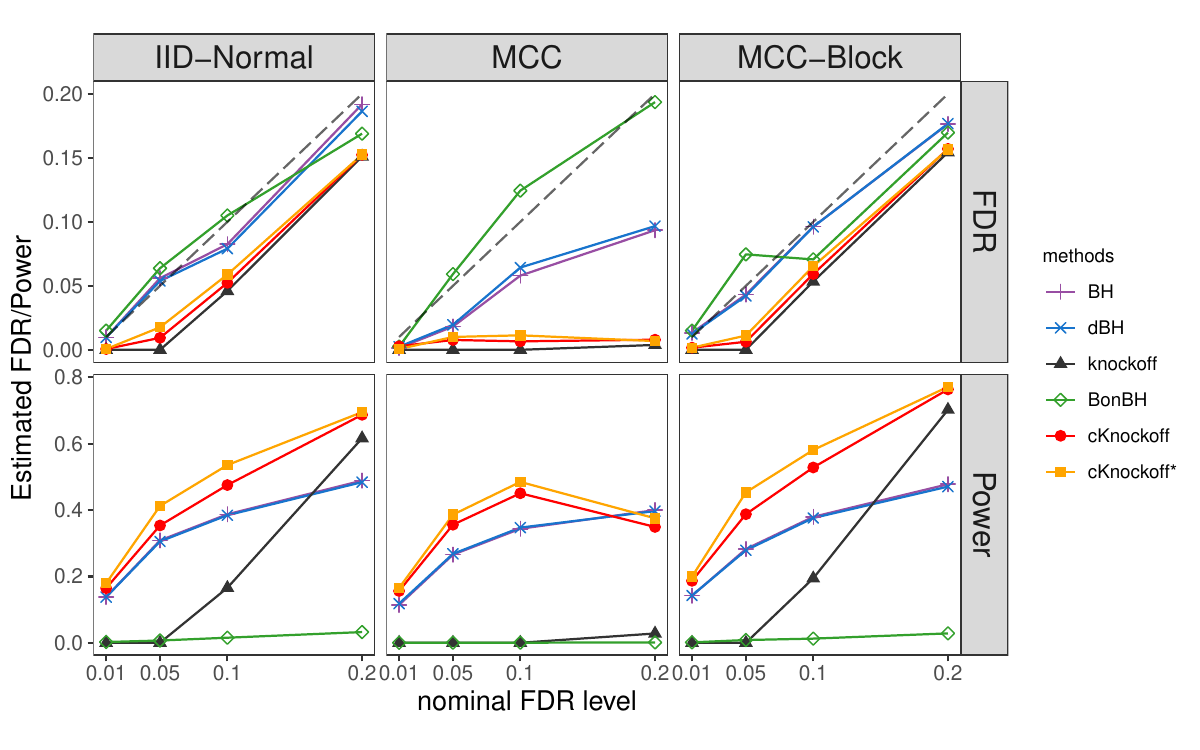}
    \caption{Estimated FDR and TPR under different design matrix settings, with the additional BonBH method of \citet{sarkar2021adjusting} included for comparison. cKnockoff/cKnockoff$^*$ outperforms the other procedures in general.}
    \label{fig:main_expr_10}
\end{figure}

Our observations confirm that cKnockoff controls FDR and outperforms knockoff, aligning with our theoretical predictions. Notably, when knockoff encounters the threshold phenomenon (at small $\alpha$) or the whiteout phenomenon (in MCC problems), cKnockoff and cKnockoff* achieve comparable or even superior correct rejections to BH/dBH on average. Moreover, in scenarios where knockoff performs well, cKnockoff/cKnockoff* demonstrates even better performance.

The readers might be puzzled by the non-monotone power curve for cKnockoff in the MCC case. This is mainly driven by the shrinking advantage of cKnockoff over knockoffs as $\alpha$ increases. The power gain of cKnockoff over knockoffs is mostly given by the realizations for which $\tau_1 < \tau$ (i.e., $\cR^\kn = \emptyset$) and hence $b_j$ is substantially larger than $b_j^0$. This event happens less likely with a larger $\alpha$. Furthermore, when the signal-to-noise ratio is large to the extent that the ordering of knockoff statistics is relatively stable, only the top $O(1/\alpha)$ variables could gain an extra budget, thus limiting the power boost. This heuristic analysis also suggests that cKnockoff/cKnockoff$^*$ only alleviates the whiteout phenomenon to a limited extent because knockoffs, the baseline procedure that cKnockoff wraps around, suffers even when $m_1 \gg 1/\alpha$.
See \App~\ref{app:less_sparse} for a numerical study. 

Multiple knockoffs are not included in this comparison due to their requirement for an aspect ratio larger than 3. However, we present a comparison with multiple knockoffs under different problem settings in \App~\ref{app:multi_kn}. To summarize those results: when $m_1 < 1/\alpha$, multiple knockoffs mitigate the threshold phenomenon but are outperformed by cKnockoff/cKnockoff*. When $m_1 \gg 1/\alpha$, multiple knockoffs prove less powerful than even vanilla knockoffs. Consequently, despite requiring stronger conditions on sample size, multiple knockoffs do not demonstrate competitiveness with our method.

Figure~\ref{fig:power_gain} illustrates the TPR difference between cKnockoff and knockoff at a nominal FDR level of $\alpha = 0.05$ across various design matrices, sparsity levels, and signal strengths. The simulation settings mirror those in Section \ref{sec:simu_main}, with variations in the number of non-null variables $m_1$ and signal strength $\beta^*$. Our method's superiority is most pronounced in sparse problems with strong signal strength, though significant power gains persist even in less favorable conditions. Importantly, Theorem \ref{thm:strict_better} guarantees that this power gain is always positive, regardless of problem sparsity, design matrix characteristics, or the significance of non-null variables.
\begin{figure}[!tb]
    \centering
    \includegraphics[width = 0.95\linewidth]{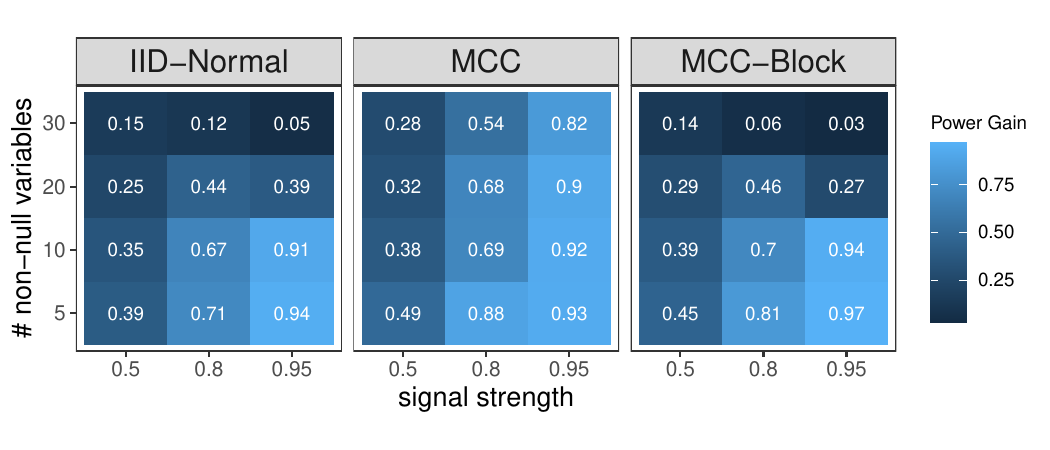}
    \caption{The TPR of cKnockoff minus the TPR of knockoffs in various design matrices, sparsity, and signal strength settings, with $m = 1000$ variables. The three columns each represent a type of design matrix $\mat X$ as described in Section \ref{sec:simu_main}. The $y$ axis is the number of non-null variables, with a larger value representing a less sparse problem. The $x$ axis represents the TPR of the BH procedure if we run it with nominal FDR level $\alpha = 0.2$. A larger value indicates the non-null variables are  more significant.}
    \label{fig:power_gain}
\end{figure}

\subsection{Robustness to non-Gaussian noise}
\label{sec:simu_robust}

To assess our methods' robustness to non-Gaussian outcomes, we examine scenarios with heavy-tailed distributions. We generate $\vct y = \mat X \vct \beta + \vct \ep$, where $\ep_i \simiid t_k$ follows a $t$-distribution with $k$ degrees of freedom. As $k$ decreases, the distribution becomes more heavy-tailed. We set $\alpha = 0.2$, maintaining other simulation settings from Section \ref{sec:simu_main}.
Figure \ref{fig:robust_tNoise} illustrates the FDR of various procedures as the noise distribution deviates further from Gaussian. Notably, our methods demonstrate robustness across all considered cases, consistently maintaining FDR below $\alpha$.

\begin{figure}[!tb]
    \centering
    \includegraphics[width = 0.95\linewidth]{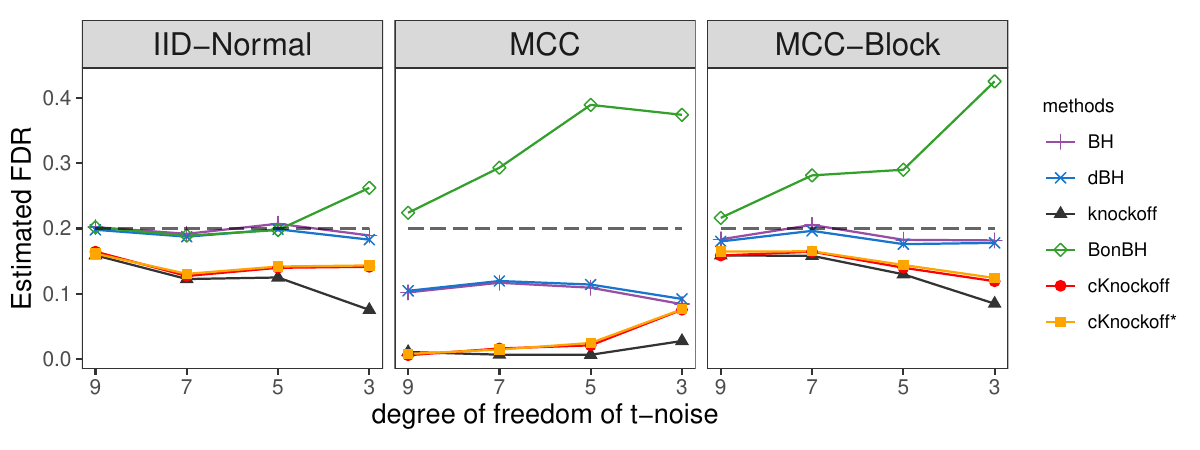}
    \caption{Estimated FDR when the noise is generated from a heavy-tailed $t$-distribution.}
    \label{fig:robust_tNoise}
\end{figure}

\subsection{Scalability}
\label{sec:simu_scale}

The computational complexity of our methods is instance-specific, rendering worst-case analysis uninformative. However, we offer an approximate instance-specific complexity analysis. For each variable examined post-filtering (described in \App~\ref{app:R_star}), the computation is roughly equivalent to running a bounded number of knockoffs with a given knockoff matrix (details in Appendices \ref{app:implement}-\ref{app:R_star}). Let $A$ denote the number of variables post-filtering, $C_{\kn, \mathrm{f}}$ the complexity of knockoffs with a given matrix, and $C_{\kn, \mathrm{m}}$ the complexity of generating a knockoff matrix. Our methods' complexity is thus:
\[O(A \cdot C_{\kn, \mathrm{f}}) + C_{\kn, \mathrm{m}},\]
In contrast, knockoffs' complexity is $O(C_{\kn, \mathrm{f}}) + C_{\kn, \mathrm{m}}$.
Notably, when $A = O(1)$, our method's complexity matches that of knockoffs within a constant factor. Often, $A$ is small because our method only examines a handful of promising variables not rejected by knockoffs: when knockoffs make little or no rejection, we expect $A = O(1/\alpha)$; with powerful knockoffs, typically few promising variables remain unrejected. Furthermore, we can force $A$ to be small by exploiting a more stringent filtering step. 

\begin{figure}[tb]
    \centering
    \includegraphics[width = 0.8\linewidth]{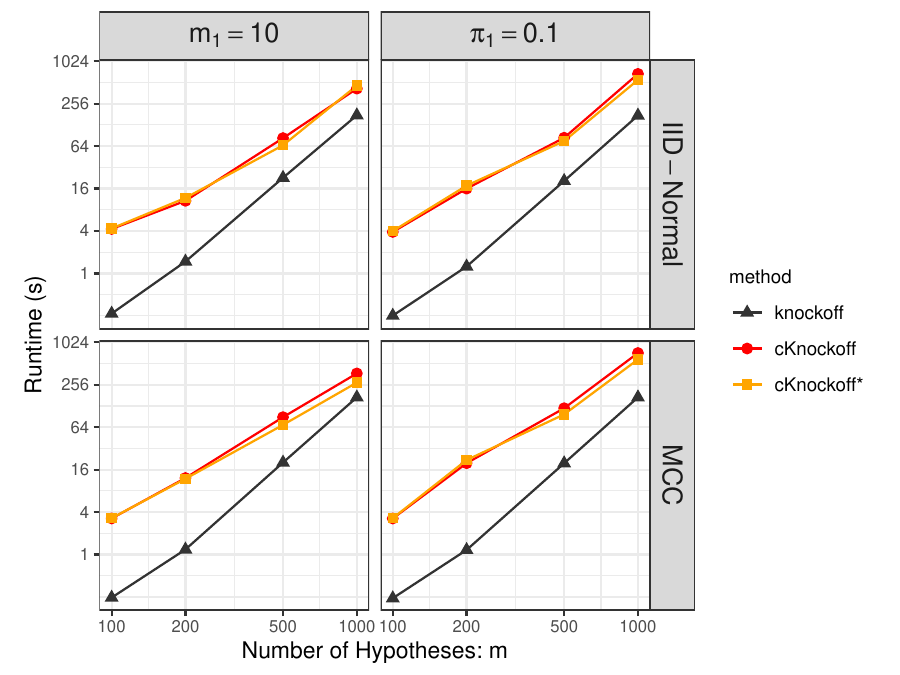}
    \caption{Averaged running time of knockoff, cKnockoff, and cKnockoff$^*$ as the problem size increases. The left panel has a fixed number of true alternatives $=10$ as $m$ increases; the right panel has a fixed proportion of true alternatives $=0.1$. All three methods show a roughly $O(m^3)$ time complexity in the figure. }
    \label{fig:simu-scale_m_expr}
\end{figure}

Figure \ref{fig:simu-scale_m_expr} illustrates the average running times of knockoffs and cKnockoff on a single-core 3.6GHz CPU. The experimental setup mirrors Section~\ref{sec:simu_main} for signal strength, knockoff matrix construction, and feature statistics generation. We set $\alpha = 0.05$ and $n = 3m$, with $m$ ranging from 100 to 1000. The left panel maintains $10$ true alternatives as $m$ increases, while the right panel keeps a constant proportion of true alternatives ($\pi_1 \coloneqq m_1/m = 0.1$).
Consistent with our heuristic complexity analysis, cKnockoff/cKnockoff* computation times are a small multiple of knockoffs across all settings, even on a single core. With multi-core systems, we can further optimize performance by parallelizing the fallback test among different variables. Our \texttt{R} package provides a parallelized implementation.



\section{Numerical demonstration of Model-X cKnockoffs}
\label{sec:simu-modelX}

As discussed in Section~\ref{sec:implement}, efficient implementation of model-X cKnockoff depends on specific model assumptions. For demonstration purposes, we implement model-X cKnockoffs using straightforward Monte-Carlo sampling. This section illustrates the model-X cKnockoff procedure on two smaller-scale, simple problems.

In our simulation, the covariate distribution, denoted as $F_X$, is zero-mean multivariate Gaussian. We consider three types of covariate distributions, mirroring Section \ref{sec:simu_main}: IID normal, MCC, and MCC-Block, where the covariance matrices are the same as its fixed-X counterpart. For example, the ``IID normal'' covariates distribution has $F_X =_d \cN(\vct 0, \mat I_p)$ and the ``MCC'' covariates has $F_X =_d \cN(\vct 0, \mat \Sigma)$, with
\[ 
\mat \Sigma = 
\begin{bmatrix}
1 & - \frac 1m & \ldots & - \frac 1m \\
- \frac 1m & 1 & \ldots & - \frac 1m \\
\vdots & \vdots & \ddots & \vdots \\
- \frac 1m & - \frac 1m & \ldots & 1 \\
\end{bmatrix} \in \RR^{m \times m}.
\]
Unless otherwise specified, other simulation settings remain consistent with Section \ref{sec:simu_main}.

\subsection{high-dimensional linear regression}

We generate the response variable $y$ using the Gaussian linear model
\[ y = \sum_{j = 1}^m X_j \beta_j + \ep, \quad \ep \sim \cN(0, 1).\]
The model includes $m_1 = 5$ non-null variables, with a total of $m=100$ variables exceeding the number of observations $n = 80$. We employ LCD statistics as feature importance statistics, with $\lambda$ determined through cross-validation.
Figure \ref{fig:simu-ModelX_gaussian} demonstrates a significant power improvement of our method over the baseline knockoffs.
\begin{figure}[htb]
    \centering
    \includegraphics[width = 0.95\linewidth]{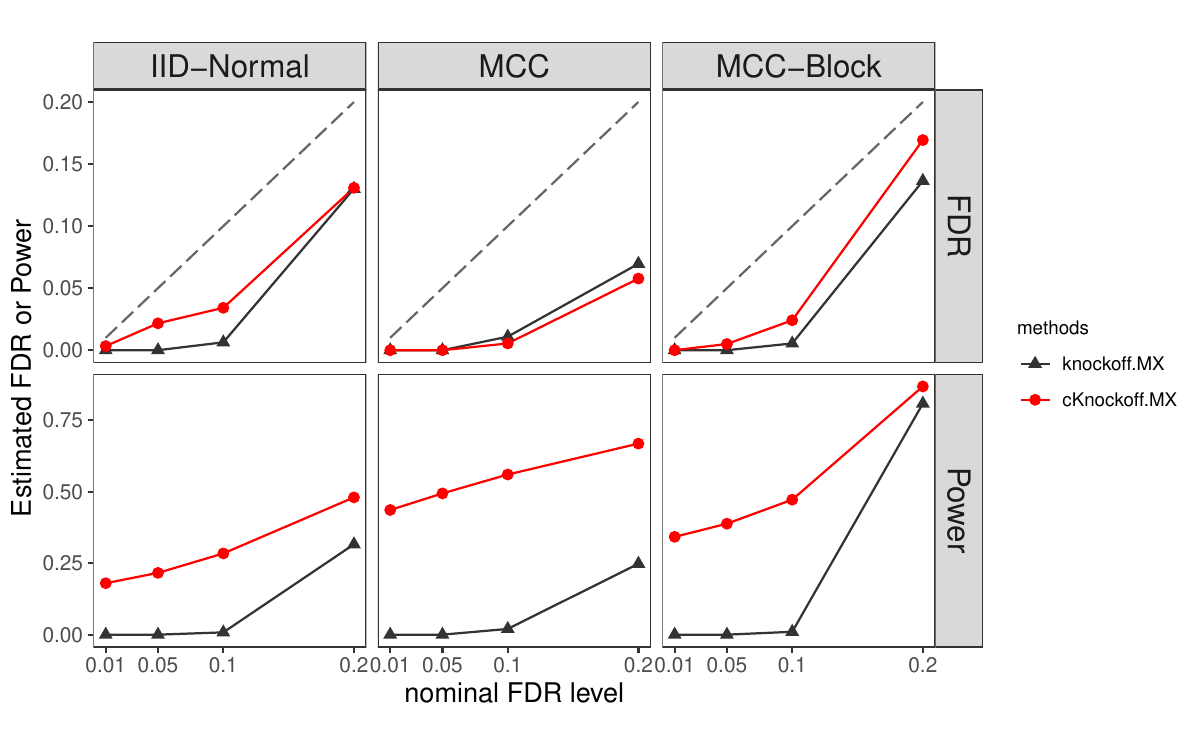}
    \caption{Estimated FDR and TPR under different design matrix settings for model-X knockoffs and cKnockoff in high dimensional linear regression.}
    \label{fig:simu-ModelX_gaussian}
\end{figure}

\subsection{logistic regression}

We consider a logistic regression setting where $y \in \set{0, 1}$ is generated according to:
\[ \ln \pth{\frac{\PP(y = 1)}{1 - \PP(y = 1)}} = \sum_{j = 1}^m X_j \beta_j. \]
The model includes $m_1 = 5$ non-null variables, with $m = 100$ total variables and $n = 300$ observations. We employ a logistic regression version of the LCD statistics as feature importance statistics:
\[ W_j = |\hat\beta_j| - |\hat\beta_{j+m}|, \]
where
\[
\hat{\vct \beta}^\lambda = \underset{\vct \beta \in \RR^{2m}}{\text{argmin}} \;\; \sum_{i = 1}^n y_i \cdot \theta_i - \ln \pth{e^{\theta_i} + 1} + \lambda \cdot \norm{\vct \beta}_1,
\text{ with }
\theta_i = \sum_{j = 1}^{m} \mat X_{ij} \beta_j + \tX_{ij} \beta_{j+m}
\]
is the $L_1$ penalized logistic regression coefficients on the augmented covariates $[\vct X, \tX]$. The regularity parameter $\lambda$ is determined through cross-validation.
Figure \ref{fig:simu-ModelX_logistic} demonstrates a significant power improvement of our method over the baseline knockoffs.
\begin{figure}[htb]
    \centering
    \includegraphics[width = 0.95\linewidth]{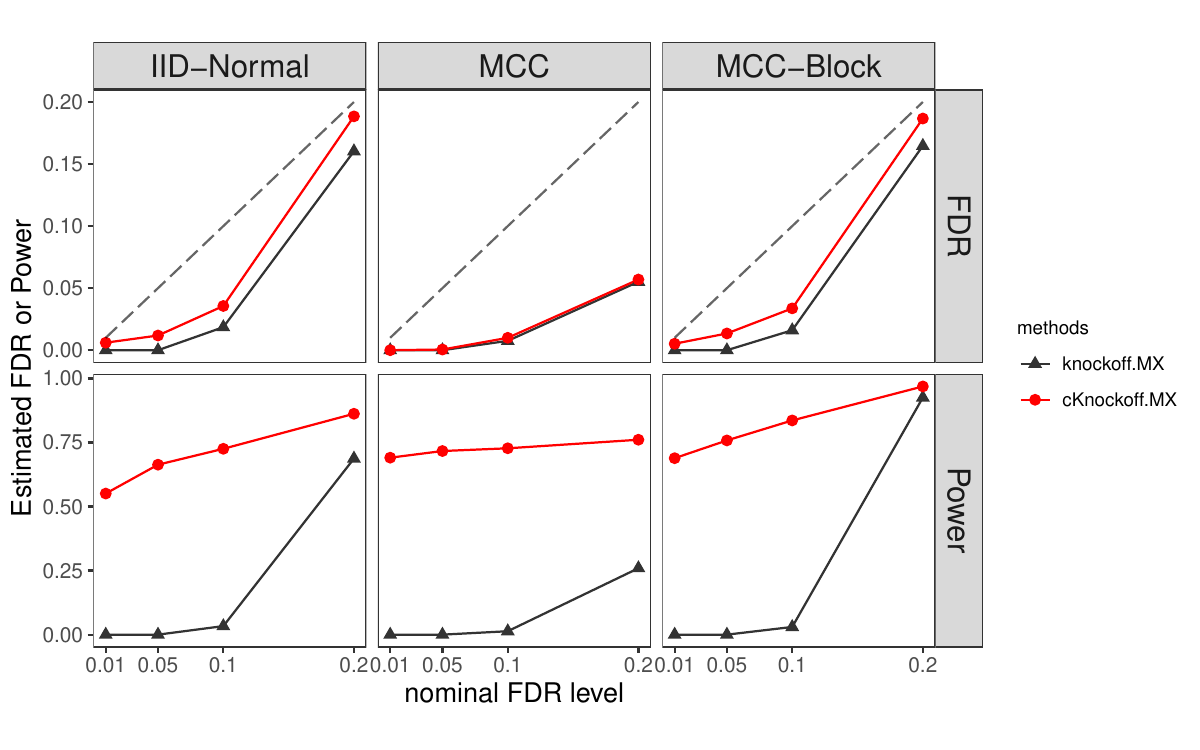}
    \caption{Estimated FDR and TPR under different design matrix settings for model-X knockoffs and cKnockoff in logistic regression.}
    \label{fig:simu-ModelX_logistic}
\end{figure}

\section{HIV drug resistance data}\label{sec:hiv}

This section applies fixed-X cKnockoff and cKnockoff* to identify mutations in the Human Immunodeficiency Virus (HIV) associated with drug resistance \citep{rhee2006genotypic}, following the analysis in \cite{barber15} and \cite{fithian2022conditional}.

The dataset encompasses experimental results for 16 drugs across three categories: protease inhibitors (PIs), nucleoside reverse transcriptase inhibitors (NRTIs), and nonnucleoside reverse transcriptase inhibitors (NNRTIs). Each experiment provides data on genetic mutations and drug resistance measures for a sample of HIV patients.
Following \cite{barber15}, we construct the design matrix using one-hot encoding for mutations, excluding an intercept term. We preprocess the data by discarding mutations occurring fewer than three times and removing duplicated columns. The 16 drug experiments vary in size, with most having sample sizes $n$ between 600 and 800, and genetic mutation counts $m$ between 200 and 300.
Given the absence of ground truth, we assess replicability by comparing selected mutations to those identified in an independent treatment-selected mutation (TSM) panel from \cite{rhee2006genotypic}, as detailed in Section 4 of \cite{barber15}.
For each dataset, we compare the performance of BH, knockoffs, cKnockoff, and cKnockoff* as described in Section \ref{sec:simu_main}.

Figure \ref{fig:HIV-0.05} presents results with $\alpha = 0.05$. The horizontal dashed line indicates the total number of mutations in the TSM list, encompassing scenarios where the number of non-null variables is above, approximately equal to, or below $1 / \alpha$. Knockoffs consistently underperforms, making almost no rejections across all cases. In contrast, our cKnockoff and cKnockoff$^*$ procedures save knockoffs and make a decent number of discoveries. 

\begin{figure}[!tb]
    \centering
    \includegraphics[width = 0.75\linewidth]{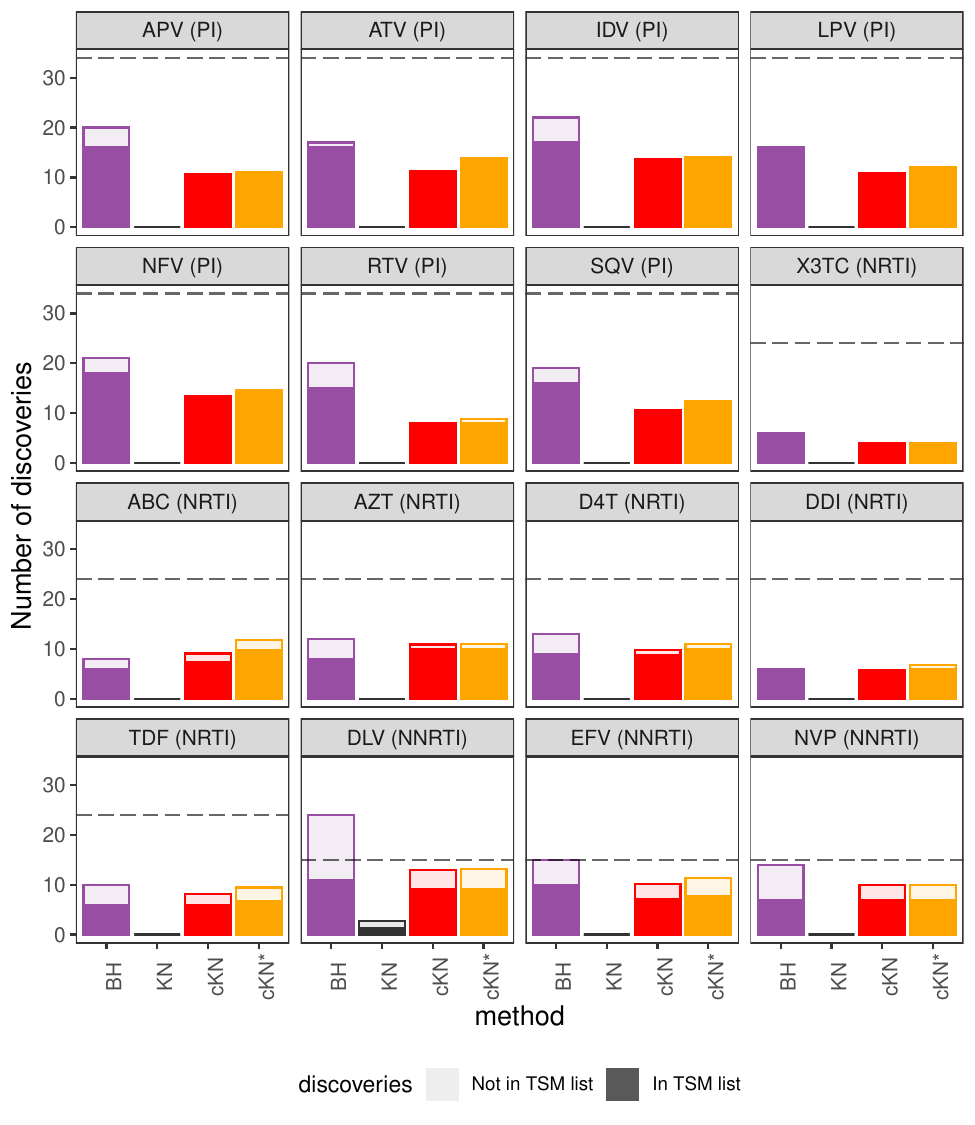}
    \caption{Results on the HIV drug resistance data with $\alpha = 0.05$. The darker segments represent the number of discoveries that were replicated in the TSM panel, while the lighter segments represent the number that was not. The horizontal dashed line indicates the total number of mutations appearing in the TSM list. Results are shown for the BH, fixed-X knockoffs, cKnockoff and cKnockoff$^*$.}
    \label{fig:HIV-0.05}
\end{figure}

Additional results for $\alpha = 0.2$ is included in \App~\ref{app:hiv}, where knockoffs is not powerless in most problems and our methods are able to make more discoveries.

\section{Summary}\label{sec:discussion}


We have presented a new approach, the calibrated knockoff procedure, for simultaneously testing if the explanatory variables are relevant to the outcome. Our cKnockoff procedure controls FDR and is uniformly more powerful than the knockoff procedure. And the power gain is especially large when the unknown $\vct \beta$ vector is very sparse, in particular when the number of nonzeros in $\vct \beta$ is not much larger than $1/\alpha$. While our new approach is more computationally intensive in principle, we introduce computational tricks that accelerate the method substantially without sacrificing FDR control in theory. Our implementation of cKnockoff turns out to be quite efficient in our numerical experiments in the sense that the computation time is only a small multiple of that of knockoffs, and it can be further accelerated by parallelization.

\section*{Reproducibility}
Calibrated knockoffs are implemented in an \texttt{R} package available online at the Github repository \texttt{https://github.com/yixiangLuo/cknockoff}. And the \texttt{R} code to reproduce all simulations and figures in this paper can be found at \texttt{https://github.com/yixiangLuo/cknockoff\_expr}.

\section*{Acknowledgments}

William Fithian and Yixiang Luo were partially supported by National Science Foundation grant DMS-1916220, and a Hellman Fellowship from UC Berkeley.


\bibliographystyle{imsart-nameyear} 
\bibliography{biblio.bib}       


\appendix

\section{Further details on knockoffs}\label{app:knockoff}

\subsection{Constructing knockoff variables and feature importance statistics}
\label{app:kn-variables}

This section briefly reviews the construction of knockoff variables and feature importance statistics.

In the fixed-X setting, with $n \geq 2m$ \citep{barber15}, the knockoff matrix $\tX = (\tX_1, \ldots, \tX_m) \in \RR^{n \times m}$ must satisfy:
\begin{equation}\label{eq:knockoffs-cond}
\tX^\tran \tX = \mat X^\tran \mat X, \quad \text{ and }\;\; \mat X^\tran \tX = \mat X^\tran \mat X - \mat D,
\end{equation}
for some diagonal matrix $\mat D$. We find $\mat D$ by optimizing a certain objective function under the constraint $\mat D \succeq 0$ and $\mat D \preceq 2 \mat X^\tran \mat X$ \citep{barber15, spector2022powerful}.
The model-$X$ knockoff variables $\tX=(\tX_1, \ldots, \tX_p)$ must be a family of random variables satisfying 
 \[ (\mat X, \tX)_{\operatorname{swap}(\cS)} \stackrel{d}{=}(\mat X, \tX) \]
for any subset $\cS \subseteq [m]$, where the vector $(\mat X, \tX)_{\text {swap }(\cS)}$ is obtained from $(\mat X, \tX)$ by swapping the entries $\vct X_j$ and $\tX_j$ for each $j \in \cS$ \citep{candes2018panning}.

The feature importance statistic $W_j$ must fulfill two conditions:
\begin{enumerate}
\item Sufficiency: In the fixed-X settings, $W_j$ must be a function of $\mat X_+^\tran \mat X_+$ and $\mat X_+^\tran \vct y$. In model-X, $W_j$ must be a function of $\mat X_+$ and $\vct y$.
\item Anti-symmetry: $W_j$ must have equal magnitude but opposite sign when $\vct X_j$ and $\tX_j$ are swapped:
\[ 
W_j([\mat X, \mat \tX]_{\operatorname{swap}(\cS)}, y) =
\begin{cases}
     W_j([\mat X, \mat \tX], \vct y), & j \notin \cS \\ -W_j([\mat X, \mat \tX], \vct y), & j \in \cS
\end{cases}.
\]
\end{enumerate}

\subsection{Deferred proofs}
\label{app:kn-proofs}

Here, for the sake of completeness, we prove several results that appeared first in \citet{barber15} and other works.

\begin{thm} \label{thm:signW}
    Suppose the feature statistics $\vct W = (W_1, W_2, \ldots, W_m)$ satisfy the sufficiency condition and the anti-symmetry condition. Then
    \[ \sgn(W_j) \,\mid\, |W_j|,\, \vct W_{-j} \;\stackrel{H_j}{\sim}\; \textnormal{Unif}\{-1,+1\} \]
    for any $j \in \cH_0$.
\end{thm}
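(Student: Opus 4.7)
The plan is to prove that, under $H_j$ with $j\in\cH_0$, the joint distribution of $\vct W$ is invariant under the map that flips the sign of the $j$-th coordinate while leaving the others fixed. This symmetry immediately forces $\sgn(W_j)\mid |W_j|,\vct W_{-j}$ to be uniform on $\{-1,+1\}$, since the joint law of $(|W_j|,\vct W_{-j},\sgn(W_j))$ equals the joint law of $(|W_j|,\vct W_{-j},-\sgn(W_j))$.

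To produce this symmetry, let $P_j\in\RR^{2m\times 2m}$ be the permutation matrix that swaps coordinates $j$ and $j+m$, and set $\mat X_+^{(j)} = \mat X_+ P_j$, so that $\mat X_+^{(j)}$ is obtained from $\mat X_+ = (\mat X,\tX)$ by interchanging the columns $\vct X_j$ and $\tX_j$. The first step is to verify
\[
P_j \mat X_+^\tran \mat X_+ P_j \;=\; \mat X_+^\tran \mat X_+,
\]
by checking the four relevant types of entries. The diagonal identity uses $\|\tX_j\|_2^2 = \|\vct X_j\|_2^2$ (from $\tX^\tran\tX = \mat X^\tran\mat X$); the $(j,j+m)$ entry is symmetric because $\vct X_j^\tran\tX_j = \|\vct X_j\|_2^2 - D_{jj}$ equals its transpose; and the off-column entries use $\vct X_k^\tran \tX_j = \vct X_k^\tran\vct X_j$ and $\tX_k^\tran\tX_j = \vct X_k^\tran\vct X_j$ for $k\ne j$, which follow directly from the two knockoff identities in \eqref{eq:knockoffs-cond}.

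The second step is to show $\mat X_+^\tran\vct y \eqd P_j \mat X_+^\tran\vct y$ under $H_j$. Since $\mat X_+^\tran\vct y\sim\cN(\mat X_+^\tran\mat X\vct\beta,\sigma^2\mat X_+^\tran\mat X_+)$ and the covariance is $P_j$-invariant by the previous step, it remains to check that the mean vector is fixed by $P_j$. Only the $j$-th and $(j+m)$-th entries could differ; their difference is $(\vct X_j - \tX_j)^\tran \mat X\vct\beta = \sum_{k=1}^{m} (\vct X_j - \tX_j)^\tran \vct X_k\,\beta_k$, and each term vanishes: for $k\ne j$ by $\vct X_k^\tran\tX_j = \vct X_k^\tran\vct X_j$, and for $k=j$ by $\beta_j=0$. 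Therefore $(\mat X_+^\tran\mat X_+,\mat X_+^\tran\vct y) \eqd ((\mat X_+^{(j)})^\tran\mat X_+^{(j)},(\mat X_+^{(j)})^\tran\vct y)$.

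The final step invokes the two assumed properties of the feature map. By the sufficiency condition, $\vct W$ is a deterministic function of $(\mat X_+^\tran\mat X_+,\mat X_+^\tran\vct y)$, so applying $\vct W$ to the swapped inputs gives an equal-in-distribution random vector under $H_j$. By the anti-symmetry condition, that swapped output is exactly $(W_1,\ldots,W_{j-1},-W_j,W_{j+1},\ldots,W_m)$. Combining these, $\vct W$ has the same distribution as its $j$-th-coordinate sign flip, which is the desired conditional Rademacher property. There is no genuinely hard step; the only bookkeeping to be careful about is the block-by-block verification in step one, and the fact that step two requires $\beta_j=0$ but imposes no constraint on the other $\beta_k$, so the argument works for any true parameter value outside the null coordinate.
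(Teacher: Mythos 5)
Your proposal is correct and follows essentially the same route as the paper's proof: establish that swapping $\vct X_j$ with $\tX_j$ leaves $\mat X_+^\tran\mat X_+$ unchanged and $\mat X_+^\tran\vct y$ unchanged in distribution under $H_j$ (using $\beta_j=0$), then apply the sufficiency and anti-symmetry conditions to conclude $\vct W$ is distributionally invariant under flipping the sign of its $j$-th coordinate. The only cosmetic difference is your use of the permutation matrix $P_j$ and the explicit entrywise verification, which the paper leaves implicit.
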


\begin{proof}
It suffices to show for a given arbitrary $j \in [m]$,
\[ \vct W \eqd (W_1, \ldots, W_{j-1}, -W_j, W_{j+1}, \ldots, W_m)\triangleq \vct W_{j}^{-}. \]
The sufficiency condition allows us to write
\[ \vct W = g(\mat X_+^\tran \mat X_+, \mat X_+^\tran \vct y) \]
for some function $g$, where $\vct X_+$ is the augmented matrix defined in Section \ref{subsec:knockoff}. The anti-symmetry condition implies
\[ \vct W^- = g((\mat X^{\text{swap}}_+)^\tran \mat X^{\text{swap}}_+, (\mat X^{\text{swap}}_+)^\tran \vct y), \]
where $\mat X_+^{\text{swap}}$ is obtained by swapping $\vct X_j$ and $\tX_j$ in $\vct X_+$, i.e.
\[ 
(\mat X^{\text{swap}}_+)_j = (\mat X_+)_{j+m}, \quad
(\mat X^{\text{swap}}_+)_{j+m} = (\mat X_+)_{j}, \quad
(\mat X^{\text{swap}}_+)_{i} = (\mat X_+)_{i} \; \text{ for } i \neq j,\, j+m.
\]
$\vct X_j^\tran \vct X_i = \tX_j^\tran \vct X_i$ for all $i \neq j$ and $\beta_j = 0$,
\[
(\mat X^{\text{swap}}_+)^\tran \mat X^{\text{swap}}_+ = \mat X_+^\tran \mat X_+, \quad
(\mat X^{\text{swap}}_+)^\tran \mat X \beta = \mat X_+^\tran \mat X \vct \beta
\]
Furthermore,
\[
(\mat X^{\text{swap}}_+)^\tran \vct y
\;\eqd\; \cN((\mat X^{\text{swap}}_+)^\tran \mat X \vct \beta, (\mat X^{\text{swap}}_+)^\tran \mat X^{\text{swap}}_+)
\;\eqd\; \cN(\mat X_+^\tran \mat X \vct \beta, \mat X_+^\tran \mat X_+)
\;\eqd\; \mat X_+^\tran \vct y.
\]
Therefore,
\[ \vct W_j^- = g((\mat X^{\text{swap}}_+)^\tran \mat X^{\text{swap}}_+, (\mat X^{\text{swap}}_+)^\tran \vct y) \;\eqd\; g(\mat X_+^\tran \mat X_+, \mat X_+^\tran \vct y) = \vct W. \]
\end{proof}

\begin{thm} \label{thm:supMtg}
    Suppose the feature statistics satisfy the sufficiency condition and the anti-symmetry condition. Then
    \[ M_t \;=\; \frac{|\cC(w_t) \cap \cH_0|}{1+|\cA(w_t) \cap \cH_0|} \]
    is a supermartingale with respect to the filtration 
    \[ \cF_t = \sigma\Big(\,|W|, \;\left(W_j: j \in \cH_0^\setcomp \text{ or } |W_j| < w_t \right), \; |\cC(w_t)|\,\Big). \]
    Moreover, we have $\EE [M_1] \leq 1$.
\end{thm}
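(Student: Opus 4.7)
I would follow the optional-stopping-style argument in \citet{barber15}, combining the Rademacher symmetry from Theorem~\ref{thm:signW} with a careful accounting of the information in $\cF_t$. For $\cF_t$-measurability of $M_t$, observe that $|\cC(w_t)\cap\cH_0| = |\cC(w_t)| - |\cC(w_t)\cap\cH_0^\setcomp|$ is $\cF_t$-measurable since both quantities on the right are (the first by inclusion in $\cF_t$, the second because every non-null $W_j$ is in $\cF_t$); and the denominator equals $1 + N_t - |\cC(w_t)\cap\cH_0|$, where $N_t := |\{j\in\cH_0:\,|W_j|\geq w_t\}|$ is determined by $|\vct W|\in \cF_t$.

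For the one-step inequality $\EE[M_{t+1}\mid\cF_t]\leq M_t$, I would argue that the only information added in $\cF_{t+1}$ relative to $\cF_t$ is the sign of the unique index $j^*$ with $|W_{j^*}| = w_t$. If $j^*\in\cH_0^\setcomp$, that sign is already in $\cF_t$, so $M_{t+1} = M_t$. Otherwise, set $V_t := |\cC(w_t)\cap\cH_0|$; by Theorem~\ref{thm:signW} the null signs are i.i.d.\ Rademacher given $|\vct W|$ and the non-null $W_j$, and conditioning further on $|\cC(w_t)|$ pins down $V_t$ and distributes the $V_t$ positive signs uniformly among the $N_t$ null indices with $|W_j|\geq w_t$. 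Hence $\sgn(W_{j^*}) = +1$ with probability $V_t/N_t$, and a direct computation
\[
\EE[M_{t+1}\mid\cF_t] \;=\; \frac{V_t}{N_t}\cdot\frac{V_t-1}{N_t-V_t+1} + \frac{N_t-V_t}{N_t}\cdot\frac{V_t}{N_t-V_t} \;=\; \frac{V_t}{N_t-V_t+1} \;=\; M_t
\]
handles $V_t < N_t$. At the boundary $V_t = N_t$, the only admissible sign is $+$, giving $M_{t+1} = N_t - 1 < N_t = M_t$, a strict decrease. Thus $M_t$ is a supermartingale.

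For $\EE[M_1]\leq 1$: since $|W_j|\geq w_1$ for every $j$, we have $N_1 = m_0$ and, conditional on $|\vct W|$ and the non-null $W_j$, the count $V_1 = \sum_{j\in\cH_0}\one\{W_j>0\}$ is $\mathrm{Binomial}(m_0, 1/2)$. Using the identity $\binom{m_0}{k}\cdot\tfrac{k}{1+m_0-k} = \binom{m_0}{k-1}$,
\[
\EE\!\left[\frac{V_1}{1+m_0-V_1}\,\Big|\,|\vct W|,\;\{W_j: j\in\cH_0^\setcomp\}\right] \;=\; 2^{-m_0}\sum_{k=1}^{m_0}\binom{m_0}{k-1} \;=\; 1 - 2^{-m_0} \;\leq\; 1,
\]
and the tower property delivers $\EE[M_1]\leq 1$.

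The most delicate step I anticipate is the conditional distribution claim powering the supermartingale calculation: verifying that appending $|\cC(w_t)|$ to the $\sigma$-algebra generated by $|\vct W|$ and the non-null $W_j$ converts the i.i.d.\ Rademacher null signs (restricted to indices with $|W_j|\geq w_t$) into a uniform law over sign patterns with exactly $V_t$ positives among $N_t$ positions, and that this constraint propagates correctly when signs are revealed one at a time along the filtration.
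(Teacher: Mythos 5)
Your proposal is correct and follows essentially the same argument as the paper's proof in Appendix~\ref{app:kn-proofs}: measurability of $M_t$ by expressing the null counts through $|\cC(w_t)|$ and the known non-null statistics, the one-step supermartingale inequality via the conditional probability $V_t/N_t$ that the revealed sign is positive (your $N_t, V_t$ are the paper's $V_t^+ + V_t^-$ and $V_t^+$, and your boundary case $V_t = N_t$ is the paper's $V_t^- = 0$ case), and the same binomial identity for $\EE[M_1] \leq 1$. The only detail the paper checks that you omit is that $\cF_t$ is indeed increasing in $t$, which is routine.
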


\begin{proof}
Without loss of generality, assume $|W_1| < \cdots < |W_m|$. Then $w_t = W_t$ and $|W_j| < w_t$ is equivalent to $j < t$. Let $V^+_t \coloneqq |\cC(w_t) \cap \cH_0|$ and $V^-_t \coloneqq |\cA(w_t) \cap \cH_0|$ for short. Since the non-null feature statistics are known given $\cF_t$, it's easy to see $V^+_t$ and $V^-_t$ are measurable with respect to $\cF_t$. Hence $M_t \in \cF_t$.

To see $\cF_t$ is indeed a filtration, note $\set{W_j: j \in \cH_0^\setcomp \text{ or } |W_j| < w_t}$ is monotone in $t$. Given this set and $|\cC(w_t)|$, we are able to compute $|\cC(w_s)|$ for all $s \leq t$. Therefore, $\cF_t$ is also monotone in $t$, namely, $\cF_t \subseteq \cF_{t+1}$ for all $t$.

It remains to show $\EE[M_{t+1} \mid \cF_t] \leq M_t$. By construction, $M_{t+1} = M_t$ if $t \in \cH_0^\setcomp$. Otherwise
\[
M_{t+1}
= \frac{V^+_t - \one\set{W_t > 0}}{1+V^-_t - (1 - \one\set{W_t > 0})}
= \frac{V^+_t - \one\set{W_t > 0}}{(V^-_t + \one\set{W_t > 0}) \mam 1}.
\]
Theorem \ref{thm:signW} implies that
\[ \sgn(W_j) \,\mid\, |\vct W|,\, \left(W_i: i \in \cH_0^\setcomp \text{ or } i < t \right) \simiid \textnormal{Unif}\{-1,+1\}, \quad \text{for } j \in \cH_0 \text{ and } j \geq t.\]
Hence
\[ \PP(W_t > 0 \mid \cF_t) = \frac{V^+_t}{V^+_t + V^-_t}. \]
Therefore
\begin{equation*}
    \EE[M_{t+1} \mid \cF_t]
    \;=\; \frac{1}{V^{+}_t+V^{-}_t}\left[V^{+}_t \frac{V^{+}_t-1}{V^{-}_t+1}+V^{-}_t \frac{V^{+}_t}{V^{-}_t \vee 1}\right]
    \;=\; \begin{cases}
      \frac{V^{+}_t}{1+V^{-}_t} = M_t, & V^{-}_t>0 \\
      V^{+}_t-1 = M_t - 1, & V^{-}_t=0
    \end{cases}.
\end{equation*}
So $M_t$ is s supermartingale with respect to the filtration $\mathcal{F}_t$.

To show $\EE [M_1] \leq 1$, note that $V^+_1 \mid |\vct W| \sim \text{Binomial}(m_0, 1/2)$. We have
\begin{align*}
    \EE \br{M_1 \mid |\vct W|}
    &\;=\; \EE \br{\frac{V^+_1}{1 + m_0 - V^+_1} \mid |\vct W|}
    \;=\; \sum_{i=1}^{m_0} \PP(V^+_1 = i) \cdot \frac{i}{1 + m_0 - i} \\
    &\;=\; \sum_{i=1}^{m_0} \frac{1}{2^{m_0}} \frac{m_0!}{i!(m_0-i)!} \cdot \frac{i}{1 + m_0 - i}
    \;=\; \sum_{i=1}^{m_0} \frac{1}{2^{m_0}} \frac{m_0!}{(i-1)!(m_0-i+1)!} \\
    &\;=\; \sum_{i=1}^{m_0} \PP(V^+_1 = i-1)
    \leq 1.
\end{align*}
Then $\EE [M_1] = \EE \br{\EE \br{M_1 \mid |\vct W|}} \leq 1$.
\end{proof}

\subsection{LCD with tiebreaker}
\label{app:LCD-T}

We define the {\em LCD with tiebreaker} (LCD-T) statistics as
\begin{equation}\label{eq:LCD-TB}
W_j^{\text{LCD-T}} \,=\, 
\begin{cases}
W_j^{\text{LCD}} + 2 \lambda \,\text{sgn}(W_j^{\text{LCD}}) & \text{if } W_j^{\text{LCD}} \neq 0\\[5pt]
|\vct X_j^\tran \vct r^{\lambda}| - |\tX_j^\tran \vct r^{\lambda}| & \text{otherwise.}
\end{cases}
\end{equation}
Applying the Karush–Kuhn–Tucker (KKT) condition, it is easy to verify that $|W_j^{\text{LCD-T}}| > 2 \lambda$ if and only if $W_j^{\text{LCD}} \neq 0$.

If $n \geq 2m + 1$, then $W$ can also take as input the unbiased variance estimator $\tilde\sigma^2 = \|\vct r^0\|_2^2/(n-2m)$, which is independent of $\mat X_+^\tran \vct y$ \citep{li2021whiteout}. This can help us to select $\lambda$ in \eqref{eq:LCD}; we find that $\lambda = 2\tilde\sigma$ is a practical choice \footnote{
Here $\lambda$ is chosen not for variable selection consistency but for feature importance score. Roughly speaking, setting $\lambda = 2\tilde\sigma$ blocks most (95\%) null variables for orthogonal designs. In our numerical examples, we found that knockoffs is not very sensitive to $\lambda$ when the tiebreaker is employed.
}
\footnote{
\citet{weinstein2020power} studied the effect of $\lambda$ on the power of model-X knockoffs with LCD statistics and propose a cross-validation-based method to find the optimal $\lambda$. However, the method cannot be applied to the fixed-X knockoffs because cross-validation violates the sufficiency principle. There could exist an equally efficient analog in the fixed-X setting. We leave it for future research.
},
where the explanatory variables are standardized to have a unit norm.

\subsection{A faster version of the LSM statistic}
\label{app:fast-LSM}

The LSM statistic defined in \eqref{eq:LSM} is computationally burdensome because it requires calculating the entire lasso path. Even if the great majority of variables are null, they will eventually enter and leave the model in a chaotic process once $\lambda$ becomes small enough. If we stop too early, most variables will never enter, so their feature statistics will be zero and there will be no chance to discover them, but if we stop too late, we will consume most of our computational resources fitting null variables at the end of the path. In practice, the path is also calculated only for a fine grid of $\lambda$ values, which has the added undesirable effect of introducing artificial ties between variables. 

This section introduces a more computationally efficient alternative that uses a coarser grid of $\lambda$ values and also stops the path early, but breaks ties by using variables' correlations using the residuals at each stage. Assume we calculate the lasso fit $\hat{\vct \beta}^{\lambda(\ell)}$ for $\ell=0,\ldots,L$ on a coarse grid defined by:
\[
\max_{1\leq j\leq 2m} \lambda_j^* = \lambda_{\text{max}} = \lambda(0) > \lambda(1) > \cdots > \lambda(L) = \lambda_{\text{min}} = 2\tilde\sigma \mim \frac{\lambda_{\text{max}}}{2}.
\]
We stop the path at $2\tilde\sigma$ because we find it tends to set most null variables' coefficients to zero, we take $\lambda(0),\ldots,\lambda(L)$ to be a decreasing geometric sequence:
\[
\lambda(\ell) = \lambda(0) \cdot \zeta^\ell, \quad \text{ for } \zeta = \left(\frac{\lambda(L)}{\lambda(0)}\right)^{1/L}.
\]
Then for variable $j = 1,\ldots,2m$, define its (discrete) time of entry as
\[
\ell_j^* = \min\left\{\ell \in \{1,\ldots,L\}:\; \hat\beta_j^{\lambda(\ell)} \neq 0\right\},
\]
with $\ell_j^* = L+1$ if the set is empty. To break ties between these discrete values, we use each variable's correlation with the lasso residuals at $\lambda_j^+ = \lambda(\ell_j^*-1)$, the last fit in the discrete path before variable $j$ enters:
\[
\rho_j = \left|(\mat X_+)_j^\tran \vct r^{\lambda_j^+}\right| \leq \lambda_j^+, \quad \text{ where } \vct r^\lambda = \vct y - \mat X_+\hat{\vct \beta}^\lambda.
\]

$\rho_j$ can be considered an estimate of $\lambda_j^*$, since we have
\[
\lambda_j^* = \left|(\mat X_+)_j^\tran r^{\lambda_j^*}\right| \approx \left|(\mat X_+)_j^\tran r^{\lambda_j^+}\right|.
\]
To combine these, we can use any transform that orders variables first by $\lambda_j^+$ and then by $\rho_j$. We use a transform that also aids the accuracy of the local linear regression approximation of $W_j(\vct z)$ on $\vct X_j^\tran \vct z$. Let $\iota$ denote a small positive quantity and $\lambda(L+1) = 0$, and define
\[
\hat\lambda_j \;=\; \max\{\rho_j - \iota,\; \lambda(\ell_j^*)\} + \iota \rho_j / \lambda_j^+ \;\approx\; \max\{\rho_j,\, \lambda(\ell_j^*)\}. 
\]
Finally, we define the {\em coarse LSM} (C-LSM) feature statistics by substituting $\hat\lambda_j$ for $\lambda_j^*$ in \eqref{eq:LSM}:
\[
W_j^{\text{C-LSM}} = (\hat\lambda_j \mam \hat\lambda_{j+m}) \cdot \sgn(\hat\lambda_j - \hat\lambda_{j+m}).
\]


\subsection{numerical comparisons}

Here we compare the performance of the vanilla LSM feature statistics with our LCD-T and C-LSM. Figure \ref{fig:simu-knockoff_stats} has settings the same as in Section \ref{sec:simu_main} and Figure \ref{fig:simu-knockoff_stats_lambda1_50m1} is less sparse with $m_1 = 50$ while using a different $\lambda = \tilde \sigma$. We see the performance of C-LSM is roughly the same as the original LSM in all cases we consider and LCD-T is sometimes even better, as suggested in \citet{weinstein2020power}.

\begin{figure}[!tb]
    \centering
    \includegraphics[width = 0.95\linewidth]{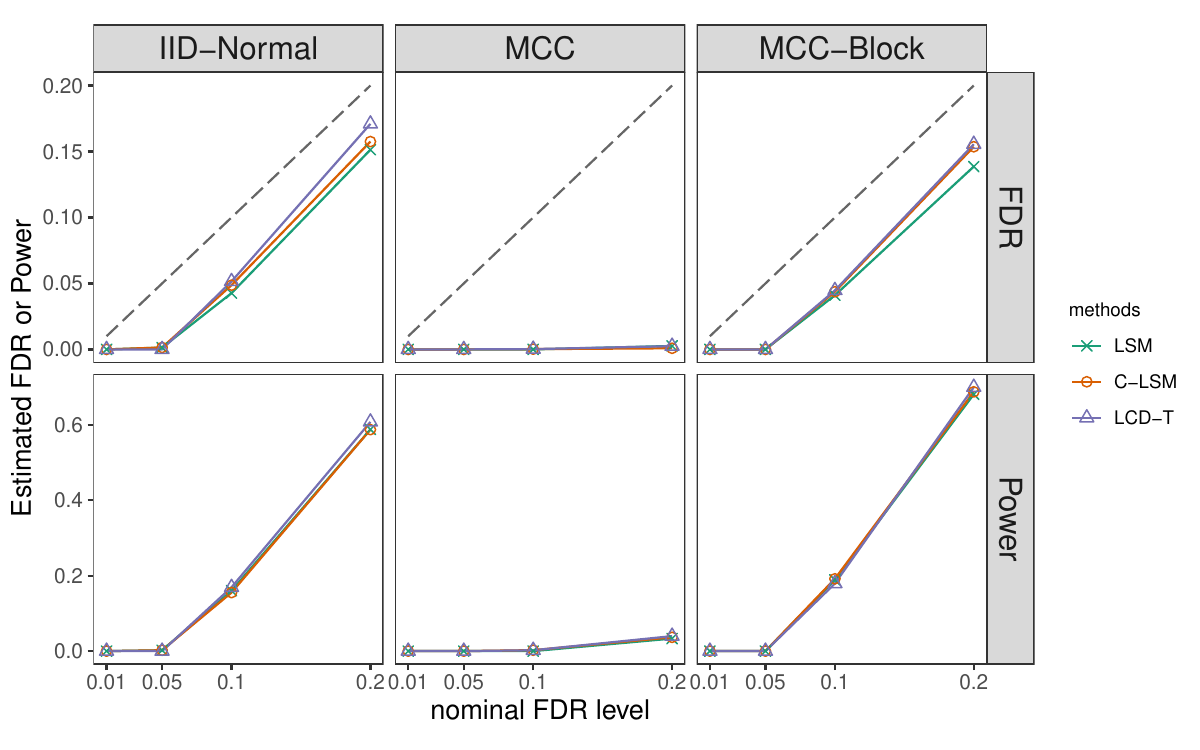}
    \caption{Estimated FDR and TPR of knockoffs employing LSM, LCD-T or C-LSM as feature statistics, where $m_1 = 10$ and $\lambda = 2 \tilde \sigma$ in LCD-T.}
    \label{fig:simu-knockoff_stats}
\end{figure}

\begin{figure}[!tb]
    \centering
    \includegraphics[width = 0.95\linewidth]{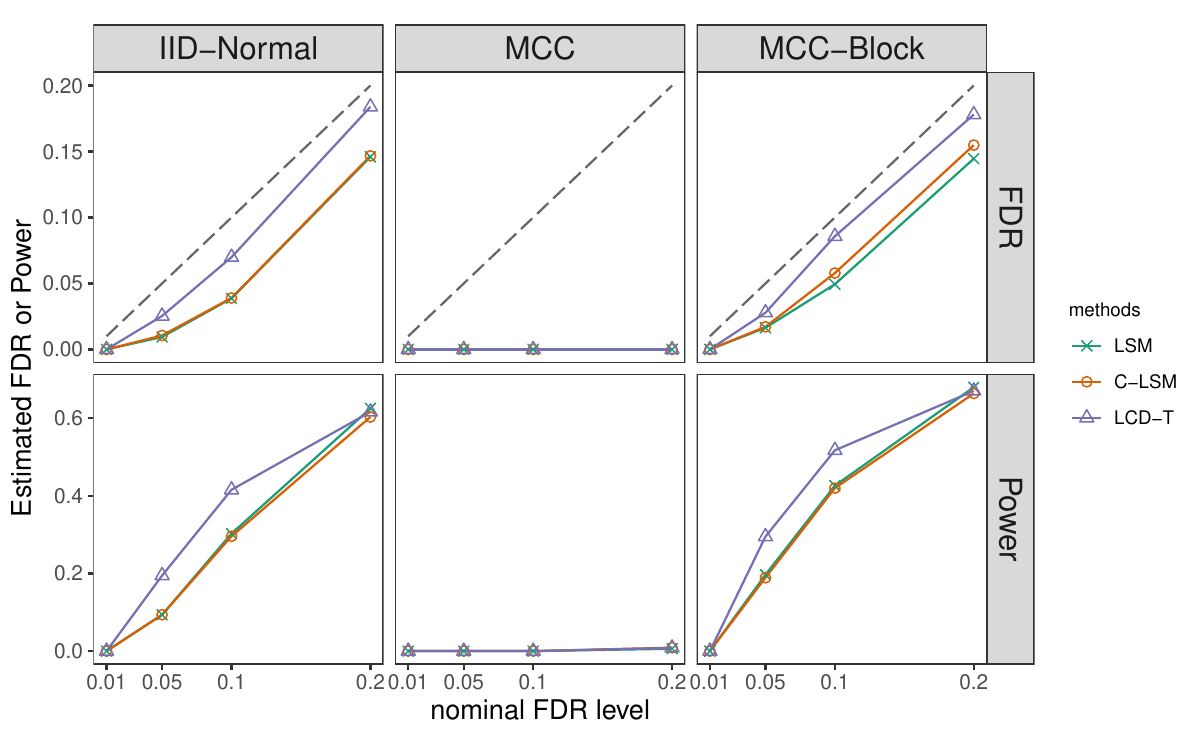}
    \caption{Estimated FDR and TPR of knockoffs employing LSM, LCD-T or C-LSM as feature statistics, where $m_1 = 50$ and $\lambda = \tilde \sigma$ in LCD-T.}
    \label{fig:simu-knockoff_stats_lambda1_50m1}
\end{figure}

\section{Implementation: conservative fallback test in fixed-X cKnockoff}
\label{app:implement}


\subsection{Construction of $\tE_j$}
\label{app:conservative-fallback}


Let $Q_j(\cdot \mid S_j)$ denote the conditional distribution of the response vector $\vct y$ under $H_j$ given $S_j(\vct y) = (\mat X_{-j}^{\tran} \vct y, \,\|\vct y\|_2^2)$. Throughout the section, we are only concerned with $E_j(c; S_j(\vct y))$ once the response $\vct y$ has already been observed. As such, we regard $S_j(\vct y)$, $c = T_j(\vct y)$, and $Q_j(\cdot \mid S_j(\vct y))$ as fixed inputs to the integral $E_j(c\,; S_j(\vct y))$, and suppress their dependence on $\vct y$. To avoid confusion, we use $\vct z$ to denote a generic response vector drawn from the conditional null distribution $Q_j$.

The support of $Q_j$ is the preimage of $S_j(\vct y)$, a sphere of dimension $n-m$ embedded in $\RR^n$, on which $\vct y$ is conditionally uniform under $H_j$; see Section~\ref{app:sampling}. We can write the conditional expectation as
\begin{equation}\label{eq:integral}
E_j(c\,; S_j) \;=\; \EE_{H_j}\left[ f_j(\vct y; c) \mid S_j\right] \;=\; \int f_j(\vct z; c)\,dQ_j(\vct z \mid S_j),
\end{equation}
where the integrand is given by
\begin{equation} \label{eq:integrand}
    f_j(\vct z; c) \;=\; \frac{\one \set{j \in \cR^\kn(\vct z)} \,\mam\, \one \set{T_j(\vct z) \geq c}}{\left| \cR^\kn(\vct z) \cup \{j\}\right|} \,-\, b_j(\vct z),
\end{equation}
with 
$ b_j(\vct z) \;=\; \alpha\, {\DP_j(\cC(w_{\tau_1}))}/{\hFDP(w_{\tau_1})}  $
as defined in \eqref{eq:bj_def}.

We express $f_j = f_j^{(T)} - f_j^{(b)}$, where
\begin{gather} \label{eq:int_decompose}
    f_j^{(T)} = \frac{\one \set{T_j \geq c}}{\left| \cR^\kn \cup \{j\}\right|}, \quad
    f_j^{(b)} = \DP_j(\cC(w_{\tau_1})) \cdot \frac{\alpha}{\hFDP(w_{\tau_1})} - \DP_j(\cR^\kn) \cdot \one \set{T_j < c}.
\end{gather}
$f_j^{(T)}$ is non-negative by definition. And we have
\begin{lemma}
    $f_j^{(b)}(\vct z) \geq 0$ for all $\vct z$.
\end{lemma}
\begin{proof}
    It suffices to show the statement when $\DP_j(\cR^\kn) \neq 0$. In this case, $j \in \cR^\kn$ and thus $\cR^{\kn} \neq \emptyset$. By definition \eqref{eq:bj_def}, $\tau_1 = \tau$ if $\cR^{\kn}$ is non-empty. Therefore, we have
    \[ 
    \cC(w_{\tau_1}) = \cC(w_{\tau}) = \cR^\kn, \quad
    \hFDP(w_{\tau_1}) = \hFDP(w_{\tau}) \leq \alpha.
    \]
    This implies
    $f_j^{(b)} \;\geq\; \DP_j(\cR^\kn) - \DP_j(\cR^\kn) \cdot \one \set{T_j < c} \;\geq\; 0$.
\end{proof}


Ideally, we would like to integrate $f_j^{(T)} - f_j^{(b)}$ on its support
\begin{equation}\label{eq:Omegaj}
\Omega_j(\vct y) \;=\; \{\vct z \in \text{Supp}(Q_j):\; f_j(\vct z; c) \neq 0\}.
\end{equation}
Again, we will suppress the dependence on $\vct y$ when no confusion can arise. We can write $\Omega_j \,= \,\Omega_j^{(T)} \,\cup\, \Omega_j^{(b)}$ with
\begin{equation}\label{eq:Omegaj-union}
\Omega_j^{(T)} \;=\; \{\vct z \in \Supp(Q_j):\; T_j(\vct z) \geq c\}, \quad \text{ and } \;\;
\Omega_j^{(b)} \;=\; \{\vct z \in \Supp(Q_j):\; j \in \cC(w_{\tau_1})\}
\end{equation}
are supports of $f_j^{(T)}$ and $f_j^{(b)}$ respectively.

For our fallback test statistic defined in \eqref{eq:Tj}, $\Omega_j^{(T)}$ amounts to a simple constraint on $\vct X_j^\tran \vct z$:
\begin{equation}\label{eq:Omega1_condition}
T_j(\vct z) \geq c \quad\Longleftrightarrow\quad \left|\vct X_j^\tran \left(\vct z - \hat{\vct y}^{(j)}(S_j)\right)\right| \geq c 
\quad\Longleftrightarrow\quad \vct X_j^\tran \vct z \;\notin\; \left(a_j^{(1)},\,a_j^{(2)}\right).
\end{equation}
where the bounds of the interval depend only on $\vct y$. Unfortunately, however, $\Omega_j^{(b)}$ admits no such simple description since it is defined implicitly in terms of the feature statistics.
Instead, we construct another set $\tOmega_j^{(b)}$ and define
\begin{equation}\label{eq:Etilde}
\tE_j(c\,; S_j) \;=\; \int_{\tOmega_j} f_j(\vct z; c) \,dQ_j(\vct z \mid S_j), \quad \text{ for } \tOmega_j \;=\; \Omega_j^{(T)} \cup \tOmega_j^{(b)}.
\end{equation}
Proposition~\ref{prop:approx} shows that if we implement the fallback test by checking if $\tE_j \leq 0$, the FDR control is intact.

\begin{prop}[Conservative fallback testing]\label{prop:approx}
Assume that, for each $j = 1,\ldots,m$, we calculate $\tE_j$ defined in \eqref{eq:Etilde} with any $\tOmega_j^{(b)}$. Define the conservative rejection set
\[
\tcR \;=\; \cR^{\kn} \,\cup\, \left\{j:\; \tE_j(T_j; S_j) \leq 0\right\}.
\]
Then $\FDR(\tcR) \leq \alpha$, and $\cR^{\kn} \subseteq \tcR \subseteq \cR^{\ckn}$.
\end{prop}

\begin{proof}
Fix some $j$ and let $c = T_j(\vct y)$. We have
\begin{align*}
    E_j(c) = \int_{\Omega_j} f_j^{(T)} - f_j^{(b)} \,dQ_j
    &= \int_{\Omega_j^{(T)}} f_j^{(T)} \,dQ_j - \int_{\Omega_j^{(b)}} f_j^{(b)} \,dQ_j \\
    &\leq \int_{\Omega_j^{(T)}} f_j^{(T)} \,dQ_j - \int_{\tOmega_j^{(b)}} f_j^{(b)} \,dQ_j
    = \tE_j,
\end{align*}
where the inequality is because that $f_j^{(b)} \geq 0$, hence
\[ \int_{\Omega_j^{(b)}} f_j^{(b)} \,dQ_j - \int_{\tOmega_j^{(b)}} f_j^{(b)} \,dQ_j = \int_{\Omega_j^{(b)} \setminus \tOmega_j^{(b)}} f_j^{(b)} \,dQ_j \geq 0. \]
Since $j$ is arbitrary and we reject $H_j$ when $E_j \leq 0$, this establishes that $\cR^{\kn} \subseteq \tcR \subseteq \cR^{\ckn}$, hence $\FDR(\tcR) \leq \alpha$ by Theorem~\ref{thm:sandwich}.
\end{proof}

Any choice of $\tOmega_j^{(b)}$ would not inflate the FDR while ideally, we like $\tOmega_j^{(b)}$ to approximate $\Omega_j^{(b)}$. In our implementation, we construct $\tOmega_j^{(b)}$ as another constraint on $\vct X_j^{\tran} \vct z$:
\[
\tOmega_j^{(b)} \;=\; \{\vct z\in \Supp(Q_j):\; \vct X_j^\tran \vct z \in A_j^{(b)}\}
\]
for some $A_j^{(b)}$ specified in Section~\ref{app:tOmega}. At a high level, our idea is to over estimate how likely $ j\in \cC_{\tau_1}$ based on the value of $\vct X_j^\tran \vct z$ and then set the region of $\vct X_j^\tran \vct z$ that implies high likelihood as $A_j^{(b)}$.

\subsection{Sampling $\vct z$ from $Q_j(\cdot \mid S_j)$} \label{app:sampling}

We first fix a basis to make our calculations convenient. Let $\mat V_{-j} \in \RR^{n \times (m-1)}$ denote an orthonormal basis for the column span of $\mat X_{-j}$, so that $\mat \Pi_{-j} = \mat V_{-j} \mat V_{-j}^\tran$. Next, for the projection of $\vct X_j$ orthogonal to the span of $\mat X_{-j}$, define the unit vector in that direction:
\[
\vct v_j = \frac{\mat \Pi_{-j}^\perp \vct X_j}{\|\mat \Pi_{-j}^\perp \vct X_j\|}, \quad \text{ where } \mat \Pi_{-j}^\perp = \mat I - \mat \Pi_{-j}
\]
Finally, let $\mat V_{\text{res}} \in \RR^{n \times (n-m)}$ denote an orthonormal basis for the subspace orthogonal to the span of $\mat X$. Then we can decompose $\vct z$ as
\[ \vct z = \mat V_{-j} \mat V_{-j}^\tran \vct z + \vct v_j \cdot \eta + \mat V_{\text{res}} \cdot \vct r, \]
where $\eta = \vct v_j^\tran \vct z \in \RR$ is the component of $\vct z$ in the direction $\vct v_j$, and $\vct r = \mat V_{\text{res}}^\tran \vct z \in \RR^{n-m}$ is the residual component.

Recall that $Q_j$ is uniform on its support $\{\vct z:\; S_j(\vct z) = S_j(\vct y)\}$ (see Section~\ref{app:cond_dist}). Fixing $S_j(\vct z) = S_j(\vct y)$ is equivalent to constraining
\[
\mat V_{-j}^\tran \vct z = \mat V_{-j}^\tran \vct y, \quad \text{ and } \; \eta^2 + \|\vct r\|^2 = \|\mat \Pi_{-j}^\perp \vct z\|^2 = \|\mat \Pi_{-j}^\perp \vct y\|^2.
\]
Let $\rho^2 = \|\mat \Pi_{-j}^\perp \vct y\|^2$, which depends only on $S_j(\vct y)$. We can sample $\vct z \sim Q_j$ by first sampling 
\[
(\eta, \vct r) \sim \text{Unif}\left(\rho \cdot \mathbb{S}^{n-m}\right),
\]
where $\mathbb{S}^{n-m} \subseteq \RR^{n-m+1}$ is the unit sphere of dimension $n-m$, and then reconstructing $\vct z$ using equation~\eqref{eq:z-decomp}.

To sample from $\tOmega_j$, we add a further constraint on $\eta$, that it lies in some union of intervals $U_j \subseteq [-\rho, \rho]$. See Section~\ref{app:tOmega} for details. In order to sample $\vct z$ satisfying this constraint efficiently, we first sample $\eta$ marginally obeying the constraint and then $\vct r$ conditional on $\eta$.
Standard calculations show the cumulative distribution function (CDF) of $\eta$ is
\begin{equation} \label{eq:cdf_eta}
    F_{\eta}(x; S_j) \coloneqq \PP(\eta \leq x) = F_{t_{n-m}} \pth{\frac{x \sqrt{n-m}}{\sqrt{(\rho^2 - x^2)}}} \;\; \text{for } |x|<\rho,
\end{equation}
where $F_{t_{n-m}}$ is the CDF of the t-distribution with degree of freedom $n-m$. Now write $\vct r = \|\vct r\| \cdot \vct u$ with $\vct u$ being the direction of $\vct r$. Given $\eta$, we have $\|\vct r\| = \sqrt{ \rho^2 - \eta^2}$ and $\vct u \sim \text{Unif}(\bS^{n-m-1})$ is independent of $\eta$.

To conclude, we sample $\eta \sim F_{\eta}$ in the desired union of intervals and $\vct u \sim \text{Unif}(\bS^{n-m-1})$ independently. The $\vct z$ is given by
\begin{equation}\label{eq:z-decomp}
\vct z(\eta, \vct u) = \mat V_{-j} \mat V_{-j}^\tran \vct y + \vct v_j \cdot \eta + \sqrt{ \rho^2 - \eta^2} \cdot \mat V_{\text{res}} \vct u.
\end{equation}

\subsection{The construction of $\tOmega_j$} \label{app:tOmega}

Section \ref{app:conservative-fallback} describes $\tOmega_j$ in terms of $\vct X_j^\tran \vct z$. To be consistent with the sampling scheme, we first rephrase $\tOmega_j$ in terms of $\eta$ as used in \eqref{eq:z-decomp}.

We decompose
\begin{equation} \label{eq:Xjz_decompose}
    \vct X_j^T \vct z = \vct X_j^T \mat \Pi_{-j} \vct z + (\vct v_j^\tran \vct X_j) \cdot \eta,
\end{equation}
which establishes a linear relationship between $\vct X_j^\tran \vct z$ and $\eta$ (recall $\mat \Pi_{-j} \vct z = \mat \Pi_{-j} \vct y$ is fixed). Thus we can write
\[
\tOmega_j \;=\; \{\vct z(\eta, \vct u) \in \Supp(Q_j):\; \eta \in A_j\}, \quad \text{ for } A_j = A_j^{(b)} \cup (a_j^{(1)}, a_j^{(2)})^{\setcomp},
\]
where $a_j^{(1)}$ and $a_j^{(2)}$ are solved from
\begin{equation} \label{eq:Omega1}
\{\vct z(\eta, \vct u) \in \Supp(Q_j):\; \eta \in (a_j^{(1)}, a_j^{(2)})^{\setcomp}\}
\;=\; \Omega_j^{(T)}(\vct y)
\;=\; \{\vct z \in \Supp(Q_j):\; T_j(\vct z) \geq c\}  
\end{equation}
and $A_j^{(b)}$ is a union of intervals in order to have
\[ \{\vct z(\eta, \vct u) \in \Supp(Q_j):\; \eta \in A_j^{(b)}\} \approx \Omega_j^{(b)}(\vct y). \]
Note we reuse the notation $A_j$, $A_j^{(b)}$, $a_j^{(1)}$, and $a_j^{(2)}$ as the constraint sets or boundaries for $\eta$ and they should not be confused with those in Section \ref{app:conservative-fallback}.

In the rest of this section, we give explicit ways to obtain $a_j^{(1)}$, $a_j^{(2)}$, and $A_j^{(b)}$.

For $a_j^{(1)}$ and $a_j^{(2)}$, note
\[
\Omega_j^{(T)}(\vct y) \;=\; \set{ \vct z \in \Supp(Q_j):\; \left|\vct X_j^\tran \left(\vct z - \hat{\vct y}^{(j)}(S_j)\right)\right| \geq c }.
\]
Using \eqref{eq:Xjz_decompose}, direct calculation shows
\[
a_j^{(1)} = \frac{\hat{\vct y}^{(j)}(S_j) - \vct X_j^\tran \mat \Pi_{-j} \vct y - c}{\vct v_j^\tran \vct X_j}, \quad
a_j^{(2)} = \frac{\hat{\vct y}^{(j)}(S_j) - \vct X_j^\tran \mat \Pi_{-j} \vct y + c}{\vct v_j^\tran \vct X_j}.
\]

For $A_j^{(b)}$, note
\[
\Omega_j^{(b)} = \set{\vct z \in \Supp(Q_j): \; j \in \cC(w_{\tau_1})} \;\subseteq\; \set{\vct z \in \Supp(Q_j): \; |W_j| \geq w_{\tau_1}}.
\]
Hence we want to set $\tOmega_j$ as $\set{\vct z \in \Supp(Q_j): \; |W_j| \geq w_{\tau_1}}$, which is approximately identified by local linear regression.

Specifically, we treat $\vct z = \vct z(\eta, \vct u)$ as a one-dimensional random function of $\eta$ with $\vct u$ being an independent random noise. Our local linear regression scheme then regresses $|W_j|$ and $w_{\tau_1}$ on $\eta$, and solve for the region where $|W_j| \geq w_{\tau_1}$ numerically.

Our method is motivated by the following observations in simulation studies.
\begin{enumerate}
    \item Given a fixed $\vct u$, for the lasso-based \textit{LSM} (\textit{C-LSM}) and \textit{LCD} (\textit{LCD-T}) feature statistics we consider, $|W_j(\vct z)|$ is roughly a piecewise linear function of $\eta$;
    \item $|W_j| \approx \EE \br{|W_j| \mid \eta}$, or say, the noise caused by $\vct u$ is small, if $\EE \br{|W_j| \mid \eta}$ is large. Specifically, we observe that the standard deviation of $|W_j|$ conditional on $\eta$ is typically a small fraction of the conditional mean when $\EE \br{|W_j| \mid \eta} \geq \EE \br{w_{\tau_1} \mid \eta}$.
\end{enumerate}
Heuristically, the first observation is because that the KKT conditions of Lasso is a piecewise linear system. And the second is that the randomness in $\vct u$ contributes to the correlation between $\vct y$ and the knockoff variables, which is considered noise. When $|W_j|$ is large, such a noise is expected to be dominated by the signal from the original variable. $w_{\tau_1}$ shares a similar behavior, though its value is determined by a more complicated mechanism. 

We then approximate
\[ 
\set{\vct z \in \Supp(Q_j): \; |W_j| \geq w_{\tau_1}} \approx
\set{\vct z \in \Supp(Q_j): \, \EE \br{|W_j| \mid \eta}\geq \EE \br{w_{\tau_1} \mid \eta}}, \]
with the conditional expectation estimated by the local linear regression.

To be specific, the construction of $\tOmega_j^{(b)}$ is done as follows.
\begin{enumerate}
    \item sample $\vct z_1(\eta_1, u_1), \ldots, \vct z_k(\eta_k, u_k)$ such that $\eta_1, \ldots, \eta_k$ are equi-spaced nodes in $[-\rho, \rho]$ and $u_1, \ldots, u_k$ are independent drawn from $\text{Unif}(\bS^{n-m-1})$.
    \item compute $|W_j(\vct z_i)|$ and $w_{\tau_1}(\vct z_i)$ for each $i$.
    \item estimate $\EE[|W_j(\vct z)| \mid \eta]$, denoted as $\widehat W_j(\eta)$, by a local linear regression on $|W_j(\vct z_i)|$ for $i = 1, \ldots, k$. We use the Gaussian kernel and set the bandwidth as the distance between consecutive nodes $\eta_i - \eta_{i-1}$. Similarly, estimate $\EE[w_{\tau_1}(\vct z) \mid \eta]$ as $\widehat w(\eta)$.
    \item $\tOmega_j^{(b)}$ is then
    \[ \tOmega_j^{(b)} \;=\; \{\vct z\in \Supp(Q_j):\; \eta \in A_j^{(b)}\}, \quad
    A_j^{(b)} \coloneqq \set{\eta: \widehat W_j(\eta) \geq \widehat w(\eta)}, \]
    where the inequality $\widehat W_j(\eta) \geq \widehat w(\eta)$ is solved numerically. 
\end{enumerate}

\subsection{Numerical error control} \label{app:error_control}

With enough computational budget, we can compute $\tE_j$ at arbitrary precision. While in practice, we should tolerate some level of numerical error introduced by Monte-Carlo. The key idea for such error control is to upper bound the probability of mistakenly claiming the sign of $\tE_j$ at some $\alpha_c$.

Our process of deciding $\sgn(\tE_j)$ can be formalized as constructing a confidence interval for $\tE_j$ from a sequence of i.i.d. samples $f_j(\vct z_1),\, f_j(\vct z_2),\, \ldots, f_j(\vct z_k)$, with a common mean $\tE_j / Q_j(\tOmega_j)$. For each $k = 1,\, 2,\, \ldots$, such a confidence interval at level $\alpha_c$, $C_k(f_j(\vct z_1),\, \ldots, f_j(\vct z_k); \, \alpha_c)$, is computed and once we see it excludes $0$, we stop the calculation and decide if $\tE_j \leq 0$ accordingly.

To control the sign error for  $\tE_j$, it suffices to have the sequence of confidence intervals $C_k$ hold for all (infinite many) $k$ simultaneously,
\[ \PP(\exists k: \, \tE_j \not\in C_k(f_j(\vct z_1),\, \ldots, f_j(\vct z_k)) \leq \alpha_c. \]




We can further control the inflation of the FDR due to the Monte-Carlo error, as shown next. 
\begin{prop} \label{prop:mc_conf}
    Let $\alpha_c(\vct y) = \left| \cR^\kn(\vct y) \cup \{j\}\right| \cdot \alpha_0 / m$ for some constant $\alpha_0$ and reject
    \[ \cR = \cR^{\kn(\alpha)} \cup \set{j: \;  \exists k, \; \text{s.t. } x \leq 0 \text{ for all } x \in C_k(f_j(\vct z_1),\, \ldots, f_j(\vct z_k); \, \alpha_c)}. \]
    Note $\cR$ is the cKnockoff rejection set if we compute $E_j$ and claim its sign with confidence as described above.
    Then
    \[ \FDR(\cR) \leq \alpha + \alpha_0. \]
\end{prop}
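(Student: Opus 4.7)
The plan is to split $\cR$ into a Monte-Carlo-error-free core that Proposition~\ref{prop:approx} and Theorem~\ref{thm:sandwich} can handle, plus a small error set that contributes at most $\alpha_0$ to the FDR.

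First, I would define the ``oracle'' conservative rejection set $\tcR = \cR^{\kn} \cup \{j : \tE_j(T_j(\vct y); S_j(\vct y)) \leq 0\}$, i.e., what we would reject if we could evaluate each $\tE_j$ with no Monte Carlo error. Proposition~\ref{prop:approx} gives $\cR^{\kn} \subseteq \tcR \subseteq \cR^{\ckn}$ and $\FDR(\tcR) \leq \alpha$. Setting $\cR' = \cR \cap \tcR$, we still have $\cR^{\kn} \subseteq \cR' \subseteq \cR^{\ckn}$ (the left inclusion holds because $\cR^{\kn}$ lies in both $\cR$ and $\tcR$), so the Sandwich Theorem~\ref{thm:sandwich} yields $\FDR(\cR') \leq \alpha$. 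The Monte Carlo ``false acceptances'' are then precisely the variables in $\cR \setminus \tcR$, i.e., those $j$ for which the sequential procedure stopped with a non-positive confidence interval even though $\tE_j > 0$.

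Next, I would decompose
\begin{equation*}
\FDR(\cR) \;=\; \EE\!\left[\frac{|\cR' \cap \cH_0|}{|\cR|\vee 1}\right] + \sum_{j \in \cH_0} \EE\!\left[\frac{\one\{j \in \cR \setminus \tcR\}}{|\cR|\vee 1}\right].
\end{equation*}
Since $\cR \supseteq \cR'$, the first summand is at most $\FDR(\cR') \leq \alpha$. Since $\cR \supseteq \cR^{\kn}$, each term in the second summand is bounded by $\EE[\one\{j \in \cR \setminus \tcR\}/(|\cR^{\kn}| \vee 1)]$. Uniform validity (over all $k$) of the sequential confidence intervals of \citet{waudby2020estimating} gives, for any $j$ with $\tE_j > 0$, that $\PP(j \in \cR \setminus \tcR \mid \vct y) \leq \alpha_c(\vct y) = |\cR^{\kn}(\vct y)|\alpha_0/m$. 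The scaling of $\alpha_c$ makes $\alpha_c/(|\cR^{\kn}| \vee 1) \leq \alpha_0/m$ always --- trivially when $|\cR^{\kn}| \geq 1$, and because $\alpha_c = 0$ when $|\cR^{\kn}| = 0$ --- so the second summand is at most $m_0 \alpha_0/m \leq \alpha_0$. Combining yields $\FDR(\cR) \leq \alpha + \alpha_0$.

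The one real subtlety is the bookkeeping of denominators: $\cR$ and $\tcR$ need not be nested, since Monte Carlo can both falsely accept and falsely reject, so neither dominates the other and one must choose carefully which smaller set to lower-bound $|\cR|$ by in each term (using $\cR'$ for the oracle piece and $\cR^{\kn}$ for the error piece). The choice $\alpha_c \propto |\cR^{\kn}|$ in the statement is tailored precisely so that the ratio $\alpha_c/(|\cR^{\kn}| \vee 1)$ simplifies to the clean constant $\alpha_0/m$ per hypothesis, making the error accounting sum to $\alpha_0$.
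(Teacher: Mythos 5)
Your proof is correct and follows essentially the same argument as the paper's: both decompose the rejection event for each null $j$ into the case where the confidence sequence covers $\tE_j$ at every step (handled by the sandwich/conservative-test machinery) and the coverage-failure event of conditional probability at most $\alpha_c(\vct y)$, and both exploit the scaling $\alpha_c = |\cR^{\kn}|\alpha_0/m$ against a denominator of at least $|\cR^{\kn}|\vee 1$ to charge each null hypothesis at most $\alpha_0/m$. The only cosmetic difference is that you invoke Theorem~\ref{thm:sandwich} as a black box on $\cR\cap\tcR$ for the error-free part, whereas the paper re-derives the conditional bound directly from the pointwise inequality $\one\{D_j\}\leq \one\{T_j\geq \hc_j\}\vee \one\{D_j^{\text{out}}\}$ with denominator $|\cR^{\kn}\cup\{j\}|$.
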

\begin{proof}
Denote event $D_j \coloneqq \set{\exists k, \; \text{s.t. } x \leq 0 \text{ for all } x \in C_k(f_j(\vct z_1),\, \ldots, f_j(\vct z_k); \, \alpha_c)}$ and
\[ D_j^\text{in} \coloneqq D \cap \set{\forall k,\; \tE_j \in C_k}, \quad
D_j^\text{out} \coloneqq D \cap \set{\exists k,\; \tE_j \not\in C_k}. \]
Recall $D_j^\text{in}$ implies $T_j \geq \hc_j$. And we have $\PP(D_j^\text{out} \mid \vct y) \leq \alpha_c$ by the construction of $C_k$. Therefore, we have
\begin{align*}
  \EE_{H_j}[\DP_j(\cR) \mid S_j]
  &\;=\; \EE_{H_j} \br{ \frac{\one \set{j \in \cR^\kn} \mam \one \set{D_j}}{\left| \cR \cup \{j\}\right|} \;\Big|\; S_j } \\[7pt]
  &\;\leq\; \EE_{H_j} \br{ \frac{\one \set{j \in \cR^\kn} \mam \one \set{T_j \geq \hc_j} \mam \one \set{D_j^\text{out}}}{\left| \cR^\kn \cup \{j\}\right|} \;\Big|\; S_j } \\[7pt]
  &\;\leq\; \EE_{H_j}[b_j \mid S_j] + \EE_{H_j} \br{ \frac{\one \set{D_j^\text{out}}}{\left| \cR^\kn \cup \{j\}\right|} \;\Big|\; S_j }.
\end{align*}
Marginalizing over $S_j$, we have
\[
  \FDR(\cR) = \sum_{j\in\cH_0} \EE [\,\DP_j(\cR)\,] \leq \sum_{j\in\cH_0} \EE[b_j] + \sum_{j\in\cH_0} \EE\left[ \EE\left[\frac{\one \set{D_j^\text{out}}}{\left| \cR^\kn(\vct y) \cup \{j\}\right|} \;\Big|\; \vct y \right] \right] \leq \alpha + \alpha_0.
\]
\end{proof}

\begin{remark}
The denominator $m$ in our choice of $\alpha_c$ can be replaced by the number of hypotheses that survived after filtering (see \App~\ref{app:filtering}), which would increase $\alpha_c$ and save some computation time while keeping the same FDR control.
\end{remark}
 
\begin{remark}
In practice, the above bound is rather loose and we found the FDR inflation is ignorable in most cases. 
In our implementation, we set $\alpha_0 = 0.1 \alpha$. To avoid having a infinite sequence in the sequential testing, we truncate the Monte-Carlo sample sequence after 500 samples if $0 \in C_k$ for all $k = 1, \ldots, 500$. We then use the sample mean to determine if $\tE_j \leq 0$.
\end{remark}



In particular, since the values of $f_j(z_k)$ are bounded, we apply Theorem 3 in \cite{waudby2020estimating} to construct such a confidence sequence $C_k$, which is adaptive to the sample variance and achieves state-of-the-art performance for bounded random variables.



\section{Implementation: filtering the rejection set in fixed-X cKnockoffs}
\label{app:filtering}

As suggested by Theorem \ref{thm:sandwich}, we reject the {\em filtered cKnockoff rejection set}
\[ \cR = \cR^{\kn} \cup \{j \in \cS:\; E_j(T_j; S_j) \leq 0\} \;\subseteq\; \cR^\ckn \]
for some set $\cS$ that only contains the hypotheses likely to be rejected by cKnockoff.
In practice, we find it is a good choice to set
\begin{equation} \label{eq:filter}
    \cS(s; \vct y) = (\cS^\BH(s; \vct y) \cap \cS^{\text{p}}(s; \vct y)) \cup \cS^{\kn}(s; \vct y) \setminus \cR^\kn(\vct y),
\end{equation}
where
\[
\cS^\BH = \cR^{\BH(s \cdot 4 \alpha)}, \quad
\cS^{\text{p}} = \set{j: \; p_j \leq s \cdot \alpha / 2}, \quad
\cS^{\kn} = \set{j: \; |W_j| \geq w_{m - |\cS^\BH \cap \cS^{\text{p}}|} \mim w_{\tau}}
\]
with $s = 1$. In words, $\cS^\BH \cap \cS^{\text{p}}$ requires that any $j \in \cS$ to have a relatively small $p$-value and $\cS^{\kn}$ requires $|W_j|$ to be relatively large. By the construction of the fallback test, $H_j$ is very unlikely to be rejected by cKnockoff if it has a large $p$-value and a small $|W_j|$. Among our simulations, this choice almost does not exclude any rejections in the vanilla cKnockoff and achieves $|S| \ll m$. That is to say, it preserves the power of cKnockoff when accelerating it a lot by filtering out many non-promising hypotheses.

Moreover, such filtering can be done in an online manner. Let
\begin{equation} \label{eq:score}
    s^+_j(\vct y) = \min \set{s:\; j \in \cS(s; \vct y)}.
\end{equation}
Then $s^+_j$, which we call {\em promising score}, roughly measures how likely $H_j$ will be rejected by cKnockoff. Then we can check $E_j(T_j; S_j) \leq 0$ sequentially on the hypotheses ranked by their promising scores. Theorem \ref{thm:sandwich} allows us to stop at any time we like and report the rejection set with a valid FDR control. This feature is available in our \texttt{R} package but not employed in the numerical experiments shown in this paper.

To apply filtering to cKnockoff$^*$, we reject the {\em filtered cKnockoff$^*$ rejection set}
\[ \cR = \cR^{\kn} \cup \{j \in \cS:\; E^*_j(T_j; S_j) \leq 0 \} \;\subseteq\; \cR^{\ckn^*} \]
where $\cS$ is further required to satisfy $\cR^\kn \cup \cS \supseteq \cR^*$. Like in cKnockoff, Theorem \ref{thm:sandwich_star} shows such filtering does not hurt the FDR control. 
\begin{thm}\label{thm:sandwich_star}
    For any rejection rule $\cR$ with $\cR^{*} \subseteq \cR \subseteq \cR^{\ckn^*}$ almost surely, we have $\FDR(\cR) \leq \alpha$.
\end{thm}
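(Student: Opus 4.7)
The plan is to mimic the proof of Theorem~\ref{thm:sandwich}, but with $\cR^*$ playing the role that $\cR^\kn$ played there. The key structural inputs are (a) the nested inclusion $\cR^* \subseteq \cR \subseteq \cR^{\ckn^*}$, (b) the equivalent rewriting $\cR^{\ckn^*} = \cR^\kn \cup \{j:\, T_j \geq \hc_j^*\}$ from \eqref{eq:ckn_star}, (c) the fact that $\hc_j^*(S_j)$ is deterministic given $S_j$ with $E_j^*(\hc_j^*;S_j) \leq 0$ by definition, and (d) the two budget conditions in \eqref{eq:budget-conditions}.

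First, I would bound $\DP_j(\cR)$ pointwise. Using $\cR \subseteq \cR^{\ckn^*}$ to control the numerator and $\cR^* \subseteq \cR$ to control the denominator, we get
\[
\DP_j(\cR) \;=\; \frac{\one\{j \in \cR\}}{|\cR \cup \{j\}|} \;\leq\; \frac{\one\{j \in \cR^\kn\} \mam \one\{T_j \geq \hc_j^*\}}{|\cR^* \cup \{j\}|}.
\]
This is the one place where it matters that the lower bound of the sandwich is exactly $\cR^*$, not $\cR^\kn$: inflating the denominator from $|\cR^\kn\cup\{j\}|$ to $|\cR^* \cup\{j\}|$ is what lets us tie to $E_j^*$ rather than $E_j$.

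Next, I would take the conditional expectation given $S_j$ under $H_j$. Since $\hc_j^*$ is a function of $S_j$ alone, we can pull it inside and recognize the resulting expression as $E_j^*(\hc_j^*;S_j) + \EE_{H_j}[b_j \mid S_j]$, where by the very definition of $\hc_j^*$ we have $E_j^*(\hc_j^*;S_j) \leq 0$. Thus
\[
\EE_{H_j}[\DP_j(\cR) \mid S_j] \;\leq\; \EE_{H_j}[b_j \mid S_j].
\]

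Finally, marginalize over $S_j$, sum over $j \in \cH_0$, and apply condition~(ii) of \eqref{eq:budget-conditions} (proved in Lemma~\ref{lem:bj}):
\[
\FDR(\cR) \;=\; \sum_{j \in \cH_0} \EE[\DP_j(\cR)] \;\leq\; \sum_{j \in \cH_0} \EE[b_j] \;\leq\; \alpha.
\]
The argument is essentially bookkeeping; the only point that requires care is verifying that $\cR^*$ is a legitimate substitute for $\cR^\kn$ in the denominator step, which is immediate from the hypothesis $\cR^* \subseteq \cR$ and does not depend on how $\cR^*$ was constructed. No additional properties of the filtering used to define $\cR^*$ are invoked, which is exactly what makes the sandwich principle a robust safeguard for practical speedups.
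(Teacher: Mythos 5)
Your proposal is correct and follows essentially the same route as the paper's proof: bound $\DP_j(\cR)$ using the two inclusions (numerator via $\cR\subseteq\cR^{\ckn^*}$, denominator via $\cR^*\subseteq\cR$), identify the conditional expectation with $E_j^*(\hc_j^*;S_j)+\EE_{H_j}[b_j\mid S_j]\leq\EE_{H_j}[b_j\mid S_j]$, and sum over $j\in\cH_0$ using condition (ii) of \eqref{eq:budget-conditions}. The paper's version is just a more compressed write-up of the same argument.
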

\begin{proof}
Recall
\[
\cR^{\ckn^*} \,=\, \cR^{\kn} \cup \{j:\; T_j \geq \hc_j^*\}.
\]
Hence $\cR^{*} \subseteq \cR \subseteq \cR^{\ckn^*}$ implies 
\[
\EE_{H_j}[\DP_j(\cR) \mid S_j]
\;\leq\; \EE_{H_j} \br{ \frac{\one \set{j \in \cR^\kn} \mam \one \set{T_j \geq \hc_j^*}}{\left| \cR^* \cup \{j\}\right|} \;\Big|\; S_j }\\[5pt]
\;\leq\; \EE_{H_j}[b_j \mid S_j],
\]
so that $\FDR(\cR) \leq \sum_{j\in \cH_0} \EE[b_j] \leq \alpha$.
\end{proof}
For cKnockoff$^*$, $\cS$ is set again by \eqref{eq:filter}. The condition $\cR^\kn \cup \cS \supseteq \cR^*$ is made true by the construction of $\cR^*$ in \App~\ref{app:R_star}.

\section{Refined cKnockoff procedure}
\label{app:R_star}

\subsection{Refined cKnockoff procedure}

Recall in the {\em refined calibrated knockoff} (cKnockoff$^*$) procedure, we find certain $\cR^*$ satisfying $\cR^\kn \subseteq \cR^* \subseteq \cR^\ckn$. Then cKnockoff$^*$ rejects
\[
\cR^{\ckn^*} \;=\; \cR^\kn \cup \left\{j:\; T_j \geq \hc_j^*\right\} \;=\; 
\cR^\kn \cup \left\{j:\; E_j^*(T_j; S_j) \leq 0\right\},
\]
where
\[
E_j^*(c\,; S_j) \;=\;\EE_{H_j} \br{\frac{\one \set{j \in \cR^\kn} \mam \one \set{T_j \geq c}}{\left| \cR^* \cup \{j\}\right|} - b_j \;\Big|\; S_j} \;\leq\; E_j(c\,; S_j),
\]
and $\hc_j^* = \min \set{c:\; E_j^*(c\,; S_j) \leq 0} \leq \hc_j$.

Before explicitly defining $\cR^*$ in the next section, we first show cKnockoff$^*$ controls FDR and is uniformly more powerful than cKnockoff.

\begin{thm} \label{thm:fdr_star}
    Assume $\cR^\kn \subseteq \cR^* \subseteq \cR^\ckn$, and the budgets $b_1,\ldots,b_m$ satisfy the two conditions in \eqref{eq:budget-conditions}. Then $\cR^{\ckn^*} \supseteq \cR^\ckn$, and $\cR^{\ckn^*}$ controls FDR at level $\alpha$.
\end{thm}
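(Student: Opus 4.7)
The plan is in two steps: first establish the containment $\cR^{\ckn^*} \supseteq \cR^{\ckn}$, and then deduce FDR control by a conditional calibration argument that mirrors the proof of Theorem~\ref{thm:fdr}.

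For the containment, I would leverage the pointwise inequality $E_j^*(c\,;S_j) \leq E_j(c\,;S_j)$ already noted in \eqref{eq:Ej_star}, which holds because the hypothesis $\cR^{\kn} \subseteq \cR^*$ makes the denominator $|\cR^* \cup \{j\}|$ at least as large as $|\cR^{\kn} \cup \{j\}|$ inside the integrand. Both $E_j^*(\cdot\,;S_j)$ and $E_j(\cdot\,;S_j)$ are non-increasing in $c$ (since raising $c$ only shrinks the event $\{T_j \geq c\}$), so the pointwise bound forces $\hc_j^* \leq \hc_j$. Appending $\cR^{\kn}$ to both sides of the resulting inclusion $\{j:T_j \geq \hc_j\} \subseteq \{j:T_j \geq \hc_j^*\}$ gives $\cR^{\ckn} \subseteq \cR^{\ckn^*}$.

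For FDR control, the key observation is that $\hc_j^*$, defined through the conditional expectation $E_j^*(\cdot\,;S_j)$, is a deterministic function of $S_j$, exactly as $\hc_j$ was in the proof of Theorem~\ref{thm:fdr}. Combining the containment just proved with the hypothesis $\cR^* \subseteq \cR^{\ckn}$ yields $\cR^* \subseteq \cR^{\ckn^*}$, so for any $j \in \cH_0$,
\begin{align*}
\EE_{H_j}[\DP_j(\cR^{\ckn^*}) \mid S_j]
&\;\leq\; \EE_{H_j}\left[\frac{\one\{j \in \cR^{\kn}\} \mam \one\{T_j \geq \hc_j^*\}}{|\cR^* \cup \{j\}|} \,\Big|\, S_j\right]\\
&\;=\; E_j^*(\hc_j^*\,;S_j) + \EE_{H_j}[b_j \mid S_j]
\;\leq\; \EE_{H_j}[b_j \mid S_j],
\end{align*}
where the first inequality replaces the denominator $|\cR^{\ckn^*} \cup \{j\}|$ by the no-larger $|\cR^* \cup \{j\}|$, and the final step uses the definition of $\hc_j^*$. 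Marginalizing over $S_j$ and invoking condition (ii) of \eqref{eq:budget-conditions} gives $\FDR(\cR^{\ckn^*}) \leq \sum_{j \in \cH_0} \EE[b_j] \leq \alpha$.

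The main subtlety I expect is the boundary case $\hc_j^* = \infty$, in which the ``definition'' $E_j^*(\hc_j^*\,;S_j) \leq 0$ is not literally attained. That case must be handled separately: since $\one\{T_j \geq \hc_j^*\} = 0$ almost surely, the display above reduces to $\EE_{H_j}[\one\{j \in \cR^{\kn}\}/|\cR^* \cup \{j\}| \mid S_j] \leq \EE_{H_j}[\DP_j(\cR^{\kn}) \mid S_j] \leq \EE_{H_j}[b_j \mid S_j]$, where the first inequality uses $\cR^{\kn} \subseteq \cR^*$ and the second is condition (i) of \eqref{eq:budget-conditions}, so the same FDR bound holds.
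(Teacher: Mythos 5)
Your proposal is correct and follows essentially the same route as the paper's proof: establish $\cR^* \subseteq \cR^{\ckn} \subseteq \cR^{\ckn^*}$ from $E_j^* \leq E_j$, then bound $\EE_{H_j}[\DP_j(\cR^{\ckn^*}) \mid S_j]$ by enlarging the denominator to $|\cR^* \cup \{j\}|$ and invoking the calibration $E_j^*(\hc_j^*\,;S_j) \leq 0$ before summing the budgets. Your explicit treatment of the boundary case $\hc_j^* = \infty$ is a detail the paper leaves implicit, but it does not change the argument.
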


\begin{proof}
  Because $E_j^*(c\,; S_j) \leq E_j(c\,; S_j)$, we have 
  \[
  \cR^{\ckn^*}\supseteq \cR^{\ckn} \supseteq \cR^*.
  \]
  Recall $\hc_j^* = \min \set{c:\; E_j^*(c\,; S_j) \leq 0}$, so that $E_j^* \leq 0$ if and only if $T_j \geq \hat{c}_j^*$. Then we have
  \begin{align*}
  \EE_{H_j}[\DP_j(\cR^{\ckn^*}) \mid S_j]
  &\;\leq\; \EE_{H_j} \br{ \frac{\one \set{j \in \cR^{\ckn^*}}}{\left| \cR^{*} \cup \{j\}\right|} \;\Big|\; S_j }\\[5pt]
  &\;=\; \EE_{H_j} \br{ \frac{\one \set{j \in \cR^\kn} \mam \one \set{T_j \geq \hc_j^*}}{\left| \cR^* \cup \{j\}\right|} \;\Big|\; S_j }\\[5pt]
  &\;\leq\; \EE_{H_j}[b_j \mid S_j],
  \end{align*}
  so that $\FDR(\cR^{\ckn^*}) \leq \sum_{j\in \cH_0} \EE[b_j] \leq \alpha$.

\end{proof}

\subsection{Constructing $\cR^*$ in fixed-X cKnockoff} To make $\cR^*$ easy to compute while bringing in additional power, we need a delicate construction. Let
\[ \cR^* = \cR^\kn \cup \left\{j \in \cS^*:\; E_j(T_j; S_j) \leq 0 \right\}, \]
with
\[
\cS^* = \set{j \in \cS:\; p_j \leq \frac{\alpha}{m} \mim \frac{0.01 \alpha}{\lceil 1/\alpha - 1 \rceil},\, s_j^+ \text{ has rank no larger than } K^\text{cand}},
\]
where $\cS$ is the filtering set in \eqref{eq:filter}, $s_j^+$ is the promising score in \eqref{eq:score}, and $K^\text{cand}$ is a pre-specified constant to restrict the size of $\cS^*$. In words, we construct $\cS^*$ by picking the $K^\text{cand}$-most promising hypotheses in $\cS$ who have $p$-values below a certain threshold. We will explain the rationale behind this $p$-value cutoff in the next two sections. By construction, we have $\cR^\kn \subseteq \cR^* \subseteq \cR^\ckn$ and $\cR^\kn \cup \cS \supseteq \cR^*$.

\subsection{Computing $\cR^*$ in fixed-X cKnockoff}

Computing $\cR^*$ shares the same goal as computing $\cR^\ckn$. That is, we want to calculate
\[ E_j(c; S_j) \;=\; \int f_j(\vct z; c)\,dQ_j(\vct z) \]
with $c = T_j(\vct y)$.
Let's adopt the same narrative as in Section \ref{sec:implement}, regarding $S_j(\vct y)$, $c = T_j(\vct y)$, and $Q_j(\cdot \mid S_j(\vct y))$ as fixed and use $\vct z$ to denote a generic response vector drawn from the conditional null distribution $Q_j$.
But this time, we will use numerical integration instead of Monte-Carlo to compute $E_j$ approximately.

\subsubsection{Approximating $E_j$ by an integral}

To formulate the calculation as a numerical integration, recall \eqref{eq:z-decomp} that $\vct z$ can be written as
\[ \vct z = \vct z(\eta, \vct u) = \mat V_{-j} \mat V_{-j}^\tran \vct y + \vct v_j \cdot \eta + \sqrt{ \rho^2 - \eta^2} \cdot \mat V_{\text{res}} \vct u. \]
given $S_j$, where the two random variables $\eta = \vct v_j^\tran \vct z$ and $\vct u = \mat V_\text{res}^\tran \vct z / \norm{\mat V_\text{res}^\tran \vct z}$ are independent.

Following the idea in \App~\ref{app:tOmega}, we treat $f_j(\vct z(\eta, \vct u))$ as a random function of $\eta$ with an independent random noise $\vct u$. Then 
\[ 
    E_j \;=\; \int \EE_{\vct u \sim \text{Unif}(\bS^{n-m-1})} \br{ f_j(\vct z(\eta, \vct u))} \,d\, F_{\eta}(\eta),
\]
where $F_{\eta}(\cdot; S_j)$ is the CDF of $\eta$ given in \eqref{eq:cdf_eta}.

This motivates computing $E_j$ approximately by
\[
    E_j \approx \sum_{i=1}^{1/h} f_j(\vct z(\eta_i, \vct u_i)) \cdot h,
\]
where 
\[ \eta_i = F_{\eta}^{-1}(ih), \quad \vct u_i \simiid \text{Unif}(\bS^{n-m-1}) \]
In words, we numerically integrate over the variable $\eta$ and use Monte-Carlo for the leftover noise variable $\vct u$. When $h$ is tiny, we have a fine grid for the numerical integration and many samples for Monte-Carlo, and hence the calculation would be accurate.

\subsubsection{Checking if $E_j \leq 0$}

Since $f_j \leq 1$, we have
\[ 
    \int_{\Omega_j^{(T)}} f_j \, dQ_j
    \;\leq\; Q_j(\Omega_j^{(T)}) \coloneqq B_j^+.
\]
The bound $B_j^+$ is easy to compute, referring to \App~\ref{app:tOmega}.
Therefore, to conclude $E_j \leq 0$, it suffices to show
\[ B_j^+ + \int_{\pth{\Omega_j^{(T)}}^\setcomp} f_j \, dQ_j \leq 0. \]
We use our numerical scheme derived in the previous section to compute the integral. Recall \eqref{eq:Omega1},
\[ \pth{\Omega_j^{(T)}}^\setcomp = \set{\vct z(\eta, \vct u): \eta \in \pth{a_j^{(1)}, a_j^{(2)}}}. \]
It's easy to see that $\vct v_j^\tran \vct y$ is equal to either $a_j^{(1)}$ or $a_j^{(2)}$ since we have $T(\vct z) = T(\vct y)$ on the boundary of $\Omega_j^{(T)}$. Assuming $\vct v_j^\tran \vct y = a_j^{(1)}$ without loss of generality, we have
\[ \int_{\pth{\Omega_j^{(T)}}^\setcomp} f_j \, dQ_j \approx \sum_{i=1}^{(1-B_j^+)/h} f_j(\vct z(\eta_i, \vct u_i)) \cdot h, \]
where
\begin{equation}\label{eq:etai_ui}
\eta_i = F_\eta^{-1}(F_\eta(\vct v_j^\tran \vct y) + ih), \quad \vct u_i \simiid \text{Unif}(\bS^{n-m-1}).
\end{equation}

The key of our fast algorithm is to recall that $f_j(\vct z(\eta, \vct u)) \leq 0$ on $\pth{\Omega_j^{(T)}}^\setcomp$. That is to say, $f_j(\vct z(\eta_i, \vct u_i)) \leq 0$ for all $\eta_i$ and $\vct u_i$ considered in \eqref{eq:etai_ui}. Therefore, once we see 
\[ \sum_{i=1}^{k} f_j(\vct z(\eta_i, \vct u_i)) \cdot h \leq - B_j^+ \]
for some $k \leq (1-B_j^+)/h$, we can claim $E_j \leq 0$ without further calculation. Setting the starting point of the numerical integration nodes as $\eta_0 = \vct v_j^\tran \vct y$ rather than the other endpoint $a_j^{(2)}$ makes $f_j$ at $\vct z(\eta_1, \vct u_1), \, \vct z(\eta_2, \vct u_2), \ldots$ more likely to be nonzero, hence earlier to conclude $E_j \leq 0$, when $H_j$ is nonnull.

In particular, we try $k = 1, 2, \ldots, K^\text{step}$ sequentially, once we see $\sum_{i=1}^{k} f_j(\vct z(\eta_i, u_i)) \cdot h \leq - B_j^+$, we add $j$ into $\cR^*$ and stop the calculation; otherwise we still stop but do not include $j$. 

The overall running time of $\cR^*$ is upper bounded by $K^\text{cand} \cdot K^\text{step}$ times the time of evaluating the knockoff feature statistics. In our implementation, we set $K^\text{cand} =  K^\text{step} = 3.$

\subsubsection{Choosing a proper $h$} \label{app:R_star_h}

Now the only question that remains is how to choose a proper $h$. The choice of $h$ is driven by a tradeoff --- a smaller $h$ yields a more accurate estimate of $E_j$ and a larger $h$ allows us to determine if $E_j \leq 0$ within a few steps. 

There are only two cases for $\vct z \in \pth{\Omega_j^{(T)}}^\setcomp$ where $f_j(\vct z) \neq 0$: when $j \in \cR^\kn(\vct z)$ or $j \in \cC(w_{\tau_1})$. As we will see, the computational tricks introduced in \App~\ref{app:R_star_trick} imply the first case is uncommon when we need to compute $\cR^*$. So considering $j \in \cC(w_{\tau_1})$ but $j \not\in \cR^\kn(\vct z)$, we have
\[ f_j = -b_j(\alpha; \vct z) \approx - \frac{\alpha}{\lceil 1/\alpha - 1 \rceil} \]
since $|\cC(w_{\tau_1})| = \lceil 1/\alpha - 1 \rceil$ and $\hFDP(w_{\tau_1}) \lessapprox 1$. Therefore, if we set
\[ h = B_j^+ / \frac{\alpha}{\lceil 1/\alpha - 1 \rceil}, \]
we expect to see $\sum_{i=1}^{k} f_j(\vct z(\eta_i, \vct u_i)) \cdot h \leq - B_j^+$ with $k = 1$ or $2$, if $j$ is to be rejected.

Moreover, by noticing a monotone bijection between $\eta = \vct v_j^\tran \vct z$ and the $t$-statistic $t_j$,
\[
    t_j = \frac{\vct v_j^\tran \vct z}{\sqrt{(\|\vct y\|^2 - \|\mat \Pi_{-j}\vct y\|^2 - (\vct v_j^\tran \vct z)^2) / (n-m)}}, \quad 
    \vct v_j^\tran \vct z = t_j \cdot \sqrt{\frac{\|\vct y\|^2 - \|\mat \Pi_{-j}\vct y\|^2}{t_j^2 + n-m}},
\]
simple calculation shows
\[ Q_j\pth{\set{\vct v_j^\tran \vct z \not\in \pth{a_j^{(1)}, a_j^{(2)}}}} = B_j^+ \;\leq\; p_j = 2 F_{t_{n-m}}(-|t_j|). \]
Recall we have
\[ p_j \leq \frac{\alpha}{m} \mim \frac{0.01 \alpha}{\lceil 1/\alpha - 1 \rceil}, \quad \forall j \in \cS^*.\]
This guarantees
\[ h = B_j^+ / \frac{\alpha}{\lceil 1/\alpha - 1 \rceil} \leq 0.01, \]
which is reasonably small.

It is worth pointing out here that with calculating $\cR^*$ in this way, we do not have an estimation of the error in computing $E_j$, unlike in cKnockoff where we have a confidence sequence for this purpose. Although our current implementation of cKnockoff$^*$ works very well in simulations, analysts can choose a smaller $h$ or just stick to cKnockoff if they do not want to rely on a numerical integration without an error estimation.

\subsection{Computational tricks in applying $\cR^*$} \label{app:R_star_trick}

Now we have an algorithm to compute $\cR^*$ efficiently. When computing $\cR^{\ckn^*}$, we need to calculate $E_j^*(T_j; S_j)$. This is done in the same way as calculating $E_j(T_j; S_j)$ in cKnockoff, but replacing 
\[ f_j(\vct z_i) \;\coloneqq\; \frac{\one \set{j \in \cR^\kn(\vct z_i)} \,\mam\, \one \set{T_j(\vct z_i) \geq T_j(\vct y)}}{\left| \cR^\kn(\vct z_i) \cup \{j\}\right|} \,-\, b_j(\vct z_i) \]
by a smaller value
\[ f_j^*(\vct z_i) \;\coloneqq\; \frac{\one \set{j \in \cR^\kn(\vct z_i)} \,\mam\, \one \set{T_j(\vct z_i) \geq T_j(\vct y)}}{\left| \cR^*(\vct z_i) \cup \{j\}\right|} \,-\, b_j(\vct z_i) \]
for all Monte-Carlo samples $\vct z_1, \vct z_2, \cdots$.

Although computing $\cR^*$ is now affordable, it's still heavier compared to $\cR^\kn$. Hence we want to spend our computational budget where such a replacement is most likely to give us additional rejections. In particular, suppose
\[ \vct z_1,\, \vct z_2, \ldots, \vct z_k \in \tOmega_j \]
is the sequence of Monte-Carlo samples we use to compute $\tE_j^*$, where $\tE_j^*$ is the conservative approximation of $E_j^*$, defined in the same way as in \eqref{eq:Etilde}.
Our computational trick is to replace $f_j(\vct z_i)$ by $f_j^*(\vct z_i)$ only for those $i \in I$, where $I \subseteq [k]$ is a subset of the index set of all the Monte-Carlo samples. So if $|I|$ is small, we only need to compute $\cR^*$ for a few Monte-Carlo samples. Formally, let
\[ \hE_j(I) \coloneqq \frac{1}{k} \pth{\sum_{i \in I} f_j^*(\vct z_i) + \sum_{i \not\in I} f_j(\vct z_i)}. \]
Then $\hE_j(I)$ for $I = \emptyset$ is our Monte-Carlo calculation of $\tE_j$ and $\hE_j(I)$ for $I = [k]$ is the Monte-Carlo calculation of $\tE_j^*$. The trick is to use $\hE_j(I)$ for some $I \subseteq [k]$, instead of $\hE_j([k])$, as the computed approximation of $\tE_j^*$.

In principle, we want a small set $I$ that makes $\hE_j(\emptyset) - \hE_j(I)$ large. Then $\cR^*$ needs to be evaluated only on a small set of Monte-Carlo samples but still, it is very likely $\hE_j(I) \leq 0$ even if $\hE_j(\emptyset) > 0$. Specifically, the choice of set $I$ follows the steps below.
\begin{enumerate}
    \item Note if we want to compute $\hE_j(I)$, the value of $\hE_j(\emptyset)$ is an inevitable by-product. Hence it does not hurt to compute $\hE_j(\emptyset)$ before we decide $I$. Therefore, we set $I = \emptyset$ if $\hE_j(\emptyset) \leq 0$, since it already implies $\hE_j(I) \leq 0$.
    
    \item If the previous step does not set $I = \emptyset$, we propose to set $I = \bar I$ and trim it in the next step, where
    \[ \bar I \coloneqq \set{i:\; \frac{\one \set{j \in \cR^\kn(\vct z_i)} \,\mam\, \one \set{T_j(\vct z_i) \geq c}}{\left| \cR^\kn(\vct z_i) \cup \{j\}\right|} = 1}. \]
    In words, we include $i$ in $I$ only if $\cR^\kn(\vct z_i) = \emptyset$ and $T_j(\vct z_i) \geq c$ for $c = T_j(\vct y)$. The idea behinds is that, in this case, including $i$ in $I$ may introduce the largest decline in the value of $\hE_j(I)$.
    
    Since $\cR^\kn(\vct z_i) = \emptyset$ implies that Knockoff is having a hard time making any rejections, it's uncommon to see $j \in \cR^\kn$ in computing $\cR^*$, as mentioned in \App~\ref{app:R_star_h}.
    
    \item Following step 2, if the denominators $|\cR^*(\vct z_i) \cup \set{j}|$ take the same value for all $i$, then we can easily solve for the desired smallest value of them, denoted as $R^*$, so as to make $\hE_j(\bar I)$ below zero given $\hE_j(\emptyset) > 0$. We set $R^* = \infty$ if we cannot have $\hE_j(\bar I) \leq 0$.
    
    If $R^* > K^\text{cand}+1$, it's impossible that the computed $\cR^*$ can be large enough to make $H_j$ rejected, since the computed $|\cR^*(\vct z_i) \cup \set{j}| \leq K^\text{cand}+1$ by construction. If this is the case, we trim all elements in the proposed $I = \bar I$. That is, we set $I = \emptyset$.
    
    If $R^* \leq K^\text{cand}+1$, we trim $I$ in an online manner. That is, we compute $\cR^*(\vct z_i)$ one by one for $i \in I$. Once we see most $\cR^*(\vct z_i)$ computed so far have $|\cR^*(\vct z_i) \cup \set{j}| < R^*$, we trim the rest elements in $I$. 
\end{enumerate}
Details of the implementation are available at \texttt{github.com/yixiangLuo/cknockoff}.

Note this computational trick does not hurt the FDR control of cKnockoff$^*$ no matter how we choose set $I$. Since $\hE_j(I)$ is in between the Monte-Carlo calculation of $\tE_j$ and $\tE_j^*$, the resulting rejection set is in between $\cR^\ckn$ and the vanilla $\cR^{\ckn^*}$ without this trick. Theorem \ref{thm:sandwich_star} then ensures its FDR control.

\section{Deferred proofs}

\subsection{Formulating MCC as a linear model}
\label{app:mcc-details}

In this section, we formulate Example \ref{ex:mcc-block} as a linear model. Since all blocks in the experiment are mutually independent, it suffices to show this formulation under $K=1$, the classical MCC problem.

Let $z_{g,i}$ be the observed outcome for patient $i$ in treatment group $g$ and $\ep_{g,i}$ be the corresponding $\cN(0,\sigma^2)$ error. We use $g = 0$ to denote the control group and $g = 1, \ldots, G$ for treatment groups.
Denote $\vct z_i = (z_{0,i}, z_{1,i}, \ldots, z_{G,i})^\tran \in \RR^{G+1}$ and $\vct \ep_i = (\ep_{0,i}, \ep_{1,i}, \ldots, \ep_{G,i})^\tran \in \RR^{G+1}$. Define
\[
\check{\mat I}_G = 
\begin{pmatrix}
\vct 0^\tran \\
\mat I_G
\end{pmatrix} =
\begin{pmatrix}
0 & \ldots & 0 \\
1 &  &  \\
 & \ddots & \\
 &  & 1 \\
\end{pmatrix} 
\in \RR^{(G+1) \times G}
\]
as the matrix obtained by padding the identity matrix of dimension $G$ on top with a row vector of all zeros. Then define
\[ 
\vct z = 
\begin{pmatrix}
\vct z_1 \\
\vct z_2 \\
\vdots \\
\vct z_r \\
\end{pmatrix} \in \RR^{r(G+1)}, \quad
\check{\mat X} = 
\begin{pmatrix}
\check{\mat I}_G \\
\check{\mat I}_G \\
\vdots \\
\check{\mat I}_G \\
\end{pmatrix} \in \RR^{r(G+1) \times G},
\]
\[
\vct \beta = 
\begin{pmatrix}
\beta_1 \\
\vdots \\
\beta_G \\
\end{pmatrix} \in \RR^{G}, \quad
\check{\vct \ep} = 
\begin{pmatrix}
\vct \ep_1 \\
\vct \ep_2 \\
\vdots \\
\vct \ep_r \\
\end{pmatrix} \in \RR^{r(G+1)}.
\]
We have
\[ \vct z = \check{\mat X} \vct \beta + \vct 1 \mu + \check{\vct \ep}, \]
where $\vct 1 = (1, 1, \ldots, 1)^\tran \in \RR^{r(G+1)}$ is a vector of all ones. This is a linear model with an intercept term. So we can project everything onto the subspace orthogonal to $\vct 1$ to get rid of the intercept. Specifically, let $\mat V_{1,\text{res}} \in \RR^{r(G+1) \times (r(G+1)-1)}$ be an orthonormal basis for the subspace orthogonal to $\vct 1$. Define
\[ 
\vct y \coloneqq \mat V_{1,\text{res}}^\tran \vct z, \quad
\mat X \coloneqq\mat V_{1,\text{res}}^\tran \check{\mat X}, \quad
\vct \ep \coloneqq \mat V_{1,\text{res}}^\tran \check{\vct \ep}.
\]
We have
\[ \vct y = \mat X \vct \beta + \vct \ep \]
follows the Gaussian linear model \eqref{eq:linear-model} with $m = G$ and $n = r(G+1)-1$.

More generally, if $K > 1$, we do the same thing for each experiment block and concatenate the linear models derived from them. Specifically, the resulting $\mat X \in \RR^{K(r(G+1)-1) \times KG}$ for $K > 1$ is block-diagonal with $K$ blocks of dimension $(r(G+1)-1)$-by-$G$. Each block of $\mat X$ comes from the same procedure as above applying to each block of the experiments.

In our simulations of Section~\ref{sec:simu_main}, to maintain consistent dimensions $n=3000, m=1000$ as we used for the i.i.d. Gaussian case, we slightly modify the linear model construction by removing a few extra residual degrees of freedom, while maintaining the same correlation structure $\mat X^\tran \mat X$ for the explanatory variables. For example, in the MCC problem with $m=G=1000$, $K=1$, and $r=3$, the canonical construction above would give $n = r(G+1)-1 = 3002$. In our simulation, we remove the extra $2$ residual degrees of freedom, while maintaining the correlation structure of $\mat X^\tran \mat X$, thus ensuring that $\hat{\vct \beta}$ has the same positively equicorrelated covariance structure, and $[\mat X \mat \tX]^\tran \vct y$ has the same distribution when we make an equivalent choice of $\mat \tX$. However, the distribution of the residual variance estimators $\hat\sigma^2$ and $\tilde\sigma^2$ both change slightly because they are based on $2002$ and $1002$ residual degrees of freedom, respectively, instead of $2000$ and $1000$ residual degrees of freedom. A direct comparison confirms the results have no observable difference. For consistency, we also use the construction with $n=3000$ for the simulation in Figure~\ref{fig:mcc-block}.


\subsection{Null distribution conditional on $S_j$}
\label{app:cond_dist}

In the model-X setting, the null conditional distribution of $(\mat X, \vct y)$ given $S_j = (\mat X_{-j}, \vct y)$ is trivial. We focus on the fixed-X setting next.

Recall we define
\[ \mat \Pi_{-j} = \mat X_{-j}(\mat X_{-j}^{\tran}\mat X_{-j})^{-1}\mat X_{-j}^\tran, \quad \mat \Pi_{-j}^\perp = \mat I - \mat \Pi_{-j} \]
as the projection onto the column span of $\mat X_{-j}$ and its orthogonal projection, respectively. The bijection between $S_j = (\mat X_{-j}^\tran \vct y,\, \norm{\vct y}^2)$ and $(\mat \Pi_{-j} \vct y,\, \norm{\mat \Pi_{-j}^\perp \vct y}^2)$
\[ \mat \Pi_{-j} \vct y \;=\; (\mat X_{-j}(\mat X_{-j}^{\tran}\mat X_{-j})^{-1}) \cdot (\mat X_{-j}^\tran \vct y), \quad
\norm{\mat \Pi_{-j}^\perp \vct y}^2 \;=\; \norm{\vct y}^2 - \norm{\mat \Pi_{-j} \vct y}^2 \]
\[ \mat X_{-j}^\tran \vct y \;=\; \mat X_{-j}^{\tran} (\mat \Pi_{-j} \vct y), \quad
\norm{\vct y}^2 \;=\; \norm{\mat \Pi_{-j}^\perp \vct y}^2 + \norm{\mat \Pi_{-j} \vct y}^2\]
shows that conditioning on $S_j$ is equivalent to conditioning on $(\mat \Pi_{-j} \vct y,\, \norm{\mat \Pi_{-j}^\perp \vct y}^2)$. Then we have the following null conditional distribution.

\begin{prop} \label{prop:cond_dist}
    Assume the linear model \eqref{eq:linear-model} and that $H_j$ is true. Then
    \[
         \vct y \mid \mat \Pi_{-j} \vct y,\, \norm{\mat \Pi_{-j}^\perp \vct y}^2
         \;\eqd\; \mat \Pi_{-j} \vct y + \norm{\mat \Pi_{-j}^\perp \vct y} \cdot \mat V_{-j, \text{res}} \vct u,
    \]
    where $\mat V_{-j, \text{res}} \in \RR^{n \times (n-m+1)}$ is an orthonormal basis for the subspace orthogonal to the span of $\mat X_{-j}$, and $\vct u \sim \text{Unif}(\bS^{n-m})$ is a vector uniformly distributed on the unit sphere of dimension $n-m$.
\end{prop}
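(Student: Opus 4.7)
The plan is to use the standard orthogonal decomposition of a Gaussian linear model with respect to the column span of $\mat X_{-j}$, and exploit the rotational invariance of the isotropic Gaussian on the residual subspace. Under $H_j$, the coefficient $\beta_j = 0$, so the model reduces to $\vct y = \mat X_{-j} \vct \beta_{-j} + \vct \ep$ with $\vct \ep \sim \cN(\vct 0, \sigma^2 \mat I_n)$. I will decompose $\vct y = \mat \Pi_{-j} \vct y + \mat \Pi_{-j}^\perp \vct y$, where $\mat \Pi_{-j} \vct y = \mat X_{-j} \vct \beta_{-j} + \mat \Pi_{-j} \vct \ep$ is the in-model component and $\mat \Pi_{-j}^\perp \vct y = \mat \Pi_{-j}^\perp \vct \ep$ is the residual component.

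The first key step is to argue independence of $\mat \Pi_{-j} \vct y$ and $\mat \Pi_{-j}^\perp \vct y$. Since $\mat \Pi_{-j} \vct \ep$ and $\mat \Pi_{-j}^\perp \vct \ep$ are jointly Gaussian with $\mathrm{Cov}(\mat \Pi_{-j} \vct \ep, \mat \Pi_{-j}^\perp \vct \ep) = \sigma^2 \mat \Pi_{-j} \mat \Pi_{-j}^\perp = \vct 0$, they are independent; thus $\mat \Pi_{-j}^\perp \vct y = \mat \Pi_{-j}^\perp \vct \ep$ is independent of $\mat \Pi_{-j} \vct y$. Consequently, conditioning on $\mat \Pi_{-j} \vct y$ does not affect the distribution of $\mat \Pi_{-j}^\perp \vct y$, and it suffices to derive the conditional distribution of $\mat \Pi_{-j}^\perp \vct y$ given its squared norm only.

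Next, I will re-express $\mat \Pi_{-j}^\perp \vct y$ in the basis $\mat V_{-j,\text{res}}$: define $\vct w = \mat V_{-j,\text{res}}^\tran \vct y \in \RR^{n-m+1}$, so $\mat \Pi_{-j}^\perp \vct y = \mat V_{-j,\text{res}} \vct w$. Because $\mat V_{-j,\text{res}}^\tran \mat X_{-j} = \vct 0$ and $\mat V_{-j,\text{res}}$ has orthonormal columns, $\vct w = \mat V_{-j,\text{res}}^\tran \vct \ep \sim \cN(\vct 0, \sigma^2 \mat I_{n-m+1})$. The distribution of $\vct w$ is rotationally invariant, so $\vct w / \|\vct w\|$ is uniform on $\bS^{n-m}$ and independent of $\|\vct w\|$. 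Conditioning on $\|\vct w\|^2 = \|\mat \Pi_{-j}^\perp \vct y\|^2$ therefore yields $\vct w \eqd \|\mat \Pi_{-j}^\perp \vct y\| \cdot \vct u$ with $\vct u \sim \text{Unif}(\bS^{n-m})$.

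Assembling these pieces: given $(\mat \Pi_{-j} \vct y, \|\mat \Pi_{-j}^\perp \vct y\|^2)$, by independence the in-model component is fixed at its observed value and the residual component is $\mat V_{-j,\text{res}} \vct w = \|\mat \Pi_{-j}^\perp \vct y\| \cdot \mat V_{-j,\text{res}} \vct u$. Adding the two pieces back gives the claimed representation. The only subtlety is to check carefully that $\vct u$ is independent of the conditioning $\sigma$-algebra generated by $(\mat \Pi_{-j} \vct y, \|\mat \Pi_{-j}^\perp \vct y\|^2)$; this follows because $\vct u$ is a function of $\vct w$ only (independent of $\mat \Pi_{-j} \vct y$) and is independent of $\|\vct w\|$ by the standard decomposition of an isotropic Gaussian into radial and angular parts. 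No step looks hard --- the proof is essentially a chain of elementary observations about Gaussian vectors and spherical symmetry.
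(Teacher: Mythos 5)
Your proof is correct and follows essentially the same route as the paper's: both establish independence of $\mat \Pi_{-j} \vct y$ and $\mat \Pi_{-j}^\perp \vct y$ from the isotropic Gaussian structure, note that $\mat V_{-j,\text{res}}^\tran \vct y \sim \cN(\vct 0, \sigma^2 \mat I_{n-m+1})$ under $H_j$, and invoke spherical symmetry to get the uniform distribution on $\bS^{n-m}$ conditional on the norm. Your write-up is somewhat more explicit about the radial--angular decomposition and the independence of $\vct u$ from the conditioning $\sigma$-algebra, but there is no substantive difference.
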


\begin{proof}

Since $\vct y \sim \cN(\mat X \vct \beta, \sigma^2 \mat I_n)$ is isotropic Gaussian, $\mat \Pi_{-j} \vct y$ and $\mat \Pi_{-j}^\perp \vct y$ are independent. So it suffices to show
\[ \mat V_{-j, \text{res}}^\tran \cdot \mat \Pi_{-j}^\perp \vct y \mid \norm{\mat \Pi_{-j}^\perp \vct y}^2 \;\eqd\; \norm{\mat \Pi_{-j}^\perp \vct y} \cdot \text{Unif}(\bS^{n-m}). \]
This is true since
\[ \norm{\mat V_{-j, \text{res}}^\tran \cdot \mat \Pi_{-j}^\perp \vct y} = \norm{\mat \Pi_{-j}^\perp \vct y} \]
and
\[ \mat V_{-j, \text{res}}^\tran \cdot \mat \Pi_{-j}^\perp \vct y \;\sim\; \cN(\vct 0,\; \sigma^2 \mat I_{n-m+1}). \]
The second claim is by
\[ 
    \mat \Pi_{-j}^\perp \vct y
    \;\sim\; \cN(\mat \Pi_{-j}^\perp X \beta,\; \sigma^2 \mat \Pi_{-j}^\perp \mat I_n)
    \;\overset{H_j}{\sim}\; \cN(\vct 0,\; \sigma^2 \mat \Pi_{-j}^\perp)
\]
and noticing
\[ \mat \Pi_{-j}^\perp = \mat V_{-j, \text{res}} \mat V_{-j, \text{res}}^\tran. \]

\end{proof}

\subsection{Proof of Theorem \ref{thm:strict_better}}
\label{app:strict_better}

The proof of Theorem \ref{thm:strict_better} first needs a technical lemma.
\begin{lemma} \label{lemma:strict_better}
    Let the budget be defined as in \eqref{eq:bj_def}. Suppose $\PP_{H_j}(b_j > \DP_j(\cR^\kn) \mid S_j) > 0$ and $\PP_{H_j}(j \not\in \cR^\kn \mid S_j) > 0$, then
    \[ \EE_{H_j} \br{ \one \set{j \in \cR^\kn} \;\Big|\; S_j } \;<\; \EE_{H_j} \br{ \one \set{j \in \cR^\kn} \mam \one \set{T_j \geq \hc_j} \;\Big|\; S_j }. \]
\end{lemma}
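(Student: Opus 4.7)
The plan is to reduce the desired strict inequality to showing $\PP_{H_j}(T_j \geq \hc_j,\, j \notin \cR^\kn \mid S_j) > 0$, and then extract this positivity from the calibration identity defining $\hc_j$.

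The first step is purely algebraic: since $\one\{A\} \vee \one\{B\} = \one\{A\} + \one\{B\}\,\one\{A^\setcomp\}$, the difference of the two sides of the target inequality equals $\PP_{H_j}(T_j \geq \hc_j,\, j \notin \cR^\kn \mid S_j)$, so it suffices to prove positivity of this probability. Applying the same identity inside the integrand defining $E_j$ yields the useful decomposition
\begin{equation*}
E_j(c; S_j) \;=\; \EE_{H_j}[\DP_j(\cR^\kn) - b_j \mid S_j] \,+\, \EE_{H_j}\br{\frac{\one\{T_j \geq c,\, j \notin \cR^\kn\}}{|\cR^\kn \cup \{j\}|} \,\Big|\, S_j}.
\end{equation*}
Combining condition~(i) of \eqref{eq:budget-conditions} with the hypothesis $\PP_{H_j}(b_j > \DP_j(\cR^\kn) \mid S_j) > 0$ makes the first term above strictly negative; by monotonicity of $E_j$ in $c$ this also guarantees $\hc_j < \infty$.

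I would then split into two cases according to whether $\hc_j = 0$. If $\hc_j = 0$, the event $\{T_j \geq \hc_j\}$ holds almost surely since $T_j \geq 0$, so the target probability equals $\PP_{H_j}(j \notin \cR^\kn \mid S_j)$, which is positive by the second hypothesis. If $\hc_j > 0$, the continuity of $c \mapsto E_j(c; S_j)$ together with the minimality in the definition of $\hc_j$ forces $E_j(\hc_j; S_j) = 0$, so substituting $c = \hc_j$ into the decomposition above yields
\begin{equation*}
\EE_{H_j}\br{\frac{\one\{T_j \geq \hc_j,\, j \notin \cR^\kn\}}{|\cR^\kn \cup \{j\}|} \,\Big|\, S_j} \;=\; \EE_{H_j}[\,b_j - \DP_j(\cR^\kn) \mid S_j] \;>\; 0,
\end{equation*}
and since $|\cR^\kn \cup \{j\}| \geq 1$, the target probability dominates this expectation and is therefore strictly positive.

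The main delicacy I expect is justifying continuity of $c \mapsto E_j(c; S_j)$ at $\hc_j$, which reduces to showing that the conditional distribution of $T_j$ given $S_j$ under $H_j$ has no atoms. This follows from the uniform-on-sphere conditional distribution of $\vct y$ given $S_j$ under $H_j$ (Proposition \ref{prop:cond_dist}) combined with continuity of the map $\vct y \mapsto T_j(\vct y)$ in \eqref{eq:Tj}. Beyond this point, the argument is a direct manipulation of the calibration identity together with the two stated hypotheses.
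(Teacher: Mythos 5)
Your proof is correct and follows essentially the same route as the paper's: both arguments rest on the strict negativity of $\EE_{H_j}[\DP_j(\cR^\kn)-b_j\mid S_j]$ (from condition (i) plus the first hypothesis), the continuity and monotonicity of $E_j(\cdot\,;S_j)$, and a case split on whether $\hc_j=0$ (where the second hypothesis enters) or $\hc_j>0$ (where $E_j(\hc_j;S_j)=0$). The only cosmetic difference is the final step: you pass from the $|\cR^\kn\cup\{j\}|$-weighted expectation to the unweighted probability via the direct bound $\one\{\cdot\}\geq \one\{\cdot\}/|\cR^\kn\cup\{j\}|$, whereas the paper uses a short contradiction argument to the same effect.
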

\begin{proof}
Recall
\[ E_j(c\,; S_j) \;\coloneqq\; \EE_{H_j} \br{\frac{\one \set{j \in \cR^\kn} \mam \one \set{T_j \geq c}}{\left| \cR^\kn \cup \{j\}\right|} \,-\, b_j \;\Big|\; S_j}. \]
Since $b_j \geq \DP_j(\cR^\kn)$ almost surely and $\PP_{H_j}(b_j > \DP_j(\cR^\kn) \mid S_j) > 0$, we have
\[ E_j(\infty\,; S_j) < 0. \]
Moreover, $\PP_{H_j}(j \not\in \cR^\kn \mid S_j) > 0$ gives
\[ E_j(\infty\,; S_j) < E_j(0\,; S_j). \]
Recall $E_j(c\,; S_j)$ is a continuous, non-increasing function of $c$ and
\[ \hc_j = \min_{c \geq 0} \set{E_j(c\,; S_j) \leq 0}. \]
We have
\[ E_j(\infty\,; S_j) < E_j(\hc_j\,; S_j). \]
That is
\[
\EE_{H_j} \br{ \frac{\one \set{j \in \cR^\kn}}{\left| \cR^\kn \cup \{j\}\right|} \;\Big|\; S_j }
\;<\;
\EE_{H_j} \br{ \frac{\one \set{j \in \cR^\kn} \mam \one \set{T_j \geq \hc_j}}{\left| \cR^\kn \cup \{j\}\right|} \;\Big|\; S_j }.
\]

Now we prove our claim by contradiction. Note
\[ \one \set{j \in \cR^\kn} \leq \one \set{j \in \cR^\kn} \mam \one \set{T_j \geq \hc_j}, \quad \text{almost surely.} \]
So the opposite of our proposition implies
\[ \PP_{H_j} \br{ \one \set{j \in \cR^\kn} = \one \set{j \in \cR^\kn} \mam \one \set{T_j \geq \hc_j} \;\Big|\; S_j } = 1, \]
which further yields
\[
\EE_{H_j} \br{ \frac{\one \set{j \in \cR^\kn}}{\left| \cR^\kn \cup \{j\}\right|} \;\Big|\; S_j }
\;=\;
\EE_{H_j} \br{ \frac{\one \set{j \in \cR^\kn} \mam \one \set{T_j \geq \hc_j}}{\left| \cR^\kn \cup \{j\}\right|} \;\Big|\; S_j }
\]
since the denominators of both sides are the same. This contradicts what we have derived.
\end{proof}

Then we can prove Theorem \ref{thm:strict_better}.
\begin{proof}[Proof of Theorem \ref{thm:strict_better}]
For a general testing procedure that produces rejection set $\cR(\vct y)$, define its {\em $H_j$-rejection region} as
\[ \cG_j = \set{\vct y: \; j \in \cR(\vct y)}. \]
In words, $\cG_j$ is the set of observed data $\vct y$ such that $H_j$ is rejected. Note $G_j$ is determined by the testing procedure $\cR$ itself and is not affected by the unknown value of $\vct \beta$ or $\sigma$.

By construction, the cKnockoff rejection region is always no smaller than the knockoffs, i.e. $\cG^\ckn_j \supseteq \cG^\kn_j$ for all $j$. So 
\[ 
    \TPR(\cR^\ckn) 
    \;=\; \frac{1}{m_1} \sum_{j \in \cH_0^\setcomp} \PP(j \in \cG^\ckn_j)
    \;\geq\; \frac{1}{m_1} \sum_{j \in \cH_0^\setcomp} \PP(j \in \cG^\kn_j)
    \;=\; \TPR(\cR^\kn). 
\]
To show that inequality is strict, it suffices to prove
\[ \cG^\kn_j \subsetneq \cG^\ckn_j \]
for some $j \in \cH_0^\setcomp$, because the support of the density of $\vct y$ is the whole $\RR^n$ space no matter what $\beta$ is.
For the same reason, this is equivalent to show
\[ \PP_0(j \in \cR^\kn) = \PP_0(\cG^\kn_j) \;<\; \PP_0(\cG^\ckn_j) = \PP_0(j \in \cR^\ckn), \]
where $\PP_0$ is the probability measure under the global null model $\cH_0 = [m]$.

Recall that under the global null and conditional on $|\vct W|$, $\sgn(W_i)$ are independent Bernoulli for all $i$. Hence $\PP_0(A \mid |\vct W|) > 0$, where
\[ A = \set{W_j > 0 \text{ and } |\set{i:\; W_i > 0}| = (\ceil{1/\alpha}-1) \mim m }. \]
Note $A$ implies $\cR^\kn = \emptyset$ (hence $\DP_j(\cR^\kn) = 0$) and $b_j > 0$. We have 
\[ \PP_0 \pth{\set{b_j > \DP_j(\cR^\kn)} \cap \set{j \not\in\cR^\kn}} > 0 \]
by the tower property.

As a consequence, there exist a set $\cC \subseteq \RR^{m-1} \times \RR$ such that $\PP_0(S_j(\vct y) \in \cC) > 0$ and 
\[ \PP_0 \pth{\set{b_j > \DP_j(\cR^\kn)} \cap \set{j \not\in\cR^\kn} \mid S_j} > 0 \]
for all $S_j \in \cC$. Then by Lemma~\ref{lemma:strict_better}, for any $S_j \in \cC$,
\[ \EE_0 \br{ \one \set{j \in \cR^\kn} \;\Big|\; S_j } \;<\; \EE_0 \br{ \one \set{j \in \cR^\kn} \mam \one \set{T_j \geq \hc_j} \;\Big|\; S_j }, \]
where $\EE_0$ is taking an expectation over the global null distribution.
So
\[ \EE_0 \br{ \one \set{j \in \cR^\kn} \;\Big|\; S_j \in \cC } \;<\; \EE_0 \br{ \one \set{j \in \cR^\kn} \mam \one \set{T_j \geq \hc_j} \;\Big|\; S_j \in \cC }. \]
Note
\[ \one \set{j \in \cR^\kn} \leq \one \set{j \in \cR^\kn} \mam \one \set{T_j \geq \hc_j} \]
and $\PP_0(S_j \in \cC) > 0$. We have
\[ \EE_0 \br{ \one \set{j \in \cR^\kn} } \;<\; \EE_0 \br{ \one \set{j \in \cR^\kn} \mam \one \set{T_j \geq \hc_j} }. \]
That is
\[ \PP_0(j \in \cR^\kn) < \PP_0(j \in \cR^\ckn). \]
\end{proof}

\section{Numerical simulations}
\label{app:simu}

The simulation settings in this section are the same as in Section \ref{sec:simu_main} if not specified.

\subsection{Extensions of simulations in Section \ref{sec:simu_main}}

\subsubsection{Additional design matrix settings}

Consider the following two design matrix settings.
\begin{enumerate}
    \item \textbf{OLS $\hat \beta_j$ positively auto-regression (Coef-AR)}: Set $\mat X$ such that the OLS fitted $\hat \beta_j$ is AR(1) with $\text{cov}(\hat \beta_j, \hat \beta_{j+1}) = 0.5$.
    \item \textbf{$\vct X_j$ positively auto-regression (X-AR)}: Set $\mat X$ such that $\vct X_j$ ($j = 1, \ldots, m)$ is a AR(1) random process with $\text{cov}(\vct X_j, \vct X_{j+1}) = 0.5$.
\end{enumerate}

Figure \ref{fig:main_expr_10_app} shows the results. They are similar to the ones from the MCC-Block problem.


\begin{figure}[!tb]
    \centering
    \includegraphics[width = 0.8\linewidth]{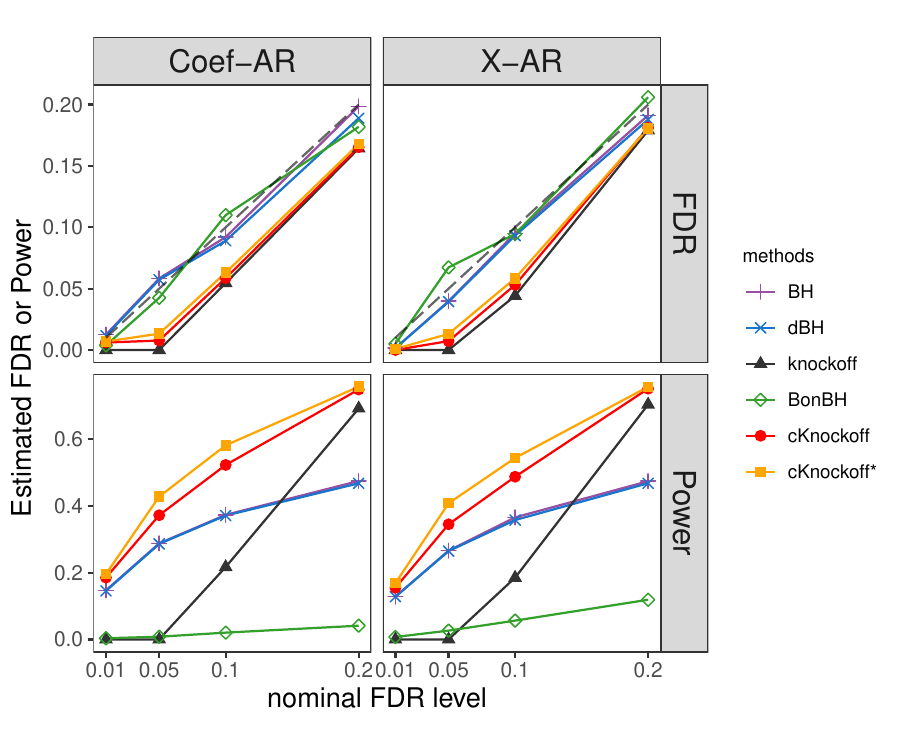}
    \caption{Estimated FDR and TPR for additional design matrix settings.}
    \label{fig:main_expr_10_app}
\end{figure}

\subsubsection{Cases where $m_1 \gg 1/\alpha$}
\label{app:less_sparse}

We show the performance of the procedures in the case where $\vct \beta$ is not too sparse. Figure \ref{fig:main_expr_10_app} shows the results with $m_1 = 30$ non-null hypotheses, instead of $m_1 = 10$ in Section~\ref{sec:simu_main}. The general behavior of the procedures remains the same, but
\begin{enumerate}
    \item the power improvement of cKnockoff over knockoffs is smaller;
    \item the knockoff-like methods are less powerful compared to the BH-like methods in the MCC problem.
\end{enumerate}
\begin{figure}[!tb]
    \centering
    \includegraphics[width = 0.95\linewidth]{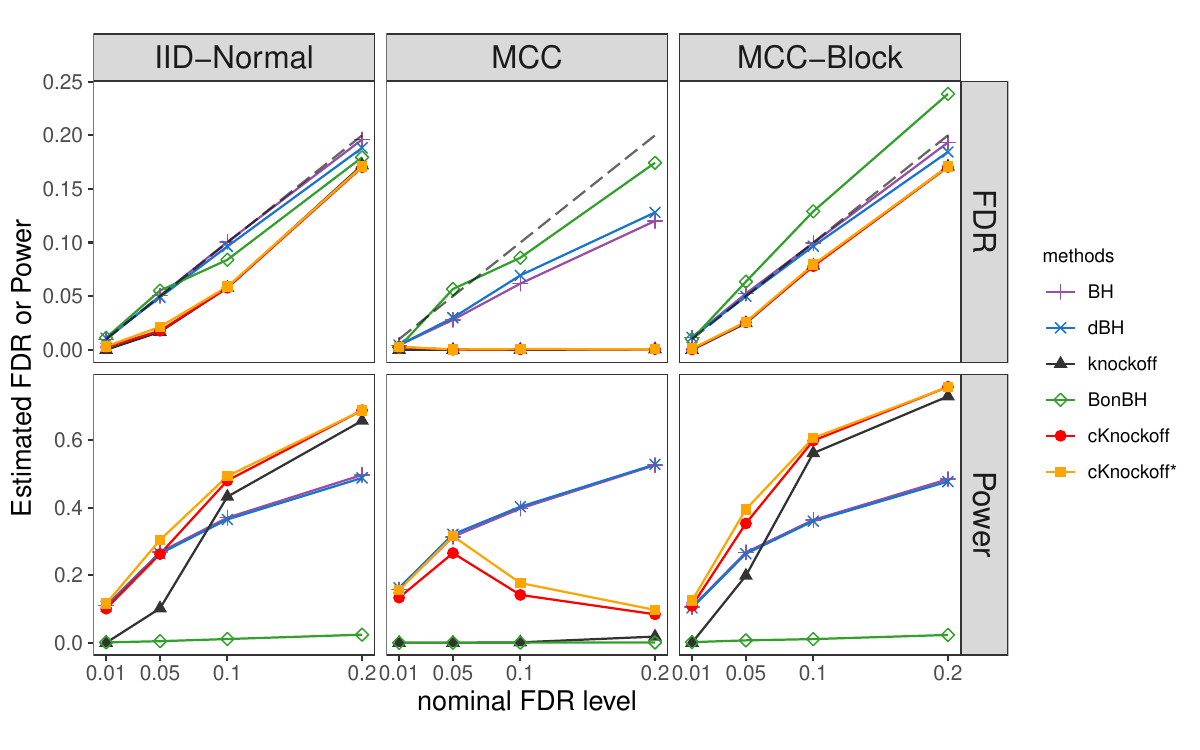}
    \caption{Estimated FDR and TPR with $30$ non-null hypotheses.}
    \label{fig:main_expr_30}
\end{figure}

\subsubsection{Comparison with multiple knockoffs}
\label{app:multi_kn}

Multiple knockoffs has more implementation choices than the vanilla knockoffs, e.g. the number of knockoff variables to employ and the threshold for deciding whether the original variable ``wins'' the competition with its knockoffs. There is no known best choice for them, but we implement multiple knockoffs as follows:
\begin{enumerate}
    \item We employ $5$-multiple knockoffs. That is, we generate $5$ knockoff matrices $\mat \tX_{(i)}$ for $i = 1, \ldots, 5$ such that $\mat X^\tran \mat X = \mat \tX_{(i)}^\tran \mat \tX_{(i)}$ for all $i$ and $\mat X^\tran \mat X - \mat \tX^\tran \mat \tX_{(j)} = \mat X^\tran \mat X - \mat \tX_{(i)}^\tran \mat \tX_{(j)}$ is certain diagonal matrix for all $i \neq j$.
    \item For the feature statistics, we run lasso on the augmented model
    \[ \vct y = [\mat X, \mat \tX_{(1)}, \ldots, \mat \tX_{(5)}] \vct \beta + \vct \ep \]
    with regularity parameter $\lambda$ determined in the same way as in LCD-T.
    And let
    \[ 
    \sgn(W_j) \coloneqq 
    \begin{cases}
    1 & \text{ if } |\hat \beta_{j,(0)}| > |\hat \beta_{j,(i)}| \text{ for all } i \in [5] \\
    -1 & \text{ otherwise }
    \end{cases}, \quad\quad
    |W_j| \coloneqq \max_{i = 0,\ldots, 5} \set{|\hat \beta_{j,(i)}|},
    \]
    where $\hat \beta_{j,(i)}$ is the fitted lasso coefficient of the $i$th knockoffs of the variable $\vct X_j$ and $\hat \beta_{j,(0)}$ is the fitted lasso coefficient of the original variable $\vct X_j$.
\end{enumerate}

To make multiple knockoffs applicable, we set $n = 7m$ with $m = 300$. The number of hypotheses is smaller than our usual setting, so as to save memory space in our laptop. The number of non-null hypotheses is set to be $m_1 = 30$ so as to show the performance of multiple knockoffs in the region where $m \gg 1/\alpha$.

Figure \ref{fig:multiple_kn} shows the results. When $m_1 < 1/\alpha$, multiple knockoffs relieves the threshold phenomenon but performs worse than cKnockoff/cKnockoff$^*$; when $m_1 \gg 1/\alpha$, multiple knockoffs is even worse than the vanilla knockoffs. Moreover, multiple knockoffs does not help the whiteout phenomenon.
\begin{figure}[!tb]
    \centering
    \includegraphics[width = 0.95\linewidth]{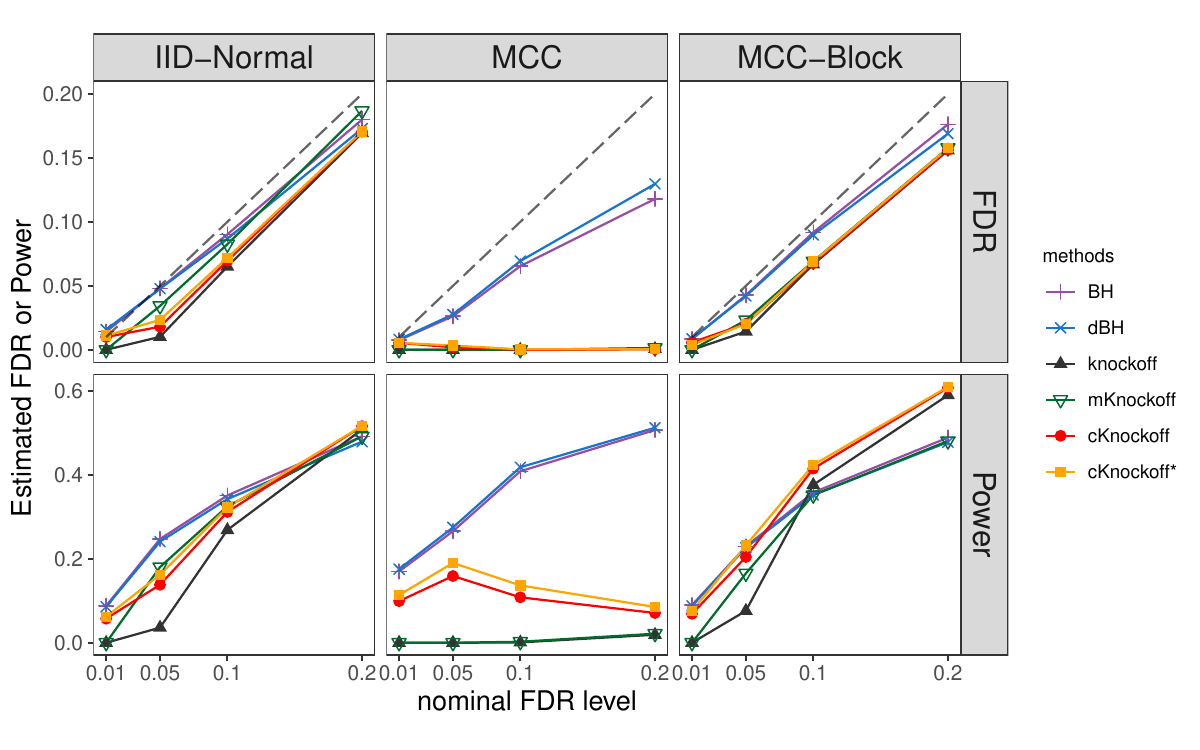}
    \caption{Estimated FDR and TPR with multiple knockoffs included.}
    \label{fig:multiple_kn}
\end{figure}

In addition, we see the knockoff-like methods are even less powerful, compared to the BH-like methods, in Figure~\ref{fig:multiple_kn} than in Figure~\ref{fig:main_expr_30}. This is because $\pi_1 = m_1 / m = 0.1$ in this setting. That is, $\vct \beta$ is less sparse.

\subsection{Variations of cKnockoff}



Figure \ref{fig:lambdasmax_stat} shows the estimated FDR and power when we use C-LSM rather than LCD-T as the feature statistics. The behavior of our methods is almost the same as in Figure \ref{fig:main_expr_10}.

\begin{figure}[!tb]
    \centering
    \includegraphics[width = 0.95\linewidth]{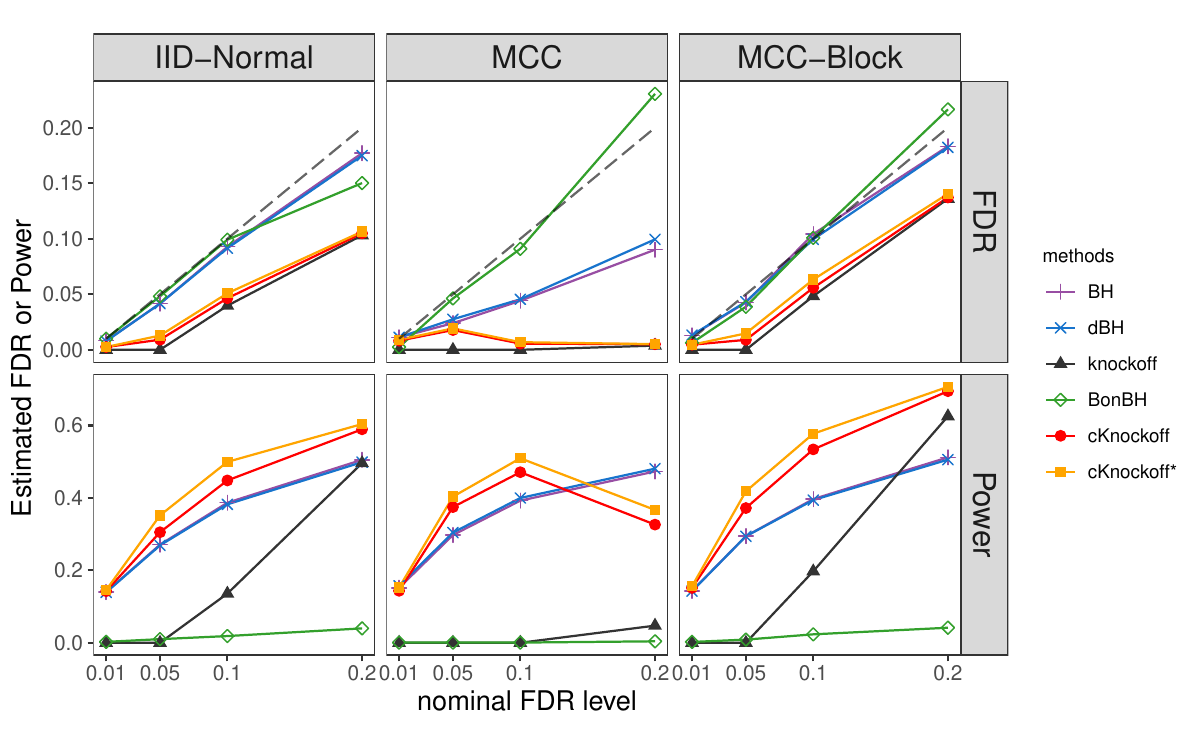}
    \caption{Estimated FDR and TPR under settings using C-LSM as the feature statistics.}
    \label{fig:lambdasmax_stat}
\end{figure}




\subsection{Distributions of FDP and TPP}

 Figure~\ref{fig:main_expr_10-FDP-ECDF} and \ref{fig:main_expr_10-TPP-ECDF} show the empirical cumulative distribution functions (ECDF) of the FDP and TPP, respectively, of selected procedures, under the same setting as the experiments in Section~\ref{sec:simu_main}.

For FDP, we see that knockoffs, cKnockoff, and cKnockoff$^*$ all work well, in the sense that the FDP is smaller than $\alpha$ with high probability in most cases. By contrast, BH and dBH both have stochastically larger FDP in all these cases even when their power is lower. In particular, in the MCC problem, the FDP distribution of BH/dBH puts large mass at both $0$ and $1$, rendering the rejection sets less reliable.

\begin{figure}[!tb]
    \centering
    \includegraphics[width = 0.99\linewidth]{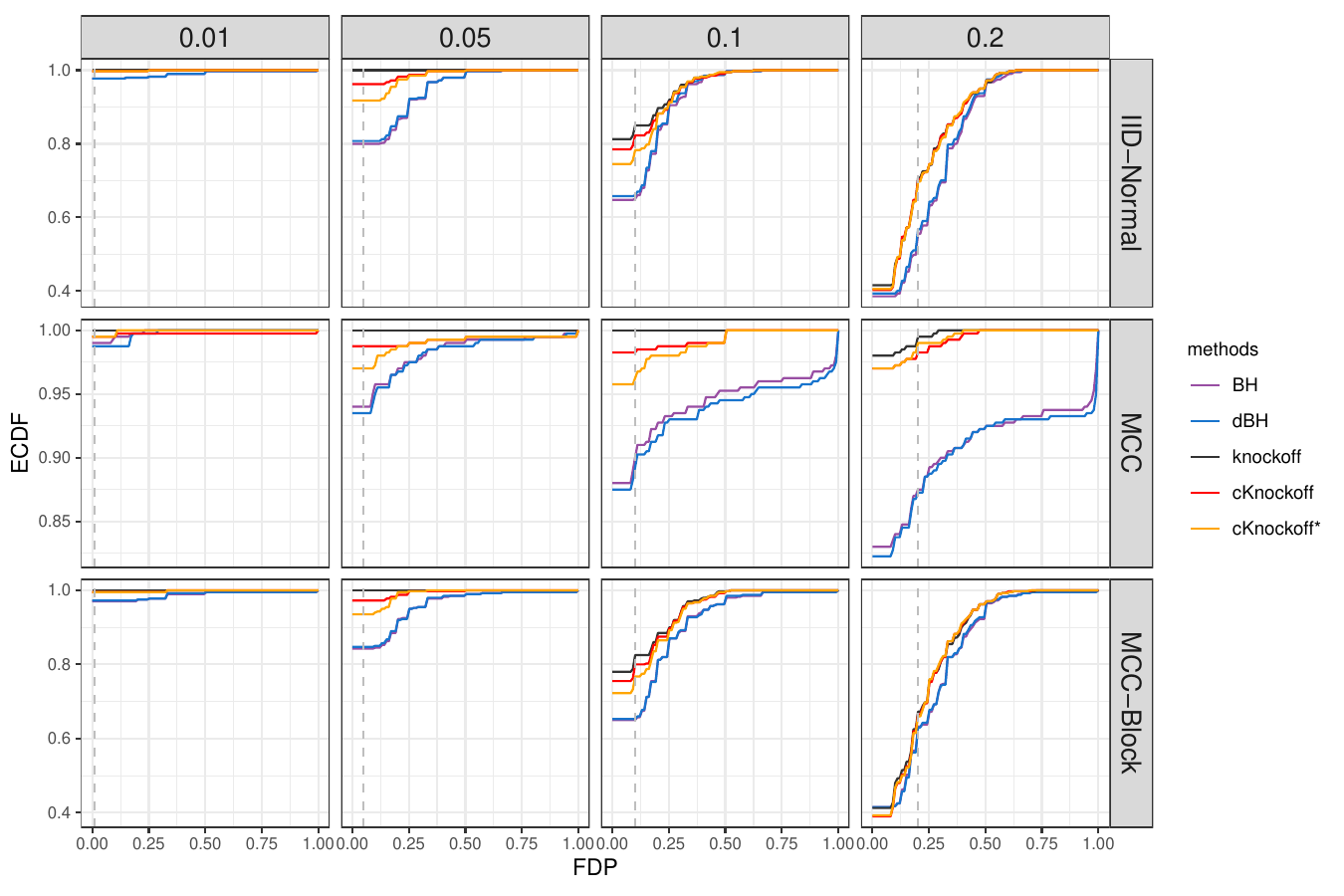}
    \caption{Empirical CDF of FDP from different procedures. Different columns represent different nominal FDR level $\alpha$, whose value is also indicated by a vertical dashed line. }
    \label{fig:main_expr_10-FDP-ECDF}
\end{figure}

For TPP, all procedures perform similarly and the TPP is not concentrated at the TPR. Moreover, when knockoffs make some rejections, cKnockoff/cKnockoff$^*$ makes a bit more; and when knockoffs fails to reject anything (a flat TPP CDF towards $\text{TPP}=0$), the CDF of cKnockoff/cKnockoff$^*$ keeps its trend. This indicates that our methods fully unleash the potential power of the knockoffs when it suffers. 

\begin{figure}[!tb]
    \centering
    \includegraphics[width = 0.99\linewidth]{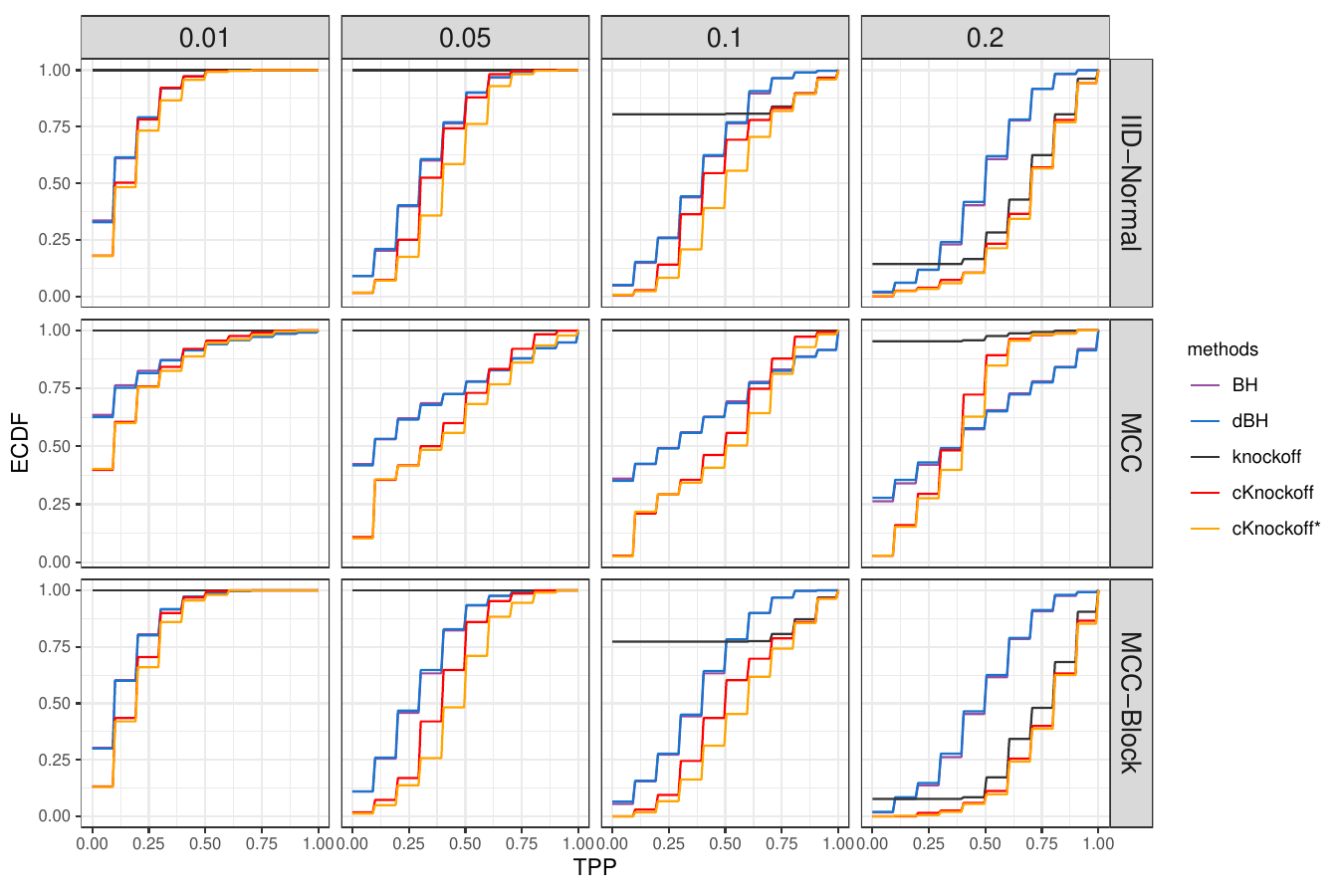}
    \caption{Empirical CDF of TPP from different procedures. Different columns represent different nominal FDR level $\alpha$. }
    \label{fig:main_expr_10-TPP-ECDF}
\end{figure}





\section{HIV drug resistance data}\label{app:hiv}

Continuing the discussion in Section \ref{sec:hiv}, Figure \ref{fig:HIV-0.2} presents the results where we set $\alpha = 0.2$.
\footnote{Readers may have noticed that the rejections made by knockoffs shown here are not exactly the same as the ones shown in \citet{fithian2022conditional} or \citet{barber15}. This is because that knockoff is implemented as a random method in their R package. In particular, they randomly swap $\vct X_j$ and $\tX_j$ to protect the FDR control from the bias that Lasso, implemented in the glmnet R package, prefers to select a feature with a smaller index. To avoid the interference of such random noise, the results we show are averaged over $20$ times applying each procedure.}
The knockoffs suffers in experiments like the drug AZT and D4T. Our cKnockoff and cKnockoff$^*$ procedures save knockoffs in these cases and make strictly more discoveries than knockoffs in almost all cases.

\begin{figure}[!tb]
    \centering
    \includegraphics[width = 0.75\linewidth]{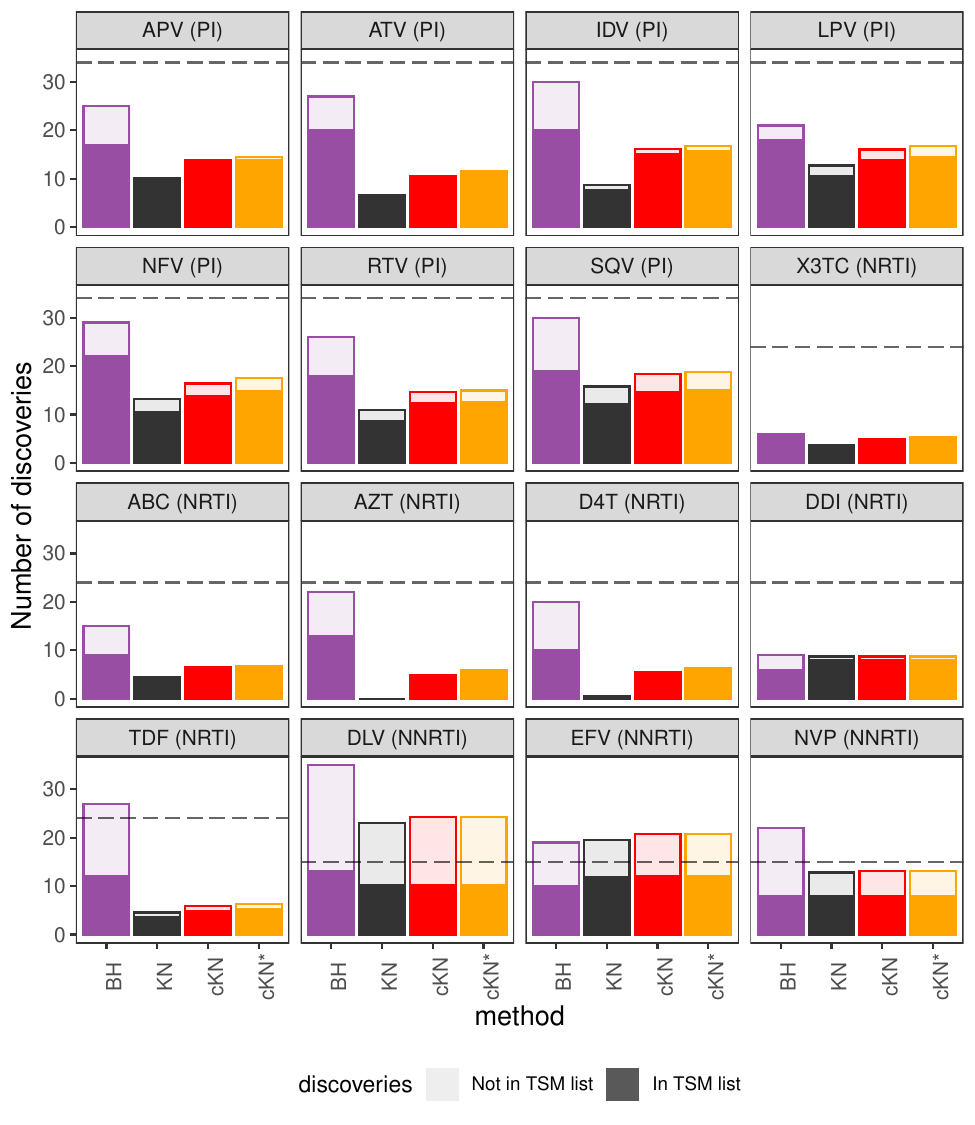}
    \caption{Results on the HIV drug resistance data with $\alpha = 0.2$. The darker segments represent the number of discoveries that were replicated in the TSM panel, while the lighter segments represent the number that was not. The horizontal dashed line indicates the total number of mutations appearing in the TSM list. Results are shown for the BH, fixed-$X$ knockoffs, cKnockoff and cKnockoff$^*$.}
    \label{fig:HIV-0.2}
\end{figure}

\end{document}